\documentclass[a4paper,anonymous,11pt]{article}
\usepackage[margin=1in]{geometry}
\usepackage{titling}
\usepackage{times}

\setlength{\droptitle}{-4em}

\usepackage[utf8]{inputenc}

\long\def\commentbegin #1\commentend{}

\usepackage{balance}
\usepackage{algorithm}
\usepackage[noend]{algpseudocode}
\usepackage{ifthen}
\usepackage{comment}
\usepackage{amsthm,amsmath}
\usepackage{amssymb}
\usepackage{graphicx}
\usepackage{hyperref}

\usepackage{pstricks,pst-node,pst-tree}
\usepackage{soul}

\usepackage{color}
\usepackage{mathrsfs}
\usepackage[normalem]{ulem}
\usepackage{paralist}

\renewcommand{\epsilon}{\varepsilon}

\newcommand{\Prob}[1]{\hbox{\rm I\kern-2pt P}\left[#1\right]}
\def\polylog{\operatorname{polylog}}
\def\poly{\operatorname{poly}}

\DeclareMathAlphabet{\mathsc}{OT1}{cmr}{m}{sc}

\newtheorem{theorem}{Theorem}
\newtheorem{lemma}{Lemma}

\newtheorem{observation}{Observation}

\renewcommand{\geq}{\geqslant}
\renewcommand{\ge}{\geqslant}
\renewcommand{\leq}{\leqslant}
\renewcommand{\le}{\leqslant}

\newcommand{\I}{\mathcal{I}}

\newcommand{\disj}{{\mathsf{Disj}}}

\renewcommand{\P}{{\mathcal{P}}}

\newboolean{short}
\setboolean{short}{false}

\algblockdefx[ExecBl]{BlockOn}{BlockOff}  [1]{#1}
  
% Tells algorithmicx not to print an empty line if `noend' is set 
\makeatletter
\ifthenelse{\equal{\ALG@noend}{t}}%
  {\algtext*{BlockOff}}
  {}%
\makeatother

\newcommand{\shortOnly}[1]{\ifthenelse{\boolean{short}}{#1}{}}
\newcommand{\onlyShort}[1]{\ifthenelse{\boolean{short}}{}{#1}}
\newcommand{\longOnly}[1]{\ifthenelse{\boolean{short}}{}{#1}}
\newcommand{\onlyLong}[1]{\ifthenelse{\boolean{short}}{}{#1}}

%If you want to remove comments from the text, just comment out the ``\def\ShowComment{True}" and recompile.
\def\ShowComment{True}
\ifdefined\ShowComment
\def\billy#1{{\color{green}\underline{\textsf{Billy:}}} {\color{blue} \emph{#1}}}
\def\gopal#1{{\color{red}\underline{\textsf{Gopal:}}} {\color{blue} \emph{#1}}}
\def\john#1{{\color{orange}\underline{\textsf{John:}}} {\color{blue} \emph{#1}}}
\else
\def\billy#1{}
\def\gopal#1{}
\def\john#1{}
\fi

%space savers
%\usepackage[small,compact]{titlesec}
%\usepackage[small,it]{caption}

\def\inline#1:{\par\vskip 7pt\noindent{\bf #1:}\hskip 10pt}

%Functions and procedures
\def\TS{Transmission-Schedule}

\def\RANDMST{\textsc{Randomized-MST}}
\def\UPCASTMIN{\textsc{Upcast-Min}}
\def\DOWNCAST{\textsc{Fragment-Broadcast}}
\def\TRANSMITADJACENT{\textsc{Transmit-Adjacent}}

\def\TRANSMITNEIGHBOR{\textsc{Transmit-Neighbor}}
\def\MSTTRADEOFF{\textsc{Trade-Off-MST}}

\def\DETMST{\textsc{Deterministic-MST}}
\def\AWAKECOLORING{\textsc{Fast-Awake-Coloring}}
\def\NEIGHBORAWARE{\textsc{Neighbor-Awareness}}
\def\REORIENTFRAG{\textsc{Merging-Fragments}}

\def\BUILDMSC{\textsc{Build-MSC}}
\def\SAF{\textsc{SAF}}
\def\SIMPLELE{\textsc{SIMPLE-LE}}
\def\SIMPLEBFS{\textsc{SIMPLE-BFS}}
\def\LEBFS{\textsc{LE-BFS}}

%States

\def\DOWNRECEIVE{\textsf{Down-Receive}}
\def\DOWNSEND{\textsf{Down-Send}}
\def\SIDESENDRECEIVE{\textsf{Side-Send-Receive}}
\def\UPRECEIVE{\textsf{Up-Receive}}
\def\UPSEND{\textsf{Up-Send}}

\def\IMPPHASE{\mathcal{P}}
\def\DETIMPPHASE{\mathcal{P}}
\def\NUMFRAGSDET{c}

%Variables

\def\LOCAL{\ensuremath{\mathcal{LOCAL}}}

\def\sd{\textsf{SD}}
\def\dsd{\textsf{DSD}}
\def\css{\textsf{CSS}}
\def\mst{\textsf{MST}}
\def\INCOMINGMOE{\textsf{INCOMING-MOE}}
\def\NEWLEVELNUM{\textsf{NEW-LEVEL-NUM}}
\def\NEWFRAGID{\textsf{NEW-FRAGMENT-ID}}
\def\NBRINFO{\textsf{NBR-INFO}}

%\makeatletter
%\DeclareRobustCommand*\cal{\@fontswitch\relax\mathcal}
%\makeatother

\renewcommand{\geq}{\geqslant}
\renewcommand{\ge}{\geqslant}
\renewcommand{\leq}{\leqslant}
\renewcommand{\le}{\leqslant}

\long\def\hide #1\hideend{}

\newcommand{\squishlist}{
 \begin{list}{$\bullet$}
  { \setlength{\itemsep}{0pt}
     \setlength{\parsep}{3pt}
     \setlength{\topsep}{3pt}
     \setlength{\partopsep}{0pt}
     \setlength{\leftmargin}{1.5em}
     \setlength{\labelwidth}{1em}
     \setlength{\labelsep}{0.5em} } }

\newcommand{\squishlisttwo}{
 \begin{list}{$\bullet$}
  { \setlength{\itemsep}{0pt}
     \setlength{\parsep}{0pt}
    \setlength{\topsep}{0pt}
    \setlength{\partopsep}{0pt}
    \setlength{\leftmargin}{2em}
    \setlength{\labelwidth}{1.5em}
    \setlength{\labelsep}{0.5em} } }

\newcommand{\squishend}{
  \end{list}  }

%% end space savers

\def\CONGEST{CONGEST}
\def\LOCAL{LOCAL}

\title{Awake Complexity of Distributed Minimum Spanning Tree\thanks{Part of the work was done while the William K. Moses Jr. was a Post Doctoral Fellow at the University of Houston.\\
J. Augustine was supported, in part, by DST/SERB MATRICS Grant MTR/2018/001198 and the Centre of Excellence in Cryptography Cybersecurity and Distributed Trust under the IIT Madras Institute of Eminence Scheme and by the VAJRA visiting faculty program of the Government of India.\\
W. K. Moses Jr. was supported, in part, by NSF grants CCF-1540512, IIS-1633720, and CCF-1717075 and BSF grant 2016419.\\
G. Pandurangan was supported, in part, by NSF grants CCF-1540512, IIS-1633720, and CCF-1717075 and BSF grant 2016419 and by the VAJRA visiting faculty program of the Government of India.
}
}

\author{John Augustine$^1$ \and William K. Moses Jr.$^2$ \and Gopal Pandurangan$^3$}
\date{%
    $^1$Indian Institute of Technology Madras, Chennai, India\\%
    $^2$Durham University, Durham, UK\\%
    $^3$University of Houston, Houston, TX, USA\\%
}

% John Augustine ORCID: https://orcid.org/0000-0003-0948-3961
% William K. Moses Jr. ORCID: https://orcid.org/0000-0002-4533-7593
% Gopal Pandurangan ORCID: https://orcid.org/0000-0001-5833-6592

\begin{document}
\maketitle
\thispagestyle{empty}

\begin{abstract}
    The \emph{awake complexity} of a distributed algorithm  measures  the number of rounds in which a node is awake. When a node is not awake, it is {\em sleeping} and does not do any computation or communication and spends very little resources.
Reducing the awake complexity of a distributed algorithm
can be relevant in resource-constrained networks such as sensor networks, where saving energy of nodes is crucial. 
Awake complexity of many fundamental problems such as maximal independent set, maximal matching, coloring,
and spanning trees have been studied recently.
    
In this work, we study  the awake complexity of the fundamental distributed  minimum spanning tree (MST) problem and present the following results.
\begin{itemize}
\item {\bf Lower Bounds.} 
\begin{enumerate}
\item We show a lower bound of $\Omega(\log n)$ (where $n$ is the number of nodes in the network) on the awake complexity 
for computing an MST  that holds even for randomized algorithms.
\item To better understand the relationship between the awake complexity  and the  round complexity (which counts
both awake and sleeping rounds),
we also prove a \emph{trade-off}  lower bound of 
$\tilde{\Omega}(n)$\footnote{Throughout, the $\tilde{O}$
notation hides a $\text{polylog } n$  factor and $\tilde{\Omega}$ hides a $1/(\text{polylog } n)$ factor.} 
on the
product of round complexity and awake complexity
for any distributed algorithm (even randomized) that outputs an MST. Our lower bound is shown for  graphs having diameter $\tilde{\Omega}(\sqrt{n})$. 
\end{enumerate}
\item {\bf Awake-Optimal Algorithms.} 
\begin{enumerate} \item We present a distributed randomized algorithm to find an MST that achieves the
 optimal awake complexity of $O(\log n)$ (with high probability).  
Its round complexity is $O(n \log n)$ and by our trade-off lower bound,  this 
is the best  round complexity (up to logarithmic factors) for an awake-optimal algorithm.
\item We also show that the $O(\log n)$ awake complexity bound can be achieved deterministically as well, by presenting
a distributed \emph{deterministic} algorithm that has $O(\log n)$ awake complexity and $O(n \log^5 n)$ round complexity. We also show how to reduce the round complexity to $O(n \log n \log^* n)$ at the expense of a slightly increased awake complexity of $O(\log n \log^* n)$. 
\end{enumerate}
\item {\bf Trade-Off Algorithms.} 
To complement our trade-off lower bound, we present a parameterized family of distributed algorithms 
that gives an essentially optimal trade-off (up to $\text{polylog } n$ factors) between the awake complexity and the round complexity. Specifically we
show a family of distributed algorithms that find an MST of the given graph with high probability in $\tilde{O}(D + 2^k + n/2^k)$ round complexity and $\tilde{O}(n/2^k)$  awake complexity, where $D$ is the network diameter and 
integer $k$ is an input parameter to the algorithm. When $k \in [\max \lbrace \lceil 0.5\log n \rceil, \lceil \log D \rceil \rbrace, \lceil \log n \rceil]$, we can obtain useful trade-offs.
\end{itemize}

Our work is a step towards understanding resource-efficient distributed algorithms for fundamental global 
problems such as MST. It shows that MST can be computed with any node being awake (and hence spending resources) for only $O(\log n)$ rounds  which is significantly better than the  fundamental lower bound of $\tilde{\Omega}(\text{Diameter}(G)+\sqrt{n})$ rounds for MST in the traditional CONGEST model, where nodes can be active for at least so many rounds.

\end{abstract}

\textbf{Keywords:} Minimum Spanning Tree, Sleeping model, energy-efficient, awake complexity, round complexity, trade-offs
\thispagestyle{empty}

\newpage

\setcounter{page}{1}

%---------------------------------------------------------------------
\section{Introduction}\label{sec:intro}
% !TEX root = main.tex

We study the distributed minimum spanning tree (MST) problem, a central problem in
distributed computing. This problem has been studied extensively for several decades
starting with the seminal work of Gallagher, Humblet, 
and Spira (GHS) in the early 1980s \cite{DistMst:Gallager}; for example, we refer to
the survey of \cite{eatcs} that traces the history of the problem till the state of the art.
The round (time) complexity of the GHS algorithm is $O(n \log n)$ rounds,
where $n$ is the number of nodes in the network.
The round complexity of the problem has been continuously improved since then  and now tight optimal bounds are known.\footnote{Message complexity has also been well-studied, see e.g., \cite{eatcs}, but this is not the focus of this paper, although  our algorithms are also (essentially) message optimal --- see Section \ref{sec:results}.} 
It is now well-established
that $\Theta(D+\sqrt{n})$ is essentially (up to logarithmic factors)  a tight  bound for
the round complexity of distributed MST in the standard CONGEST model~\cite{peleg-bound,elkin-bound,kutten-domset,stoc11}. 
The lower bound  applies even to randomized Monte Carlo algorithms~\cite{stoc11}, while  deterministic algorithms that match this bound (up to logarithmic factor) are now well-known (see e.g., 
\cite{peleg,kutten-domset,PanduranganRS17,elkin17}).
Thus, the round complexity of the problem in the traditional CONGEST distributed model is settled (see also the recent works
of \cite{universal-optimality-mst,low-congestion-mst}). In the CONGEST model, any node can send, receive, or do local computation in any round and only $O(\log n)$-sized
messages can be sent through any edge per round.

MST serves
as a basic primitive in  many network applications including efficient broadcast, leader election, approximate Steiner tree construction etc.~\cite{peleg, eatcs}. For example, an important application of MST is for
energy-efficient broadcast in wireless networks which has been extensively studied, see e.g., \cite{ambuhl,khan-tpds}.\footnote{Note that for energy-efficient broadcast, the underlying graph is {\em weighted} and it
is known that using an MST  to broadcast minimizes the total cost\cite{ambuhl,khan-tpds}.}

In resource-constrained networks such as  sensor networks, where nodes  spend a lot of energy or other resources over the course
of an algorithm, a round complexity 
of $\tilde{O}(D+\sqrt{n})$ (which is essentially optimal) to construct an MST can be large. In particular, in such an algorithm, a node can be active
over the entire course of the algorithm.
It is worth studying whether MST can be constructed in such a way that each node is active only in a small number of rounds ---
much smaller than that taken over the (worst-case) number of rounds --- and hence could  use much less resources such as energy.

Motivated by the above considerations, in recent years several works have studied energy-distributed algorithms (see e.g., \cite{energy1,CDHHLP18, CDHP20, podc2020,BM21,King_2011}). 
This paper studies the distributed MST problem in the 
\emph{sleeping model}~\cite{podc2020}.
In this model (see Section~\ref{sec:model}),  nodes can
operate in two modes: \emph{awake} and \emph{sleeping}. Each node can choose to
enter the awake or asleep state at the start of any specified round.
In the sleeping mode, a node cannot send, receive, or do any local computation;
messages sent to it are also lost. The resources utilized in sleeping rounds
are negligible and hence only awake rounds are counted. The goal in the sleeping
model is to design distributed algorithms that solve problems in a small number
of awake rounds, i.e., have small \emph{awake complexity} (also called {\em awake time}), which is the
(worst-case) number of awake rounds needed by any node until it terminates. This
is motivated by the fact that, if the awake complexity is small, then every node
takes only a small number of rounds during which it uses a significant amount of
resources. For example, in ad hoc wireless or sensor networks,
a node's energy consumption depends on the amount of time it is actively communicating with nodes.
In fact, significant amount of energy is spent by a node even when it is just  waiting to hear from a neighbor~\cite{podc2020}.
On the other hand, when a node is sleeping --- when all of its radio devices are switched off --- it spends little or no energy. 
While the main goal is to minimize awake complexity, we would also like to
minimize the (traditional) \emph{round complexity} (also called {\em time complexity or run time}) of the algorithm, which counts
the (worst-case) total number of rounds taken by any node, including both awake
and sleeping rounds.

The  work of Barenboim and Maimon~\cite{BM21} shows that global problems such as broadcast and constructing a (arbitrary) spanning
tree (but not an MST on weighted graphs, which is required for applications mentioned earlier) can be accomplished in $O(\log n)$ awake rounds in the sleeping (CONGEST) model (see also the related result of \cite{CDHHLP18} ---  Section \ref{sec:related}.).  The above work is significant because it shows that even such global problems 
can be accomplished in a very small number of awake rounds, bypassing the $\Omega(D)$ lower bound on the round complexity (in the traditional model). In this work, we focus on another fundamental global problem, namely MST, and study its awake complexity. We show  that MST  can be solved in  $O(\log n)$ rounds awake
complexity which we show is also {\em optimal}, by showing a matching {\em lower bound}. The upper bound is in contrast to the 
classical $\tilde{\Omega}(D+\sqrt{n})$ lower bound on the round complexity (in the traditional CONGEST model) where nodes can be awake for at least so many rounds.
Another key issue we study is the relationship between  awake complexity and round complexity of MST. It is intriguing  whether
one can obtain an algorithm that has both optimal awake and round complexity. We answer this in the negative by showing a {\em trade-off} lower bound (see Section \ref{sec:results}).

\subsection{Distributed Computing Model and Complexity Measures}
\label{sec:model}

\paragraph*{Distributed Network Model}
We are given a distributed network 
modeled as an arbitrary, undirected, connected, weighted graph $G(V,E,w)$, 
where the node set $V$ ($|V| = n$) represent the processors, the edge set $E$ ($|E| = m$) represents the communication
links between them, and $w(e)$ is the weight of edge $e \in E$. 

The network diameter is denoted by $D$, also called the hop-diameter (that is, the unweighted
diameter) of $G$, and in this paper by diameter we always mean hop-diameter.
We  assume that the weights of the edges of
the graph are all distinct. This implies that the MST of the graph is unique.
(The definitions and the results generalize readily to the case where the weights are not necessarily distinct.)

Each node hosts a processor with limited initial knowledge.
We assume that nodes have unique \texttt{ID}s, 
and at the beginning of the computation each node is provided its \texttt{ID} as input and the weights
of the edges incident to it. We assume 
node IDs are of size $O(\log n)$ bits.   
We assume that each node has ports (each port having a unique port number); 
each incident edge is connected to one distinct port. 
We also assume that nodes know $n$, the number of nodes in the network. For the deterministic algorithm, we make the additional assumption that nodes know the value of $N$, an upper bound on the largest ID of any node. 
Nodes initially do not have any other global knowledge and have knowledge of only themselves.

Nodes are allowed to communicate through the edges of the graph $G$ and it is assumed that communication is {\em synchronous} and occurs in rounds. In particular, we assume that each node knows the current round number, starting from round 1. 
In each round, each node can perform some local computation (which happens instantaneously) 
including accessing a private source of randomness, and can exchange (possibly distinct) $O(\log{n})$-bit messages 
with each of its neighboring nodes. This is the traditional {\em synchronous CONGEST} model.

As is standard, the goal, at the end of the distributed MST computation, is for every node to know which of its incident edges belong to the MST.

\paragraph*{Sleeping Model and Complexity Measures}
 The \emph{sleeping model}~\cite{podc2020} is a generalization of the traditional model, where  a node can be in
 either of the two states --- sleeping or awake --- before it finishes executing the algorithm (locally). (In the traditional model, each node is always awake until it finished the algorithm). Initially, we assume that all nodes are awake.
 That is, any node $v$, can decide to {\em sleep} starting at any (specified) round of its choice; we assume all nodes know the correct round number whenever they are awake. It can {\em wake up} again later at any specified round and enter the {\em awake} state.
 In the sleeping state, a node does not send or receive messages, nor does it do any local computation. Messages sent to it by other nodes when it was sleeping are lost. This aspect makes it especially challenging to design algorithms
 that have a small number of awake rounds, since one has to carefully coordinate the transmission of messages.

  Let $A_v$ denote the number of awake rounds  for a node
    $v$ before termination. We define the \emph{(worst-case) awake complexity}
     as $\max_{v \in V}A_v$. For a randomized algorithm, $A_v$ will be a random variable
     and our goal is to obtain high probability bounds on the awake complexity.
     Apart from minimizing the awake complexity, we also strive to minimize
     the overall (traditional) \emph{round complexity} 
     (also called \emph{run time or time complexity}), where both, sleeping and awake rounds, are counted.

\subsection{Our Contributions and Techniques}
\label{sec:results}
We study the awake complexity of distributed MST and present the following  results (see Table~\ref{table:results} for a summary). 

\begin{table*}[ht]
\footnotesize
	\caption{
	Summary of our Results.
	} 
	\centering 
		\resizebox{1.0\columnwidth}{!}{%
	\begin{tabular}{|c|c|c|c|c|c|}
		\hline
		Algorithm & Type & Awake Time (AT) & Run Time (RT) & AT Lower Bound  & AT $\times$ RT Lower Bound \\
		\hline
		\hline
		*$\RANDMST$ & Randomized & $O(\log n)$ & $O(n \log n)$ & $\Omega(\log n)$ & $\tilde{\Omega}(n)\dagger$\\
		\hline
		$\DETMST$ & Deterministic & $O(\log n)$ & $O(n\log^5 n)$ & $\Omega(\log n)$ & $\tilde{\Omega}(n)\dagger$\\
		\hline
		Modified $\DETMST$ & Deterministic &  $O(\log n \log^* n)$ & $O(n \log n \log^* n)$ & $\Omega(\log n)$ & $\tilde{\Omega}(n)\dagger$\\
		\hline
		*$\clubsuit\MSTTRADEOFF$ & Randomized & $\Tilde{O}(n/2^k)$ & $\Tilde{O}(D + 2^k + n/2^k)$ & - & -\\
		\hline
		\multicolumn{6}{|l|}{*The algorithm outputs an MST with high probability.}\\
		\multicolumn{6}{|l|}{$\clubsuit$This algorithm takes integer $k$ as an input parameter.}\\
        \multicolumn{6}{|l|}{$\dagger$Holds for (some) graphs with diameter $\tilde{\Omega}(\sqrt{n})$.}\\
		\multicolumn{6}{|l|}{Our lower bounds also apply to Monte Carlo randomized algorithms with constant success probability.}\\
		\multicolumn{6}{|l|}{$n$ is the number of nodes in the network and $N$ is an upper bound on the largest ID of a node.}\\
		\hline
	\end{tabular}
		}
	\label{table:results}
\end{table*}

\noindent {\bf 1. Lower Bound on the Awake Complexity of MST.}
We  show that $\Omega(\log n)$ is a lower bound 
on the awake complexity of constructing a MST, even for randomized Monte Carlo algorithms 
 with constant success probability (Section \ref{sec:awake-lower}). We note that showing lower bounds on the awake
 complexity is different from showing lower bounds for round complexity in the traditional LOCAL model. In the traditional LOCAL
 model, {\em locality} plays an important role in showing lower bounds. In particular, to obtain
 information from a $r$-hop neighborhood, one needs at least $r$ rounds and lower bounds use indistinguishability
 arguments of identical $r$-hop neighborhoods to show a lower bound of $r$ to solve a problem. For example, 
 this approach is used to show a lower bound of $\Omega(D)$ for leader election or broadcast even for randomized algorithms
 \cite{jacm15}. Lower bounds in the CONGEST model are more involved and exploit bandwidth restriction
 to show stronger lower bounds for certain problems, e.g., MST has a  round complexity lower bound of $\tilde{\Omega}(D+\sqrt{n})$ rounds. In contrast, as we show in this paper, MST can be solved using only $O(\log n)$
 awake rounds and requires a different type of lower bound argument for awake complexity.

 We first show a  randomized lower bound of $\Omega(\log n)$ awake complexity for  {\em broadcast} on a line of $n$ nodes. 
 Our broadcast lower bound is an adaptation of a similar lower bound shown for the  energy complexity model \cite{CDHHLP18}.
 Our randomized lower bound is more general and subsumes  a deterministic lower bound
 of $\Omega(\log n)$ shown in \cite{BM21} for computing a spanning tree called the Distributed Layered Tree (by a reduction this lower bound applies to broadcast and leader election as well). The deterministic lower bound follows  from the {\em deterministic  message complexity} lower bound of $\Omega(n \log n)$  on
 leader election on rings (which also requires an additional condition that node IDs should be from a large range) to show
 that some node should be awake at least $\Omega(\log n)$ rounds. Note that this argument does not immediately apply for randomized algorithms, since such a message complexity lower bound does not hold for randomized (Monte Carlo) leader election~\cite{jacm15}. On the other hand, our lower bound uses  probabilistic arguments to show
 that some node has to be awake for at least $\Omega(\log n)$ rounds for accomplishing broadcast in a line. We then use a reduction to argue the same lower bound for MST problem on a {\em weighted ring}. We believe that the broadcast lower bound  is fundamental to awake complexity and will have implications for several other problems as well.

\noindent {\bf 2. Lower Bounds Relating Awake and Round Complexities.}
An important question is whether one can  optimize awake complexity and round complexity {\em simultaneously} or whether one can only optimize one at the cost of the other. 
Our next result shows that the latter is true by showing a (existential) lower bound on the {\em product} of awake and round complexities. Specifically, we construct a family of graphs with diameters ranging between $\tilde{\Omega}(\sqrt{n})$ to $\tilde{O}(n)$. Time-optimal MST algorithms for these graphs will have round complexity within $\polylog{n}$ factors of their diameters. We show that any distributed algorithm  on this graph family that requires only $\tilde{O}(\mbox{Diameter}(G))$ rounds must have an awake complexity of at least $\tilde{\Omega}(n/\mbox{Diameter}(G))$ for any distributed  algorithm. This holds even for Monte-Carlo randomized algorithms
with constant success probability. The precise result is stated in
Theorem \ref{thm:lowerbound} (see Section \ref{sec:tradeoff-lower}).  In other words, the product
of the round complexity and awake complexity is $\tilde{\Omega}(n)$. We note that this product lower  bound is shown for graphs with
diameter  $\tilde{\Omega}(\sqrt{n})$, and we leave open whether a similar bound holds for graphs with much smaller diameters.

Our lower bound technique for showing a conditional lower bound on the awake complexity (conditional
on upper bounding the round complexity)  can be of independent
interest. We use a lower bound graph family that is similar to that used in prior work (e.g., \cite{peleg-bound, stoc11}),
but our lower bound technique uses  communication complexity to lower bound awake complexity by lower bounding
the \emph{congestion caused in some node}. This is different from the \emph{Simulation Theorem} technique (\cite{stoc11}) used
to show unconditional lower bounds for {\em round complexity} in the traditional setting. The main idea  is showing that  to solve  the distributed set disjointness problem 
in less than $c$ rounds, at least $\tilde{\Omega}(n/c)$ bits have to send through some node that has small (constant) degree. This means that the node has to be awake
for essentially $\tilde{\Omega}(n/c)$  rounds. By using standard reductions, the same lower bound holds for MST.
The  technique is quite general and could  be adapted to show similar conditional lower bounds on the awake complexity
for other fundamental problems such as shortest paths, minimum cut etc.

\noindent {\bf 3. Awake-Optimal Algorithms and Techniques.}
We present a distributed randomized algorithm (Section \ref{sec:mst-optimal-awake-time}) that has $O(\log n)$ awake complexity which is optimal since it matches the lower bound shown.     
The round complexity of our algorithm is $O(n \log n)$ and by our trade-off lower bound,  this 
is the best  round complexity (up to logarithmic factors) for an awake-optimal algorithm.
We then show that the awake-optimal bound of $O(\log n)$ can be obtained deterministically by presenting a deterministic
algorithm (see Section \ref{sec:det-mst-opt-awake-time}). However, the deterministic algorithm  has a slightly higher round complexity
of $O(n \log^5 n)$, assuming that the node IDs are in the range $[1,N]$ and $N =O(\poly{n})$ is known to all nodes.
We also show that one can  reduce the deterministic round complexity to $O(n \log n \log^* n)$ at the cost of
slightly increasing the awake complexity to $O(\log n \log^* n)$ rounds.

Our algorithms use several techniques  for constructing an MST in an awake-efficient manner. Some of these can be of
independent interest in designing such algorithms for other fundamental  problems.\footnote{As an example, the $O(n \log n \log^* n)$ deterministic algorithm is   useful in designing an MIS
algorithm with small awake and round complexities \cite{DMP23} --- see Section \ref{sec:related}.} 
Our main idea is to construct a spanning tree called the \emph{Labeled Distance Tree (LDT)}. 
An LDT is a rooted oriented spanning tree such that each node is labeled by its  distance from the root and every node knows the labels of its parent  and children (if any) and the label of the root of the tree. We show
that an LDT can be constructed in $O(\log n)$ awake complexity and in 
a weighted graph, the LDT can be constructed so that  it is an MST.  While LDT construction is akin to the classic GHS/Boruvka algorithm~\cite{DistMst:Gallager}, it is technically challenging to construct an LDT that is also an MST in $O(\log n)$ awake rounds (Section \ref{sec:algorithms}). As in GHS algorithm, starting with $n$ singleton fragments we merge fragments
via minimum outgoing edge (MOE) in each phase. The LDT distance property allows for finding an MOE in $O(1)$ awake rounds, since broadcast and convergecast can be accomplished in 
$O(1)$ rounds. On the other hand,  maintaining the property of LDT when fragments are merged is non-trivial. 
A key idea is that we show that merging can be accomplished
in $O(1)$ awake rounds by merging only fragment chains of {\em constant} length (details under ``Technical challenges'' in Sections \ref{sec:mst-optimal-awake-time} and \ref{sec:det-mst-opt-awake-time}).\footnote{Consider the supergraph where the fragments are nodes and the MOEs are edges. A fragment chain is one such supergraph that forms a path. The exact details of the supergraphs formed are slightly different and explained in the relevant section, but this idea is beneficial to understanding.}
We develop a technical lemma that shows that this merging restriction still reduces the number of fragments
by a constant factor in every phase and hence the algorithm takes overall $O(\log n)$ awake rounds.

Our tree construction is  different compared to the construction of trees in \cite{BM21,CDHHLP18}.  In particular, a tree structure called as Distributed Layered Tree (DLT) is used in Barenboim and Maimon~\cite{BM21}.
 A DLT is a rooted oriented spanning tree where the vertices are labeled, such that each
vertex has a greater label than that of its parent, according to a given order and each vertex knows its own label and the label of its parent. Another similar tree structure is used in~\cite{CDHHLP18}. 
A crucial difference between the LDT construction and the others is that it allows fragments to be merged via
desired edges (MOEs), unlike the construction of DLT, for example, where one has to merge along edges that connect a higher label to a lower label. This is not useful for MST construction. Another important difference is that the labels used in LDTs scale with the number of nodes, whereas the labels in DLTs scale with the maximum ID assigned to any node. As the running time for both constructions are proportional to the maximum possible labels and the ID range is usually polynomially larger than the number of nodes, the running time to construct a DLT is much larger than the running time to construct an LDT. 

\noindent {\bf 4. Trade-Off Algorithms.} 
We present a parameterized family of distributed algorithms 
that show a trade-off between the awake complexity and the round complexity and essentially (up to a $\polylog n$ factor)  matches
our product lower bound of $\tilde{\Omega}(n)$.\footnote{The product lower bound of
$\tilde{\Omega}(n)$ is shown  for graphs with diameter at least $\tilde{\Omega}(\sqrt{n})$. Hence, the near tightness claim holds for graphs in this diameter range.} Specifically we
show a family of distributed algorithms that find an MST of the given graph with high probability in $\Tilde{O}(D + 2^k + n/2^k)$ running time and $\Tilde{O}(n/2^k)$  awake time, where $D$ is the network diameter and 
integer $k \in [\max \lbrace \lceil 0.5\log n \rceil, \lceil \log D \rceil \rbrace, \lceil \log n \rceil]$ is an input parameter to the algorithm. Notice that when $D=O(\sqrt{n})$, the round complexity can vary from $\Tilde{O}(\sqrt{n})$ to $\Tilde{O}(n)$, and  we can choose (integer) $k \in [ \lceil 0.5\log n \rceil, \lceil \log n \rceil]$ from  to get the  $\tilde{O}(n)$ product bound for this entire range. On the other hand, when $D = \omega(\sqrt{n})$, the round complexity can vary  from $\tilde{O}(D)$ to $\Tilde{O}(n)$, and we can choose $k \in  [ \lceil \log D \rceil , \lceil \log n \rceil]$ and get a family of algorithms  with (essentially) optimal round complexities from $\Tilde{O}(D)$ to $\Tilde{O}(n)$.

\subsection{Related Work and Comparison}
\label{sec:related}

The sleeping model and the awake complexity measure was introduced in a paper by
Chatterjee, Gmyr and Pandurangan~\cite{podc2020} who showed that MIS in general graphs can be
solved in $O(1)$ rounds \emph{node-averaged} awake complexity. Node-averaged
awake complexity is measured by the \emph{average} number of rounds a node is
awake.  
The (worst-case) awake complexity of their MIS algorithm is $O(\log n)$, while the
worst-case complexity (that includes all rounds, sleeping and awake) is
$O(\log^{3.41}n)$ rounds.
Subsequently, Ghaffari and Portmann~\cite{ghaffari-sleeping}  developed a randomized MIS algorithm
that has worst-case complexity of $O(\log n)$, while having $O(1)$ node-averaged awake complexity (both
bounds hold with high probability). They studied  approximate maximum matching
and vertex cover and presented algorithms that have similar node-averaged and worst-case awake complexities.
These results show that the above fundamental local symmetry breaking problems  have $O(\log n)$
(worst-case) awake complexity as is shown for global problems such as spanning tree \cite{BM21} and MST (this paper).
In a recent result, Dufoulon, Moses Jr., and Pandurangan~\cite{DMP23}
show that MIS can be solved in $O(\log \log n)$ (worst-case)
awake complexity which is exponentially better than previous results. But the round complexity is $O(poly(n))$.
It then uses  the \emph{deterministic} LDT construction algorithm
of this paper to obtain an MIS algorithm that has a slightly larger awake complexity of $O(\log \log n \log^* n)$,
but significantly better round complexity of $O(\polylog{n})$. The existence of a deterministic LDT algorithm
is crucial to obtaining their result.

Barenboim and Maimon~\cite{BM21} showed that many problems, including broadcast, construction of a spanning tree,
 and leader election can be solved deterministically in $O(\log n)$ awake complexity. They also showed
 that fundamental symmetry breaking problems such as MIS and ($\Delta+1$)-coloring can be solved deterministically
 in $O(\log \Delta + \log^*n)$ awake rounds in the \LOCAL\ model, where $\Delta$ is the maximum degree. 
More generally, they also define  the class of {\em O-LOCAL} problems (that includes MIS and coloring) and 
showed that problems in this class admit  a deterministic algorithm that runs in $O(\log \Delta + \log^*n)$ awake time and $O(\Delta^2)$ round complexity.
 Maimon \cite{maimon} presents trade-offs between awake and round complexity for O-LOCAL problems.

While there is significant amount of work on energy-efficient distributed algorithms over the years  we discuss
those that are most relevant to this paper.
 A recent line of relevant  work is  that Chang, Kopelowitz, Pettie, Wang, and Zhan and their follow ups \cite{energy1,CDHHLP18,CDHP20,energy4,DH22} (see also the references therein and its follow up papers mentioned below and also  much earlier work on energy complexity in radio networks e.g., \cite{Nakano,jurdin,pajak}).
 This work defines
the measure of {\em energy complexity} which is the same as (worst-case) awake complexity (i.e., both measures count only the rounds that a node is awake).  While the awake complexity used here and several other papers \cite{podc2020,ghaffari-sleeping,BM21} assumes the usual
\CONGEST (or \LOCAL) communication model (and hence the model can be called \emph{SLEEPING-CONGEST} (or \emph{SLEEPING-LOCAL})), the energy complexity  measure used in \cite{energy1} (and also  papers  mentioned above) has some additional communication restrictions
that pertain to radio networks (and can be called \emph{SLEEPING-RADIO} model). 
 The most important being that nodes can only
broadcast messages (hence the same message is sent to all neighbors) and when
a node transmits, no other neighboring node can. (Also a node cannot transmit and listen in the same round.) The energy model has a few
variants depending on how collisions are handled. There is a version of the SLEEPING-RADIO model called 
``Local''  where collisions are ignored and nodes can transmit messages at
the same time; this is essentially same as SLEEPING-LOCAL model, apart from the
notion that in a given round a node can transmit only the same message to its neighbors. In particular, upper  bounds  in the radio model 
apply directly to the sleeping model. 
In particular, we use a recent result due to Dani and Hayes~\cite{DH22} that computes breadth-first search (BFS) tree
with $O(\polylog(n))$ energy complexity in the radio model as a subroutine in our MST tradeoff algorithm. 
Also, algorithms in the SLEEPING-CONGEST model can be made to work in the SLEEPING-RADIO model
yielding similar bounds (with possibly  a $O(\polylog(n))$ multiplicative factor) to the energy/awake complexity. 

Lower bounds shown in the local version of the SLEEPING-RADIO model apply
to other models including SLEEPING-LOCAL (and SLEEPING-CONGEST). For example, Chang, Dani, Hayes, He, Li, and Pettie~\cite{CDHHLP18} show a lower bound 
$\Omega(\log n)$ on the energy complexity
of broadcast which applies also to randomized algorithms. 
This lower bound is shown for the local version of their model,
and  this result holds also for the awake complexity in the sleeping model. 
We adapt this lower bound result to show a $\Omega(\log n)$ lower bound on the awake complexity of MST even for randomized algorithms.

%---------------------------------------------------------------------
\section{Lower Bounds}\label{sec:lower-bound}
% !TEX root = main.tex

\subsection{Unconditional Lower bound on Awake Complexity of MST}
\label{sec:awake-lower}

We consider a ring of $\Theta(n)$ nodes with random weights and edges. The two largest weighted edges will be apart by a hop distance of $\Omega(n)$ with constant probability and any MST algorithm must detect which one has the lower weight. Clearly, this will require communication over either one of the $\Omega(n)$ length paths between the two edges. Under this setting, we get the following theorem.

\begin{theorem}
\label{thm:lb-unconditional}
Any algorithm to solve MST with probability exceeding $1/8$ on a ring network comprising $\Theta(n)$ nodes requires $\Omega(\log n)$ awake time even when message sizes are unbounded.
\end{theorem}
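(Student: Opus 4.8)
The plan is to reduce the awake complexity of MST on a ring to the awake complexity of broadcast on a line, for which a $\Omega(\log n)$ randomized lower bound is available (adapted from the energy-complexity result of Chang, Dani, Hayes, He, Li, and Pettie~\cite{CDHHLP18}). First I would set up the random instance: take a ring $C$ on $\Theta(n)$ nodes, assign each edge an i.i.d.\ weight (say uniform on a large discrete range so that ties occur with negligible probability), and focus on the two heaviest edges $e_1, e_2$ of the ring. The unique MST of the ring is the ring minus its single heaviest edge, so every node must "learn" which of $e_1, e_2$ is heavier — more precisely, the two endpoints of the lighter of $\{e_1,e_2\}$ must decide that their incident edge is in the MST, while the endpoints of the heavier one must decide it is not. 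A standard balls-in-bins / order-statistics argument shows that with constant probability the two heaviest edges are at hop-distance $\Omega(n)$ apart on the ring; condition on this event, which happens with probability bounded below by an absolute constant (I would aim for at least, say, $1/2$).

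Next I would argue the reduction. Conditioned on $e_1,e_2$ being far apart, removing, say, $e_2$ from the ring leaves a line $P$ of $\Omega(n)$ nodes that still contains $e_1$ somewhere in its interior, with $\Omega(n)$ nodes on each side of $e_1$. The key observation is that determining whether $w(e_1) > w(e_2)$ requires information to flow between an endpoint of $e_1$ and an endpoint of $e_2$ — but $e_2$'s endpoints only "know" $w(e_2)$, and in the worst case over the possible values of $w(e_2)$, an endpoint $u$ of $e_1$ cannot decide the membership of $e_1$ in the MST without receiving a signal that depends on (a comparison with) $w(e_2)$. I would formalize this by an indistinguishability-plus-information argument: embed a broadcast instance by letting $w(e_2)$ encode one bit (is $w(e_2)$ above or below $w(e_1)$, which is revealed to the relevant nodes), so that a correct MST algorithm, restricted to the line $P$, must transport that bit from near $e_2$ to near $e_1$ across $\Omega(n)$ hops. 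Since message sizes are unbounded, the content of messages is irrelevant; what matters is only the schedule of awake rounds, exactly the regime where the broadcast lower bound applies. Hence an MST algorithm succeeding with probability $> 1/8$ yields a broadcast protocol on a line of $\Omega(n)$ nodes succeeding with probability bounded above $1/2$ (after accounting for the constant-probability conditioning), which by the adapted broadcast lower bound must have some node awake for $\Omega(\log n)$ rounds.

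The main obstacle I anticipate is making the reduction to broadcast fully rigorous in the randomized, unbounded-message setting: one must ensure that the MST algorithm cannot "cheat" by using the random edge weights elsewhere on the ring as a side channel, or by having nodes decide based on global structure rather than the $e_1$-vs-$e_2$ comparison. I would handle this by an averaging/fooling-set argument over the choice of which edge is $e_1$ and which is $e_2$ and over the value of $w(e_2)$: fix all weights except $w(e_2)$ so that $e_1$ is determined as the second-heaviest-so-far and is far from the slot of $e_2$, then vary $w(e_2)$ across a threshold; the nodes near $e_1$ must behave differently in the two cases, and the only way they can is if a chain of awake-round "wake-ups" propagates from $e_2$'s neighborhood, which is precisely the combinatorial bottleneck quantified in the broadcast lower bound. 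Closing this gap — i.e., showing that an $o(\log n)$-awake schedule cannot carry this one bit across $\Omega(n)$ hops with the required probability — is the crux, and it is exactly where the $\Omega(\log n)$ broadcast bound does the work. I would also double-check the probability bookkeeping so that the "$> 1/8$" in the statement comfortably survives the conditioning constants.
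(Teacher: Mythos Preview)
Your high-level strategy matches the paper's: take a random weighted ring, observe that the two heaviest edges $e_1,e_2$ are $\Omega(n)$ apart with constant probability, note that deciding which of $e_1,e_2$ is heavier forces information to travel across $\Omega(n)$ hops, and conclude $\Omega(\log n)$ awake time via the broadcast-style bottleneck. Where you diverge is in how the last step is carried out. You propose to invoke the broadcast lower bound of~\cite{CDHHLP18} as a black box via a reduction; the paper instead \emph{inlines} that argument, proving directly (Lemma~\ref{lem:induct}) that in any segment $I$ of length $13^a$ there is, with probability at least $1/2$, a ``special vertex'' $v^*$ whose full-information view after $a$ awake rounds is still confined to $I$, and moreover these events are independent for disjoint segments.

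That independence clause is exactly what makes the paper's version clean and what your reduction sketch does not yet supply. The ring has two arcs between $e_1$ and $e_2$, so the comparison bit could reach $u_1$ along either one; you need both arcs to be blocked simultaneously. Your phrasing ``restricted to the line $P$'' and ``transport that bit from near $e_2$ to near $e_1$'' glosses over this: the algorithm runs on the ring, not on $P$, and a black-box invocation of the line-broadcast lower bound on a single arc only tells you one arc fails with constant probability, not both. The paper gets both by applying Lemma~\ref{lem:induct} to one disjoint length-$13^{a_{\max}}$ segment inside each arc and multiplying the probabilities. If you want to stay black-box you would have to open up the broadcast proof enough to extract an analogous independence statement, at which point you are essentially reproving Lemma~\ref{lem:induct}. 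Your alternative ``fix all weights except $w(e_2)$ and vary it across a threshold'' fooling-set idea is reasonable, but note the paper instead keeps \emph{everything} random (IDs, weights, private coins), which is what makes the segment events independent and sidesteps the side-channel worry you flag.
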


\begin{proof}
Consider a weighted ring $R$ of length $4n+4$ with each node $u$ having a random ID denoted $ID(u)$ and every edge $(u,v)$ having a random weight $w(u,v)$, with all IDs and weights being drawn uniformly and independently from a sufficiently large $\poly(n)$ space. The IDs and weights will be distinct whp, so we condition on this for the rest of the proof. 

Let $\P$ be any protocol executed by the nodes in the ring. For the purpose of this lower bound, we assume that $\P$ operates as follows. Initially, every node $u$ generates a sufficiently large random string $r_u$ and uses $r_u$ to make all its random choices throughout the execution of $\P$. 
Every pair of neighboring vertices $u$ and $v$ that are simultaneously awake in some round $r$ communicate their states to each other in their entireties. Thus, each node $u$ starts with an initial knowledge comprising just its own ID\footnote{Normally, we assume $KT_1$ knowledge (i.e.,
each node knows the IDs of its neighbors) at the beginning, but here, we assume $KT_0$ (i.e., no such knowledge) without loss of generality, because we can always add a single awake round for all nodes at the beginning to go from $KT_0$ to $KT_1$.}, weights of incident edges and its own random string. More generally, at the end of its $a$th awake round, $u$ knows everything about a segment $S(u,a)$ (including weights of edges incident on end vertices of $S(u,a)$) but nothing outside it. 

Let $\I_k$ denote the family of connected segments of length $k$.   For each $I \in I_k$, we define an event $U(I,a)$ as follows: there exists a special vertex $v^* \in I$ such that $S(v^*, a) \subseteq I$. We now show a crucial lemma that will help us complete the proof.

\begin{lemma}\label{lem:induct}
Let $a_{max} = \lfloor \log_{13} n \rfloor$. For every $a \in \{0, 1, \ldots,a_{max}\} = [a_{max}]$ and $k=13^a$ and $I \in I_k$, $\Pr(U(I,a)) \ge 1/2$. Moreover, for any two non-overlapping segments $I_1$ and $I_2$ (i.e., with no common vertex), the events $U(I_1, a)$ and $U(I_2, a)$ are independent.
\end{lemma}

\begin{proof}
We prove this by induction on $a$. Clearly, $U(I, 0)$ holds for every $I \in \I_1$, thus establishing the basis. Now consider any $a> 0$ and any $I \in \I_k$ for $k = 13^a$. We split $I$ into 13 non-overlapping segments 
of length $13^{a-1}$ each. The probability that $U(*, a-1)$ will occur for five of the 13 segments is at least 5/6, thanks to the independence of $U(*, a-1)$. Now let us focus on the five of the 13 segments (denoted $A$, $B$, $C$, $D$, and $E$ taken, say, in clockwise order) for which the event $U(*, a-1)$ occurred with $v(A)$, $v(B)$, \ldots, $v(E)$ denoting the special vertex. Again, due to independence, the time when each  $v(*)$ wakes up for the $a$th time are independent. Thus, with probability at least 3/5, $v(B)$, $v(C)$, or $v(D)$ will wake up for their respective $a$th time no later than $v(A)$ or $v(E)$. Therefore, with probability at least $(3/5)(5/6) = 1/2$, one of $v(B)$, $v(C)$, or $v(D)$ would  become the special vertex pertaining to $U(I,a)$, thereby ensuring that $\Pr(U(I,a) \ge 1/2)$. 
\end{proof}

Let $e_1=(u_1,v_1)$ and $e_2=(u_2,v_2)$ be the two edges with the largest weights. With probability at least 1/2, they are separated by a distance of $n+1$.  The MST for this ring comprises all edges other than either $e_1$ or $e_2$, whichever is weighted more. Thus, any MST algorithm for this ring must be able to decide between the two edges. This requires communication either between $u_2$ and $v_1$ or between $v_2$ and $u_1$. This will require (with probability at least $1/4$) $\Omega(\log n)$ awake time for at least one node along either paths. 
\end{proof}

We begin by showing that $\Omega(\log n)$ is  an {\em unconditional} lower bound on the awake time for MST. This shows that our algorithms presented in Section \ref{sec:algorithms} achieve optimal awake complexity.  We then
show a  lower bound  of $\tilde{\Omega}(n)$ on the product of the awake and round complexities. This can be considered
as a conditional lower bound on awake complexity, conditioned on an upper bound on the round complexity. This conditional lower bound shows
that our randomized awake optimal algorithm (see Section \ref{sec:mst-optimal-awake-time}) has essentially  the best possible round complexity (up to a $\polylog{n}$ factor). 

\subsection{Lower Bound on the Product of Awake and Round Complexity}
\label{sec:tradeoff-lower}
We adapt the lower bound technique from~\cite{DHKKNPPW11} to the sleeping model and show a lower bound on the product of the awake and round complexities, thereby exposing an inherent trade off between them. 

The high level intuition is as follows. Note that both endpoints of an edge must be awake in a round for $O(\log n)$ bits to be transmitted across in that round. When only one of the endpoints is awake, the awake node can sense the other node is asleep, which is $O(1)$ bits of information. There is no transfer of information when both endpoints are asleep. Thus, if an edge $e = (u,v)$ must transmit $B$ bits when executing an algorithm in the CONGEST model, then, either  $u$ or $v$ must be awake for at least $\Omega(B/\log n)$ rounds. Thus congestion increases awake time. 
Our goal is to exploit this intuition to prove the lower bound. 

We use a communication complexity based reduction to reduce {\em set disjointness} (\sd) in the classical communication complexity model to the {\em distributed set disjointness} (\dsd) problem in the sleeping model and then extend \dsd\ to the minimum spanning tree problem via an intermediate {\em connected spanning subgraph} (\css), both to be solved in the sleeping model.  

In the \sd\ problem, two players, Alice and Bob, possess two $k$-bit strings $a=(a_i)_{1 \le i \le k}$ and $b=(b_i)_{1 \le i \le k}$, respectively. They are required to compute an output bit $\disj(a,b)$ that is 1 iff there is no $i \in [k]$ such that $a_i = b_i = 1$ (i.e., the inner product $\langle a, b \rangle = 0$), and 0 otherwise. Alice and Bob must compute $\disj(a,b)$  while exchanging the least number of bits. It is well-known that any protocol that solves \sd\ requires $\Omega(k)$ bits (on expectation) to be exchanged between Alice and Bob even if they employ a randomized protocol~\cite{Razborov92} that can fail with a small fixed probability $\epsilon>0$. 

Note that the $\Omega(k)$ lower bound (on the number of bits exchanged) for \sd\ problem applies to the asynchronous setting which is the default model in classical 2-party communication complexity. On the other hand, we assume the synchronous setting for the sleeping model. However, it can be shown using the Synchronous Simulation Theorem \cite{PanduranganPS20} (see also Lemma 2.2 in \cite{dufoulon2023message}) that essentially
the same lower bound (up to a $(\log k)$ factor) holds in the synchronous setting as well if one considers algorithms that  run in (at most) polynomial in $k$
number of rounds. Thus for the rest of the proof, we assume a $\Omega(k/\log k)$ lower bound for the \sd\ problem 
for the {\em synchronous} 2-party communication complexity if we consider only polynomial (in $k$) number of rounds. 

\begin{figure}[h]
\includegraphics[clip,trim=70 180 345 30,width=0.9\textwidth]{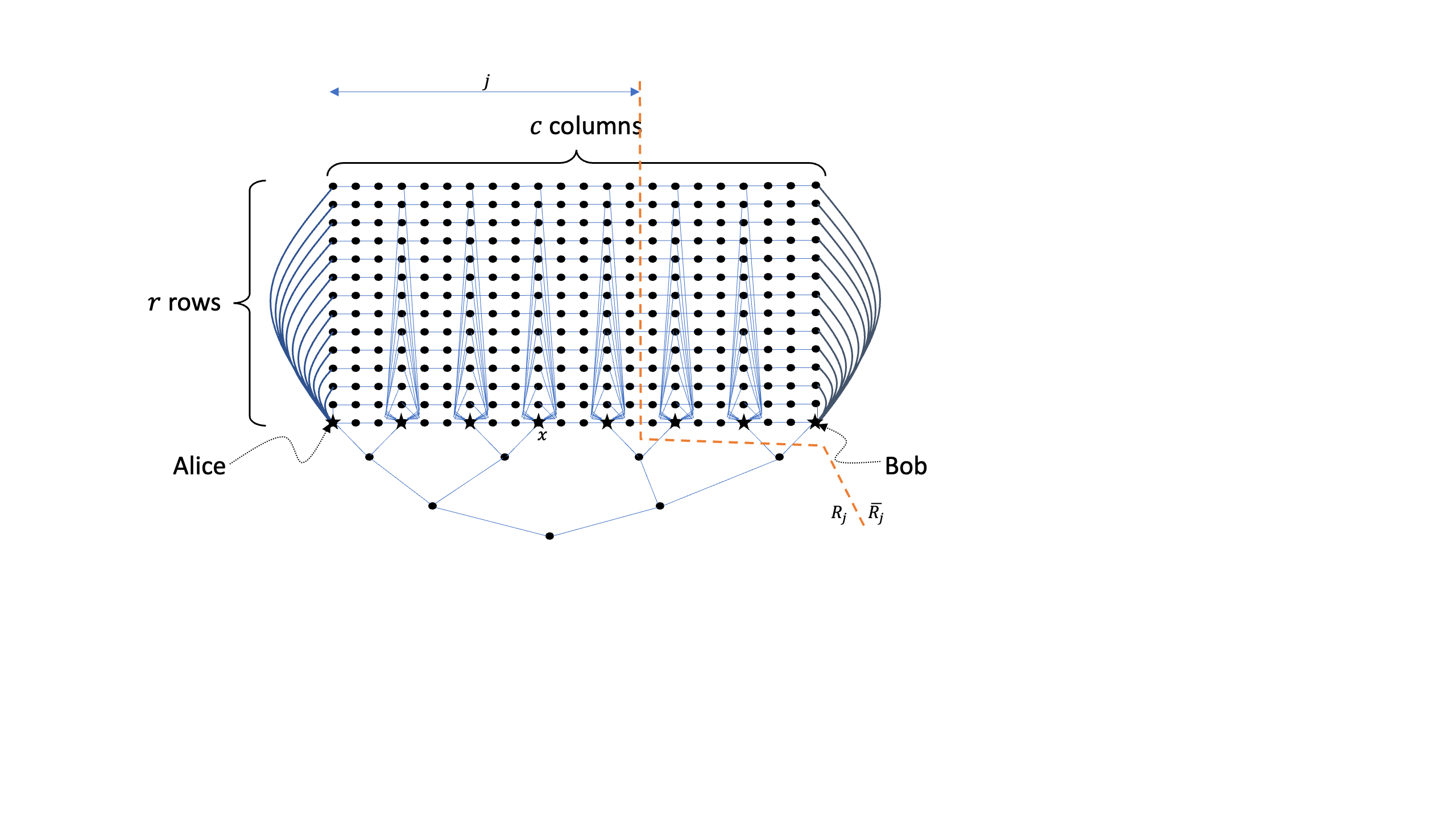}
\caption{\footnotesize Construction of network graph $G_{rc}$ for proving lower bound. The vertices in $X$ are shown as stars (there is a binary tree
at the bottom having the nodes in $X$ as its leaves). One such $x\in X$ is labeled.  The cut induced by an $R_j$ is shown in dotted lines.} 
\label{fig:lb}
\end{figure}

The \dsd\ problem is defined on a graph $G_{rc}$ that is schematically shown in Figure~\ref{fig:lb}. Let $r$ and $c$ be two positive integers such that $rc + \Theta(\log n)= n$ (the network size). We focus on the regime where $c \in \omega(\sqrt{n}\log^3 n)$ and $r \in o(\sqrt{n}/\log^3 n)$. The graph comprises $r$ rows (or parallel paths) $p_\ell$, $1 \le \ell \le r$, with $p_1$ referring to the parallel path at the bottom. Each parallel path comprises $c$ nodes arranged from left to right with the first node referring to the leftmost node and the last node referring to the rightmost node. The first and last nodes in $p_1$ are designated Alice and Bob because they are controlled by the players Alice and Bob in our reduction. Alice (resp., Bob) is connected to the first node (resp., last node) of each $p_\ell$, $2 \le \ell \le r$. Additionally, we pick $\Theta(\log n)$ equally spaced nodes $X$ (of cardinality that is a power of two) from $p_1$ such that the first and last nodes in $p_1$ are included in $X$. For each $x \in X$, say at position $j$ in $p_1$, we add edges from $x$ to the $j$th node in each $p_\ell$, $2 \le \ell \le r$.  Using $X$ as  leaves, we construct a balanced binary tree. We will use $I$ to denote the internal nodes of this tree.   Alice is in possession of bit string $a$ and Bob is in possession of $b$ and, to solve \dsd, they must compute $\disj(a,b)$ in the sleeping model over the network $G_{rc}$. The outputs of all other nodes don't matter; for concreteness, we specify that their outputs should be empty   .

In the \css\ problem defined again on $G_{rc}$, some edges in $G_{rc}$ are marked and at least one node in the network must determine whether the marked edges form a connected spanning subgraph of $G_{rc}$. For the \mst\ problem, we require edges in $G_{rc}$ to be weighted and the goal is to construct a minimum spanning tree of $G_{rc}$ such that the endpoints of each MST edge $e$ are aware that $e$ is an MST edge. Both \css\ and MST must be solved in the sleeping model.

$G_{rc}$ is constructed such that  any node can reach some $x\in X$ within $O(c/\log n)$ steps and any pair of nodes in $X$ are within $O(\log \log n)$ steps (through the tree). Recall that $c \in \omega(\sqrt{n}\log^3 n)$. Thus, we have the following observation.

\begin{observation} \label{obs:exists}
The network graph $G_{rc}$ has diameter $D  \in \Theta(c/\log n)$. Moreover, $D \in \omega(\sqrt{n} \log^* n$). Therefore, \dsd, \css, and \mst\ (if edges in $G_{rc}$ are assigned weights) can be computed in $O(D) = O(c/\log n)$ rounds~\cite{GKP98}. 
\end{observation}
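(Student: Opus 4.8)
The plan is to establish the three assertions of the observation in order: the upper bound $D = O(c/\log n)$ on the hop-diameter of $G_{rc}$, the matching lower bound $D = \Omega(c/\log n)$, and then the algorithmic consequence that \dsd, \css, and \mst\ are solvable in $O(D)$ rounds.

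For the diameter upper bound, the key structural fact is that an ``internal'' node of a parallel path $p_\ell$ (one that is neither an endpoint of $p_\ell$ nor sits at one of the $X$-positions) has exactly two incident edges, namely its two path edges. Hence, to escape its path, such a node must first walk along $p_\ell$ to the nearest position belonging to $X$; since $|X| = \Theta(\log n)$ and the $X$-nodes are equally spaced among the $c$ positions of $p_1$, consecutive $X$-positions are $O(c/\log n)$ apart, so every node reaches a node sitting at an $X$-position in $O(c/\log n)$ hops, and one more vertical edge lands it on the corresponding $x \in X \subseteq p_1$. Because $X$ is the leaf set of a balanced binary tree with internal node set $I$, any two leaves $x, x' \in X$ are joined through that tree in $2\lceil\log_2|X|\rceil = O(\log\log n)$ hops. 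Composing the three legs (arbitrary node $\to$ some $x\in X$, tree traversal $x \to x'$, then $x' \to$ arbitrary node) bounds every pairwise distance by $O(c/\log n) + O(\log\log n) + O(c/\log n) = O(c/\log n)$. (Endpoints of paths, which connect to Alice or Bob, and the tree nodes $I$ are handled the same way and only help.)

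For the lower bound I would exhibit one explicit far-apart pair. Pick a node $v$ on $p_1$ lying exactly midway between two consecutive $X$-positions $j_1 < j_2$, so $j_2 - j_1 = \Theta(c/\log n)$ by the equal spacing of $X$, and pair it with Alice. No $X$-position lies strictly between $j_1$ and $j_2$, so the open segment of $p_1$ between them carries no vertical edges; the only way out of that segment from $v$ is to walk to $j_1$ or to $j_2$, costing at least $(j_2-j_1)/2 = \Omega(c/\log n)$ hops. Since Alice is an endpoint of $p_1$ and an $X$-node, it lies outside this segment, so $\mathrm{dist}(v,\text{Alice}) = \Omega(c/\log n)$, giving $D = \Omega(c/\log n)$ and hence $D = \Theta(c/\log n)$. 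The only point to be careful about here is the ``no vertical edge strictly inside a gap'' claim and the boundary gaps incident to Alice/Bob, but both are immediate from the construction; I do not anticipate a real obstacle.

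Finally, $D \in \omega(\sqrt n\,\log^* n)$ is a short computation: from $c \in \omega(\sqrt n \log^3 n)$ and $D = \Theta(c/\log n)$ we get $D \in \omega(\sqrt n \log^2 n) \subseteq \omega(\sqrt n\,\log^* n)$. For the algorithmic claim I would invoke the $O(D + \sqrt n\,\log^* n)$-round distributed MST algorithm of Garay--Kutten--Peleg~\cite{GKP98}; since in this regime $D$ dominates $\sqrt n\,\log^* n$, it runs in $O(D) = O(c/\log n)$ rounds, and this yields \mst\ once edges of $G_{rc}$ are weighted. The problems \dsd\ and \css\ are no harder: each is solvable by standard broadcast/convergecast over a BFS tree of $G_{rc}$ (which has depth $O(D)$), completing the observation.
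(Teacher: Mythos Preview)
Your proposal is correct and follows the same approach as the paper. The paper's own justification is a one-sentence sketch preceding the observation (reach some $x\in X$ in $O(c/\log n)$ hops, then $O(\log\log n)$ through the binary tree, and recall $c\in\omega(\sqrt n\log^3 n)$); your write-up fleshes this out, and in particular supplies the explicit $\Omega(c/\log n)$ lower-bound witness and the check that $D$ dominates the $\sqrt n\,\log^* n$ term of~\cite{GKP98}, both of which the paper leaves implicit.
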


\noindent {\bf Reduction from  \sd\ $\to$ \dsd.}  
\begin{lemma}
Consider an algorithm $P$ in the sleeping model that solves \dsd\  on $G_{rc}$ with $c \in \omega(\sqrt{n}\log^3 n)$ and $r \in o(\sqrt{n}/\log^3 n)$ in $T$ (worst-case) rounds such that $T \in o(c)$ (and we know such an algorithm exists from Observation~\ref{obs:exists}, in particular because $D \in O(c/\log n)$). Then, the  awake time of $P$ must be at least $\Omega(r/\log^2 n)$. This holds even if $P$ is randomized and has an error probability that is bounded by a small constant $\epsilon > 0$.
\label{lem:dsd}
\end{lemma}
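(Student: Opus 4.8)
The plan is to prove Lemma~\ref{lem:dsd} by a communication-complexity argument: we show that an algorithm $P$ for \dsd\ in the sleeping model with worst-case round complexity $T\in o(c)$ can be simulated by Alice and Bob to solve the 2-party \sd\ problem on $k$-bit inputs (with $k=\Theta(r)$), where the total number of bits they exchange is proportional to $n$ times the awake time $A$ of $P$. Combined with the $\Omega(k/\log k)=\Omega(r/\log r)$ lower bound for synchronous \sd\ stated above, this forces $A\in\Omega(r/\log^2 n)$. The first step is to encode an instance $(a,b)$ of \sd\ into a weighting/marking of $G_{rc}$: we use the $r$ parallel paths $p_\ell$ to encode coordinates, so that bit $a_i$ (resp.\ $b_i$) controls the presence or absence of the edge from Alice (resp.\ Bob) to the first (resp.\ last) node of $p_i$ (or, for the MST/\css\ variants, the weight of that edge), in such a way that the answer to the distributed problem on $G_{rc}$ reveals $\disj(a,b)$. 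This is the standard ``fooling set'' construction underlying the $\tilde\Omega(\sqrt n)$ MST lower bound, with the difference that here we only need the reduction to \dsd.

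The heart of the argument is the simulation and its communication accounting. We partition the vertices of $G_{rc}$ into two parts $V_A, V_B$ via a cut that separates the ``left half'' of the structure (owned by Alice) from the ``right half'' (owned by Bob); crucially, by choice of $X$ and the binary tree over $X$, this cut can be taken to pass through only $O(\log n)$ edges, each incident to a low-degree node. Alice simulates all nodes in $V_A$ and Bob all nodes in $V_B$, round by round. In each round $t$, a node on the cut is relevant to cross-communication only if it is awake in round $t$; in that case it sends/receives $O(\log n)$ bits across each of its $O(1)$ cut edges, and Alice and Bob exchange exactly these messages (together with $O(1)$ bits per cut edge to signal ``asleep/awake'' status, as in the high-level intuition quoted before the lemma). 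Since any fixed node is awake for at most $A$ rounds over the whole execution, and there are $O(\log n)$ cut nodes, the total number of bits exchanged is $O(A\cdot \log n \cdot \log n) = O(A\log^2 n)$. Here I must be careful that the simulation respects the synchronous model (each party always knows the round number), that lost messages to sleeping nodes are handled correctly by both simulators, and that the reduction places $k=\Theta(r)$ coordinates — so the $\Omega(k/\log k)$ bound gives $A\log^2 n \in \Omega(r/\log r)$, i.e.\ $A\in\Omega(r/\log^3 n)$; to get the claimed $\Omega(r/\log^2 n)$ one optimizes the cut size (it is really $\Theta(\log n)$ only for the tree edges, and the Alice/Bob spokes can be bundled) or absorbs the discrepancy into the stated $\tilde\Omega$, which is consistent with the paper's use of polylog slack.

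The main obstacle I expect is bounding the cross-cut communication \emph{per round} by $O(\log n)$-many $O(\log n)$-bit messages rather than something that scales with $r$: a priori Alice is connected to the first node of every path $p_\ell$, so the ``natural'' cut through Alice's incident edges has width $r$, which would be useless. The resolution — and the reason the $X$-nodes and the binary tree are in the construction — is to route the cut through the $\Theta(\log n)$ special columns and the $O(\log n)$ internal tree nodes $I$, so that Alice's and Bob's $r$ spokes lie entirely inside $V_A$ and $V_B$ respectively and never cross the cut; then only the $O(\log n)$ cut edges of small-degree nodes carry cross traffic. Verifying that such a balanced cut of width $O(\log n)$ exists, that it genuinely separates Alice's encoding from Bob's, and that congestion on it is the only channel for information flow (so that $T\in o(c)$, guaranteeing the algorithm cannot ``go around'' through the far ends of the paths within the time budget, is used only to ensure an algorithm of the stated round complexity exists via Observation~\ref{obs:exists}) is the delicate part of the proof.
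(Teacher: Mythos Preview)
Your proposal has a genuine gap: the fixed cut of width $O(\log n)$ that you want does not exist. The $r$ parallel paths $p_1,\dots,p_r$ are $r$ vertex-disjoint paths joining Alice's side to Bob's side, so by Menger's theorem any edge (or vertex) cut separating $V_A$ from $V_B$ has size at least $r$. Routing the cut ``through the special columns and $I$'' cannot avoid cutting one horizontal edge on each $p_\ell$; the $X$-nodes only add \emph{extra} vertical shortcuts, they do not replace the horizontal path edges. Consequently your accounting ``$O(A\cdot\log n\cdot\log n)$ bits across the cut'' is off by a factor of $r/\log n$, and the argument yields nothing.

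The paper avoids this obstacle by a \emph{moving} cut rather than a fixed one: Alice simulates the shrinking region $R_{c-t}=\bigl(\bigcup_\ell p_\ell^{\,c-t}\bigr)\cup I$ in round $t$. Because in round $t-1$ she already simulated position $c-t+1$ on every path, she knows the state of those nodes and can compute locally the messages they send across the $r$ path edges in round $t$; thus the $r$ horizontal cut edges cost her nothing. The only information she must borrow from Bob is what enters $I$ from $\bar R_{c-t}$, and $I$ has $O(\log n)$ constant-degree nodes, giving the $O(\log^2 n)$ factor. This is precisely why the hypothesis $T\in o(c)$ is essential: it guarantees $c-t\ge 1$ throughout the simulation so that the shrinking region never collapses. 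You dismissed $T\in o(c)$ as merely guaranteeing existence via Observation~\ref{obs:exists}, but it is doing the real work in the reduction.
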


\begin{proof}
Suppose for the sake of contradiction $P$ runs in time $T$ and has an awake  complexity of $o(r/\log^2 n)$. Then, we can show that Alice and Bob can simulate $P$ in the classical communication complexity model and solve \sd\ on $r$ bits by exchanging only $o(r/\log n)$ bits which will yield a contradiction to the \sd\ lower bound. We establish this by showing that Alice and Bob can simulate $P$ to solve \sd\ in the classical communication complexity model. 

We show this simulation from Alice's perspective. Bob's perspective will be symmetric. Recall
that $p_\ell$ is the $\ell$th parallel path. Let $p_\ell^j$, $1 \le j \le c$, denote the first $j$ vertices of path~$p_\ell$. We define $R_j$ to be the union of all  $p_\ell^j$ and $I$ (recall that $I$ is the set of the internal nodes of the binary tree), i.e., 
 $   R_j = (\bigcup_{\ell = 1}^r p_\ell^j) \cup I.$ 
Note that $R_j$ induces a cut $(R_j,\bar{R_j})$ that is shown in Figure~1. Alice begins by simulating $R_{c-1}$ in round 1 as she knows the state of all nodes in $R_{c-1}$. At each subsequent round $t$, Alice simulates $R_{c-t}$. Initially, all the information needed for the simulation is available for Alice because the structure of $G_{rc}$ is fully known (except for Bob's input). 

As the simulation progresses, in each round $t > 1$, $t \le T \in  o(c)$, all inputs will be available except for the new bits that may enter $I$ through nodes in $\bar{R}_{c-t}$. Alice will not need to ask for the bits needed by $p_\ell^{c-t}$ because she simulated all nodes in $p_\ell^{c-t+1}$, $1\le \ell \le r$, in the previous round. 
 Note that the portion simulated by Bob will encompass the portion from which Alice may need bits from Bob, so Bob will indeed have the bits requested by Alice. In order to continue the simulation, Alice borrows bits that $P$ transmitted from $\bar{R_{c-t}}$ to $I \cap R_{c-t}$ from Bob. Suppose during the course of the simulation in the communication complexity model, $B$ bits are borrowed from Bob. Then nodes in $I$ must have been awake for a collective total of at least $\Omega(B/\log n)$ rounds (because each message of $O(\log n)$ bits must be received by a node that is awake in $P$).\footnote{Note that all the $B$ bits  cannot solely come through a row path of length $c$, since we are restricting $T\in o(c)$. In other words,  each of the  bits has to go through at least one node in $I$.} 
 This implies that at least one node in $I$ must have been awake for $\Omega(B/\log^2 n)$ rounds because $|I| \in O(\log n)$ and the number of edges incident to nodes in $I$ is also $O(\log n)$ (since nodes in $I$ are of constant degree). 

Since the node awake time is $o(r/\log^2 n)$ for $P$, $B$ must be $o(r/\log n)$, accounting for the fact that each node awake time can potentially transmit $O(\log n)$ bits. But this contradicts the fact that \sd\ requires $\Omega(r/\log n)$ bits in the synchronous communication complexity model.
\end{proof}

\noindent {\bf Reduction from  \dsd\ $\to$ \css.} 

We now show a reduction from \dsd\ $\to$ \css\ by encoding a given $\dsd$ problem instance as a $\css$ instance in the following manner. Recall that in \dsd, Alice and Bob have bit strings $a$ and $b$, respectively, of length $r$ each. Furthermore, recall that Alice (resp., Bob) is connected to first node (resp., last node) of each $p_\ell$, $2 \le \ell \le r$. 

\begin{lemma}
Suppose there is a protocol $Q$ in the sleeping model that solves \css\  on $G_{rc}$ with $c \in \omega(\sqrt{n}\log^3 n)$ and $r \in o(\sqrt{n}/\log^3 n)$ in $T$ rounds such that $T \in o(c)$. Then, the node awake time of $Q$ must be at least $\Omega(r/\log^2 n)$. This holds even if $Q$ is randomized and has an error probability that is bounded by a small constant $\epsilon > 0$.
\label{lem:css1}
\end{lemma}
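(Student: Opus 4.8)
The plan is to reduce \dsd\ to \css\ on the same graph $G_{rc}$ by choosing which edges to mark so that the marked subgraph is connected spanning \emph{if and only if} $\disj(a,b) = 1$; then any \css\ protocol $Q$ with the claimed parameters yields a \dsd\ protocol with the same round and awake complexity, and the bound follows from Lemma~\ref{lem:dsd}. First I would specify the marking: for each row index $\ell \in \{2,\ldots,r\}$, Alice marks the edge from her node to the first node of $p_\ell$ exactly when $a_\ell = 1$ (i.e., when $a_\ell = 0$ she leaves it unmarked), and symmetrically Bob marks the edge from his node to the last node of $p_\ell$ exactly when $b_\ell = 1$. All other edges of $G_{rc}$ — the edges along $p_1$, the tree on $X$, the vertical ``ladder'' edges from each $x \in X$ to the corresponding node of each $p_\ell$, and the edges of $p_1$ incident to Alice and Bob — are always marked. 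This marking is computed using only local information: Alice knows $a$ and the incident edge set, Bob knows $b$.

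The key structural claim is the \emph{if and only if}. Since the bottom path $p_1$, the tree over $X$, and all vertical ladder edges are always marked, every node of $G_{rc}$ is already connected through $p_1$; the only question is whether the two ``long'' edges incident to Alice and Bob for each row $p_\ell$ ($\ell \ge 2$) create a problematic cycle/over-connectivity — wait, \css\ only asks for connectivity, not acyclicity, so I must instead make the \emph{absence} of a marked path cause disconnection. To do that I would \emph{remove from the always-marked set} the vertical ladder edges incident to the interior of each $p_\ell$, $\ell \ge 2$, and instead keep $p_\ell$ connected to the rest \emph{only} via its two endpoint edges to Alice and Bob. Then row $p_\ell$ is attached to the (always-connected) bottom structure iff at least one of $a_\ell$, $b_\ell$ is $1$; it is a disconnected component iff $a_\ell = b_\ell = 0$. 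So the marked subgraph is connected iff for every $\ell$ we have $a_\ell \vee b_\ell = 1$, i.e., there is no $\ell$ with $a_\ell = b_\ell = 0$. This computes the complement of what we want, so I would instead complement one player's bits (Alice marks the endpoint edge of $p_\ell$ iff $a_\ell = 0$, Bob iff $b_\ell = 0$); then disconnection of $p_\ell$ happens iff $a_\ell = b_\ell = 1$, and the whole marked subgraph is connected iff there is no coordinate with $a_\ell = b_\ell = 1$, i.e., iff $\disj(a,b) = 1$. (I also need the marked subgraph to be spanning: it includes every vertex, and when connected it spans $G_{rc}$; this is immediate from the construction.)

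Given this reduction, Alice and Bob run $Q$ on $G_{rc}$ with the marking above, simulating their own nodes plus (following exactly the simulation of Lemma~\ref{lem:dsd}) the cut sets $R_j$; at the end, whichever node of $G_{rc}$ outputs the \css\ answer reports it, and that answer equals $\disj(a,b)$. If $Q$ runs in $T \in o(c)$ rounds with awake complexity $o(r/\log^2 n)$, the same simulation argument as in Lemma~\ref{lem:dsd} shows Alice and Bob exchange only $o(r/\log n)$ bits, contradicting the $\Omega(r/\log n)$ lower bound for \sd\ in the synchronous model. Hence the awake complexity of $Q$ is $\Omega(r/\log^2 n)$, and the randomized case carries over verbatim since the \sd\ lower bound is robust to constant error. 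The main obstacle I anticipate is getting the marking exactly right so that connectivity of the marked subgraph cleanly encodes $\disj$ while keeping the always-marked ``backbone'' rich enough that diameter and the simulation argument are unaffected — in particular making sure the parts of $G_{rc}$ that must stay connected for the simulation (the bottom path $p_1$ and the tree $I$ over $X$) are always in the marked set, while the rows $p_\ell$ hang off only via the two player-controlled edges so their connectivity is the sole carrier of information.
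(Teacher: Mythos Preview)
Your marking construction is correct and is exactly the intended reduction (the paper only sketches it in one sentence before the lemma): unmark the vertical ladder edges so that each $p_\ell$ ($\ell\ge 2$) hangs off the backbone only via the two endpoint edges controlled by Alice and Bob, use the complemented convention (mark iff the bit is $0$) so that $p_\ell$ is severed precisely when $a_\ell=b_\ell=1$, and keep $p_1$, the tree over $X$, and the internal edges of every $p_\ell$ always marked. Then the marked edge set is a connected spanning subgraph iff $\disj(a,b)=1$.

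One cleanup: in your last paragraph you conflate two layers of the argument. You do \emph{not} need to re-run the $R_j$-cut simulation here. The reduction lives entirely in the sleeping model: from $Q$ you build a \dsd\ protocol $P$ on $G_{rc}$ by having every node compute the marks on its incident edges locally (all marks are input-independent except the Alice/Bob endpoint edges; one extra awake round lets the first and last node of each $p_\ell$ learn those), and then running $Q$ verbatim and outputting its \css\ answer as the \dsd\ answer. This $P$ has round complexity $T+O(1)\in o(c)$ and the same awake complexity as $Q$ up to an additive constant. Now invoke Lemma~\ref{lem:dsd} on $P$ as a black box; the two-party simulation is already packaged inside that lemma and should not be repeated.
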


\noindent {\bf Reduction from \css\ $\to$ \mst.} Recall that \css\ is a decision problem that requires a subset of the edges in the network graph $G_{rc}$ to be marked; we are to report whether the marked edges form a spanning subgraph of $G_{rc}$. \mst\ on the other hand is a construction problem. It takes a weighted network graph and computes the minimum spanning tree. A reduction from \css\ to \mst\ can be constructed by assigning a weight of 1 for marked edges in the \css\ instance and $n$ for all other edges and asking if any edge of weight $n$ is included in the MST. This leads us to the following lemma.

\begin{lemma}
Suppose there is a protocol $M$ in the sleeping model that solves \mst\  on $G_{rc}$ with $c \in \omega(\sqrt{n}\log^3 n)$ and $r \in o(\sqrt{n}/\log^2 n)$ in $T$ rounds such that $T \in o(c)$. Then, the node awake time of $M$ must be at least $\Omega(r/\log^2 n)$. This holds even if $M$ is randomized and has an error probability that is bounded by a small constant $\epsilon > 0$.
\label{lem:css}
\end{lemma}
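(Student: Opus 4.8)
The plan is to implement the weight reduction from \css\ to \mst\ already indicated in the text and then invoke Lemma~\ref{lem:css1}. Given a \css\ instance on $G_{rc}$ with marked edge set $F$, I would build a weighted copy of $G_{rc}$ in which the edges of $F$ get small distinct weights and all other edges get large distinct weights, the gap being large enough that any spanning tree using only $F$-edges is strictly lighter than any spanning tree using at least one non-$F$ edge (e.g.\ $F$-edges get weights in $\{1,\dots,m\}$ and non-$F$ edges get weights exceeding $nm$; distinctness keeps the MST unique as the model requires). Since every spanning tree of $G_{rc}$ has exactly $n-1$ edges, this yields the equivalence driving the reduction: $F$ is a connected spanning subgraph of $G_{rc}$ if and only if the unique MST of the weighted graph contains no heavy (non-$F$) edge.

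Next I would convert a protocol $M$ for \mst\ into one for \css\ at essentially no extra cost. Run $M$ on the weighted instance; after $T$ rounds every node knows, for each incident edge, whether it is an MST edge, and the awake complexity is exactly that of $M$. The only gap is that \css\ requires \emph{some} node to announce the global bit ``the MST uses a heavy edge'', while $M$ only provides local certificates. This is closed by a free aggregation step: the topology of $G_{rc}$ is common knowledge, so every node can locally (without communication) fix a BFS tree of $G_{rc}$ and compute its own parent and depth in it; this tree has depth $\Theta(D)=\Theta(c/\log n)$, so an OR-convergecast of the bits ``I am incident to a heavy MST edge'', paced by the known round counter so that each node is awake only $O(1)$ rounds (hear from children, report to parent), delivers the answer to the root in $O(D)$ additional rounds with $O(1)$ additional awake rounds per node. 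Hence, if $M$ solved \mst\ on $G_{rc}$ in $T\in o(c)$ rounds with awake complexity $o(r/\log^2 n)$, the composite protocol would solve \css\ on $G_{rc}$ in $T+O(D)=o(c)$ rounds (using $D\in\Theta(c/\log n)$, cf.\ Observation~\ref{obs:exists}) with awake complexity $o(r/\log^2 n)+O(1)=o(r/\log^2 n)$ and unchanged error probability, contradicting Lemma~\ref{lem:css1}.

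An equivalent route, which avoids \css\ entirely and matches the proof style of Lemma~\ref{lem:dsd}, is to replay the Alice--Bob cut-sliding simulation directly for $M$ on the weighted instance: Alice simulates the shrinking set $R_{c-t}$ in round $t$, borrowing $B$ bits from Bob; at the end Alice (and, symmetrically, Bob) knows the MST status of the heavy edges incident to her, the two exchange $O(1)$ bits, and together they output $\disj(a,b)$ via the equivalence above; an awake complexity $o(r/\log^2 n)$ forces $B\in o(r/\log n)$, contradicting the $\Omega(r/\log n)$ synchronous lower bound for \sd. I expect the one genuine obstacle to be exactly this output-semantics mismatch --- \mst\ produces a distributed/local certificate whereas \css\ (and ultimately \sd) needs a single global bit --- together with the bookkeeping that the aggregation does not erode the awake bound; both are handled by exploiting that the lower-bound instance is fully known to all nodes (so aggregation needs no graph learning and only $O(1)$ awake rounds) and that $D=o(c)$ keeps the round budget. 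The remaining ingredients --- the weight-gap argument, MST uniqueness, and the identification of $D$ with $\Theta(c/\log n)$ --- are routine.
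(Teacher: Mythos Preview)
Your proposal is correct and follows essentially the same reduction as the paper: assign light weights to marked edges and heavy weights to the rest, then decide \css\ by testing whether any heavy edge lands in the MST, and invoke Lemma~\ref{lem:css1}. You are in fact more careful than the paper's one-line sketch --- making the weights distinct to preserve MST uniqueness and explicitly handling the local-to-global output mismatch via an $O(1)$-awake convergecast over a precomputed BFS tree of depth $\Theta(c/\log n)$ --- and your alternative direct Alice--Bob simulation is a valid shortcut the paper does not spell out.
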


Thus, we can conclude with the following theorem.

\begin{theorem} 
Consider positive integers $r$ and $c$ such that $rc + \Theta(\log n) = n$ (the network size) and $c \in \omega(\sqrt{n}\log^3 n)$. (Thus, $c$ can range between $\tilde{\Omega}(\sqrt{n})$ to $\tilde{O}(n)$ with $r \in \Theta(n/c)$.) 
Then there exists graphs with diameter at least $\omega(\sqrt{n}\log^3 n)$ such that
any randomized algorithm  for \mst\ in the sleeping model that runs in time $T \in o(c)$ rounds and guaranteed to compute the MST with probability at least $1 - \epsilon$ for  any small fixed $\epsilon > 0$ has worst case awake complexity at least $\Omega(r/\log^2 n)$.  
\label{thm:lowerbound}
\end{theorem}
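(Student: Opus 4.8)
Formally the theorem is immediate once Lemma~\ref{lem:css} and Observation~\ref{obs:exists} are in hand: the former gives the awake lower bound $\Omega(r/\log^2 n)$ for $\mst$ on $G_{rc}$, and the latter certifies that $G_{rc}$ has the claimed diameter and admits an MST algorithm running in $o(c)$ rounds, so the conditional regime ``$T \in o(c)$'' is non-empty and the statement has content. The substance therefore lies in the chain $\sd \to \dsd \to \css \to \mst$ on the graph family $G_{rc}$ of Figure~\ref{fig:lb}, and the plan is to (i) fix the parameter regime, (ii) push a hypothetical fast, awake-cheap MST algorithm down to a cheap $\sd$ protocol through the three reductions, and (iii) contradict the communication-complexity lower bound for $\sd$. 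Concretely I would fix positive integers $r,c$ with $rc + \Theta(\log n) = n$ and $c \in \omega(\sqrt n \log^3 n)$, so $r \in \Theta(n/c) \subseteq o(\sqrt n /\log^3 n)$; by Observation~\ref{obs:exists} the diameter is $D \in \Theta(c/\log n) = \omega(\sqrt n \log^2 n)$, tunable up to $\tilde\Theta(n)$ by enlarging $c$ within the regime.

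\textbf{The core step: congestion forces awake time ($\sd \to \dsd$).} Suppose toward a contradiction that some randomized MST algorithm $M$ on $G_{rc}$ runs in $T \in o(c)$ rounds, errs with probability at most $\epsilon$, and has worst-case awake complexity $o(r/\log^2 n)$. Peeling off the two routine reductions below, it suffices to derive a contradiction from such an algorithm for $\dsd$. Here I would have Alice simulate the shrinking cut side $R_{c-t}$ in round $t$ (as in Lemma~\ref{lem:dsd}): since Alice already simulated all of $p_\ell^{c-t+1}$ in the previous round, the only messages she cannot generate herself are those that $M$ sends from $\bar R_{c-t}$ into the $O(\log n)$ internal tree nodes $I$, and Bob supplies exactly these. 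If $B$ bits are borrowed over the whole run, then nodes of $I$ were collectively awake for $\Omega(B/\log n)$ rounds (each $O(\log n)$-bit message must be received by a node that is awake in $M$), and since $|I|$ and the number of edges incident to $I$ are both $O(\log n)$, some single node of $I$ is awake $\Omega(B/\log^2 n)$ rounds. The assumed awake bound then forces $B \in o(r/\log n)$, so Alice and Bob solve $\sd$ on $r$-bit inputs exchanging $o(r/\log n)$ bits, contradicting the $\Omega(r/\log n)$ lower bound for $\sd$ in the synchronous communication-complexity model (obtained from the asynchronous $\Omega(r)$ bound via the Synchronous Simulation Theorem, with the extra logarithmic factor already absorbed).

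\textbf{The two routine reductions.} For $\dsd \to \css$ (Lemma~\ref{lem:css1}) I would encode Alice's bit $a_\ell$ (resp.\ Bob's bit $b_\ell$) by the presence/absence of the edge from Alice (resp.\ Bob) to the endpoint of path $p_\ell$, mark all path edges and all tree edges, and observe that the marked edges span and connect $G_{rc}$ exactly when the two inputs are disjoint; a $\css$ decision thus answers $\disj(a,b)$ with no change in round or awake complexity. For $\css \to \mst$ I would give weight $1$ to every marked edge and weight $n$ to every other edge: the marked edges form a connected spanning subgraph iff the (unique) MST uses no weight-$n$ edge, and each node can check its incident MST edges locally, again with no overhead. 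Composing the three reductions turns $M$ into the cheap $\dsd$ algorithm used above, completing the contradiction and establishing the $\Omega(r/\log^2 n)$ awake lower bound for $\mst$ on $G_{rc}$; since $T = \tilde O(D) = \tilde O(c)$ is achievable while the awake time is $\tilde\Omega(r) = \tilde\Omega(n/c)$, the product of round and awake complexity is $\tilde\Omega(c \cdot n/c) = \tilde\Omega(n)$ for every admissible $c$, which is Theorem~\ref{thm:lowerbound}.

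\textbf{Main obstacle.} The delicate point is the $\sd \to \dsd$ step: one must verify that Alice's cut-simulation is self-consistent (she never needs a message she cannot produce), that \emph{all} $B$ borrowed bits genuinely traverse $I$ rather than routing entirely along a single length-$c$ row path — this is exactly where the hypothesis $T \in o(c)$ is used — and that the two logarithmic losses (the pigeonhole over the $O(\log n)$ nodes and incident edges of $I$, and the synchronous-simulation overhead) are the only slack, so the final bound is $\Omega(r/\log^2 n)$ and not weaker. Everything else is bookkeeping about the cut structure of $G_{rc}$.
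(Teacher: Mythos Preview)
Your proposal is correct and follows the paper's approach essentially verbatim: the same graph family $G_{rc}$, the same reduction chain $\sd \to \dsd \to \css \to \mst$, the same cut-shrinking simulation of Lemma~\ref{lem:dsd} with the pigeonhole over the $O(\log n)$ tree nodes $I$, and the same weight-$1$/weight-$n$ encoding for $\css \to \mst$. The only addition is that you spell out the $\dsd \to \css$ encoding that the paper merely states as Lemma~\ref{lem:css1}; just be sure to leave the cross edges from the intermediate $x \in X \setminus \{\text{Alice},\text{Bob}\}$ to the upper rows unmarked, so that whether row $p_\ell$ is connected to $p_1$ depends solely on $a_\ell$ and $b_\ell$.
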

As an illustrative example of this trade off, let $c = n^{3/4}$ and $r \in \Theta(n^{1/4})$. Consider an MST algorithm that takes $o(n^{3/4})$ rounds; Observation~\ref{obs:exists} guarantees  the existence of such an algorithm. Then, Theorem~\ref{thm:lowerbound} implies that its awake complexity is at least $\tilde{\Omega}(n^{1/4})$.

%---------------------------------------------------------------------
\section{MST Algorithms with Optimal Awake Complexity}\label{sec:algorithms}
In this section, we present our algorithms to construct an MST that take optimal awake  complexity.  We first present a toolbox of procedures which are used repeatedly in the subsequent algorithms.  We then develop a randomized algorithm that creates an MST in optimal awake time. We then show how to construct a deterministic algorithm that also is optimal in awake time, however, we pay a cost in terms of slower run time. 

%---------------------------------------------------------------------
\subsection{Main Ideas \& Useful Procedures}\label{sec:prelims}
Both of the algorithms we develop in this section can be seen as variations of the classic GHS algorithm to find the MST, adapted to optimize the awake time of the algorithm. Recall that each phase of the GHS algorithm consists of two steps. Step (i) corresponds to finding the minimum outgoing edges (MOEs) for the current fragments and step (ii) involves merging these fragments.

Our algorithms work in phases where at the end of each phase, we ensure that the original graph has been partitioned into a forest of node-disjoint trees that satisfy the following property. For each such tree, all nodes within the tree know the ID of the root of the tree (called \textit{fragment ID}), 
the IDs of their parents and children in the tree, if any, and their {\em distance from the root} (note that it is the hop distance, ignoring the weights) of that tree. We call each such tree a \textit{Labeled Distance Tree (LDT)} and a forest of such trees a \textit{Forest of Labeled Distance Trees (FLDT)}. By the end of the algorithms we design, our goal is to have the FLDT reduced to just one LDT which corresponds to the MST of the original graph. The challenge is to construct an LDT (which will also be an MST) in an awake-optimal manner.

The purpose of maintaining such a structure is that we know how to design fast awake procedures to propagate information within an LDT, which we describe below. Specifically, we know how to set up a schedule of rounds for nodes to wake up in such that (i) information can be passed from the root to all nodes in an LDT in $O(1)$ awake rounds, (ii) information can be passed from a child to the root in $O(1)$ awake rounds, and 
(iii) information can be spread from one LDT to its neighbors in $O(1)$ awake rounds. There are several well-known procedures typically associated with GHS such as broadcast, upcast-min, etc.\ 
that make use of such information propagation. Once we have an LDT, it is easy to implement these
procedures in constant awake time. 
For the processes described below, it is assumed that the initial graph has already been divided into an FLDT where each node $u$ knows the ID of the root, $root$, of the tree it belongs to (i.e., the fragment ID of the tree), the IDs of $u$'s parent and children in that tree, and $u$'s distance to $root$. First of all, we define a transmission schedule that is used in each of the procedures and will be directly utilized in the algorithms. Then we briefly mention the procedures typically associated with GHS.

\paragraph{Transmission schedule of nodes in a tree.}
\begin{sloppypar}
We describe the function $\TS(root, u, n)$, which takes a given node $u$ and a tree rooted at $root$ (to which $u$ belongs) and maps them to a set of rounds in a block of $2n+1$ rounds. $\TS(root,u,n)$ may be used by node $u$ to determine which of the $2n+1$ rounds to be awake in.  
Consider a tree rooted at $root$ and a node $u$ in that tree at distance $i$ from the root. For ease of explanation, we assign names to each of these rounds as well. For all non-root nodes $u$, the set of rounds that $\TS(root,u,n)$ maps to includes rounds $i,i+1, n+1, 2n-i+1$, and $2n-i+2$ with corresponding names $\DOWNRECEIVE, \DOWNSEND, \SIDESENDRECEIVE, \UPRECEIVE$, and $\UPSEND$, respectively. $\TS(root,root,n)$ only maps to the set containing rounds $1$, $n+1$, and $2n+1$ with names $\DOWNSEND$, $\SIDESENDRECEIVE$, and $\UPRECEIVE$, respectively.\footnote{In this description, we assumed that $\TS(\cdot, \cdot, n)$ was started in round $1$. However, if $\TS(\cdot, \cdot, n)$ is started in round $r$, then just add $r-1$ to the values mentioned here and in the previous sentence.}
\end{sloppypar}

\paragraph{Broadcasting a message in a tree.}
We describe the procedure $\DOWNCAST(n)$ that may be run by all nodes in a tree in order for a message $msg$ to be broadcast from the root of the tree to all nodes in the tree. Consider a block of $2n+1$ rounds and let all nodes $u$ in the tree rooted at $root$ utilize $\TS(root,u,n)$ to decide when to be awake. For each $u$, in the round that corresponds to $\DOWNRECEIVE$, $u$ listens for a new message from its parent. In the round that corresponds to $\DOWNSEND$, if $u$ transmits the message it has to its children in the tree.

\begin{observation}\label{obs:downcast}
Procedure $\DOWNCAST(n)$, run by all nodes in a tree, allows the root of that tree to transmit a message to each node in its tree in $O(n)$ running time and $O(1)$ awake time.
\end{observation}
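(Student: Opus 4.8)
The plan is to establish the three claims separately: correctness of the broadcast, the $O(n)$ round bound, and the $O(1)$ awake bound. Correctness is the only part with any content, and it follows from a short induction on hop-distance from $root$.

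For correctness, I would prove by induction on $i \ge 0$ that every node at distance exactly $i$ from $root$ holds $msg$ by the end of round $i$ of the $2n+1$-round block. The base case $i = 0$ is immediate, since the root begins the procedure already holding $msg$. For the inductive step, fix a node $v$ at distance $i+1$ and let $u$ be its parent, at distance $i$. The crux of the argument is the alignment of the schedule: $u$'s $\DOWNSEND$ round is round $i+1$ (it is round $(\text{dist of }u)+1$ for a non-root, and round $1 = 0+1$ for the root), while $v$'s $\DOWNRECEIVE$ round is round $(\text{dist of }v) = i+1$. Hence in round $i+1$ both endpoints of the tree edge $(u,v)$ are awake; by the inductive hypothesis $u$ holds $msg$ at that point and transmits it to all its children, so $v$ receives $msg$ in round $i+1$. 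As $v$ was arbitrary among nodes at distance $i+1$, the induction goes through, and since the tree is connected and spans the (at most $n$) nodes, every node eventually receives $msg$.

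For the running time, I would observe that a tree on at most $n$ nodes has depth at most $n-1$, so by the claim above every node receives $msg$ within the first $n$ rounds; the whole block is $2n+1 = O(n)$ rounds in any case. For the awake complexity, note that within $\DOWNCAST(n)$ each node $u$ is instructed to be awake only in its $\DOWNRECEIVE$ and $\DOWNSEND$ rounds (the other rounds of $\TS(root,u,n)$ are reserved for the convergecast/side procedures), i.e., in at most two rounds, giving $O(1)$ awake time.

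The only real subtlety — the ``main obstacle'', such as it is — is verifying that the indices in $\TS$ line up so that a parent transmits in precisely the round each of its children is listening; once that is checked the rest is routine. I would also note the boundary cases for completeness: a leaf executes its $\DOWNSEND$ step vacuously, and the root has no $\DOWNRECEIVE$ step (consistent with $\TS(root,root,n)$), since it already holds $msg$.
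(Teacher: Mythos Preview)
Your proposal is correct. The paper does not actually prove this observation; it is stated immediately after the description of $\DOWNCAST(n)$ and treated as self-evident from the definition of $\TS$. Your induction on hop-distance, together with the explicit check that a parent's $\DOWNSEND$ round coincides with each child's $\DOWNRECEIVE$ round, is exactly the verification the paper leaves implicit, and the awake and round bounds follow as you say.
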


\paragraph{Upcasting the minimum value in a tree.}
We describe the procedure $\UPCASTMIN(n)$ that may be run by all nodes in a   tree rooted at $root$ in order to propagate the minimum among all values held by the nodes of that tree up to $root$. Consider a block of $2n+1$ rounds and let all nodes $u$ in the tree rooted at $root$ utilize $\TS(root,u,n)$ to decide when to be awake. If $u$ is a leaf, then in the round that corresponds to $\UPSEND$, $u$ transmits its message to its parent. For each $u$ that is not a leaf, in the round that corresponds to $\UPRECEIVE$, $u$ listens for messages from its children. In the round that corresponds to $\UPSEND$, $u$ compares the messages it previously received in its $\UPRECEIVE$ round to its current message, if any, and stores the minimum value. It then transmits this minimum value to its parent in the tree.

\begin{observation}\label{obs:upcastmin}
Procedure $\UPCASTMIN(n)$, run by all nodes in a tree, allows the smallest value among all values held by the nodes, if any, to be propagated to the root of the tree in $O(n)$ running time and $O(1)$ awake time.
\end{observation}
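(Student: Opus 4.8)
The plan is to prove the two claimed bounds --- correctness together with the $O(n)$ running time, and the $O(1)$ awake complexity --- by tracking the round indices assigned by $\TS(root,\cdot,n)$ and exhibiting a ``wave'' of messages that propagates from the leaves up to the root, one tree level per round, mirroring the argument for $\DOWNCAST$ in Observation~\ref{obs:downcast} but run in the opposite direction. The key structural fact I would establish first is an \emph{alignment property} of the schedule: if $v$ is a child of $u$, with $v$ at distance $i$ from $root$ and $u$ at distance $i-1$, then $v$'s $\UPSEND$ round is $2n-i+2$ while $u$'s $\UPRECEIVE$ round is $2n-(i-1)+1 = 2n-i+2$, so these two rounds coincide. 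Hence in round $2n-i+2$ both endpoints of the edge $(u,v)$ are awake --- $v$ sends on this edge and $u$ listens on it --- and therefore the message $v$ transmits in that round is actually received by $u$ (for $u = root$ the same computation gives round $2n+1$). Since the tree has at most $n$ nodes, every distance satisfies $i \le n-1 < n$, so within a single node's schedule the rounds $i, i+1, n+1, 2n-i+1, 2n-i+2$ are strictly increasing and distinct; in particular no node must both send on its parent-edge and receive on a child-edge in the same round, which is exactly what the block of $2n+1$ rounds buys us.

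Given the alignment property, correctness follows by induction on the distance $i$, from the maximum depth down to $0$. I would prove the invariant: each node $w$ at distance $i$ transmits, in its $\UPSEND$ round $2n-i+2$, the minimum of the values held by nodes in the subtree rooted at $w$ (and transmits nothing if no node in that subtree holds a value). The base case is the leaves, which send their own value in their $\UPSEND$ round. For the inductive step, a non-leaf $w$ at distance $i$ has all children at distance $i+1$, and by the alignment property each child sends its subtree-minimum in round $2n-(i+1)+2 = 2n-i+1$, which is precisely $w$'s $\UPRECEIVE$ round; so $w$ is awake, collects every child's subtree-minimum, and in its $\UPSEND$ round outputs the minimum of these with its own value, i.e.\ the minimum over its subtree. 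Applying the invariant to the children of $root$ (at distance $1$, sending in round $2n+1 = $ the root's $\UPRECEIVE$ round), $root$ receives the minimum of each of its subtrees and, taking the minimum with its own value, holds the global minimum by the end of the block. The running time is the block length $2n+1 = O(n)$. For the awake bound, each non-root node is awake only in the (at most) five rounds $\TS(root,\cdot,n)$ maps it to --- and the root only in three --- so every node is awake $O(1)$ times; in fact, for this procedure only $\UPRECEIVE$ and $\UPSEND$ do anything, so two awake rounds per node suffice.

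I do not expect a serious obstacle here: the content is entirely the bookkeeping of round indices and the verification of the alignment property. The one point I would state carefully is the separation condition --- that the tree depth is strictly less than $n$, which holds trivially since the tree has at most $n$ nodes --- so that a node's schedule rounds do not collide and the up-wave is guaranteed to reach the root within the block. I would also note explicitly that, because we work in the CONGEST (not radio) model, a parent listening on all of its child-edges during its $\UPRECEIVE$ round receives every child's message without collision, so no additional coordination among siblings is needed.
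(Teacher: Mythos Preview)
Your proposal is correct and matches the paper's intended reasoning; the paper itself states this as an observation without proof, relying on the procedure description and the transmission schedule $\TS(\cdot,\cdot,n)$ to make the claim self-evident. Your verification of the alignment property (that a child's $\UPSEND$ round $2n-i+2$ coincides with its parent's $\UPRECEIVE$ round) and the induction on depth simply spell out what the paper leaves implicit.
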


\paragraph{Transmitting a message between neighboring nodes in a given tree.}
We describe the procedure $\TRANSMITNEIGHBOR(n)$ that may be run by all nodes in a tree rooted at $root$ so that each node in a tree may transmit a message to its parent and children in the tree. Consider a block of $2n+1$ rounds and let all nodes $u$ in the tree rooted at $root$ utilize $\TS(root,u,n)$ to decide when to be awake. Node $u$ transmits its message in the rounds corresponding to $\DOWNSEND$ and $\UPSEND$. In rounds $\DOWNRECEIVE$ and $\UPRECEIVE$, $u$ listens for messages from its parent and children.

\begin{observation}\label{obs:trasnmit-neighbor}
Procedure $\TRANSMITNEIGHBOR(n)$, run by all nodes in a tree, allows each node in the tree to transmit a message, if any, to its parent and children in $O(n)$ running time and $O(1)$ awake time.
\end{observation}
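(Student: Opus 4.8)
The plan is to read the claim off directly from the definition of $\TS(root,\cdot,n)$ and the behavior prescribed by $\TRANSMITNEIGHBOR(n)$, since it amounts to a matching of round indices. The resource bounds are immediate: all prescribed activity lies within one block of $2n+1$ rounds, giving running time $O(n)$; and $\TS(root,u,n)$ assigns each non-root node at most the five rounds $i,\, i+1,\, n+1,\, 2n-i+1,\, 2n-i+2$ (where $i$ is its distance from $root$) and the root only the three rounds $1,\, n+1,\, 2n+1$, so every node is awake $O(1)$ times.

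For correctness I would fix an arbitrary tree edge $(u,v)$ with $u$ the parent (at distance $i-1$ from $root$) and $v$ the child (at distance $i$), and check that on this edge the sender's transmit round equals the receiver's listen round in both directions. Downward: $u$ transmits $u$'s message in its $\DOWNSEND$ round, which is round $i$ (this holds also when $u = root$, whose $\DOWNSEND$ round is $1$ and whose children lie at distance $1$), and $v$'s $\DOWNRECEIVE$ round is round $i$ as well, so $v$ receives $u$'s message. Upward: $v$ transmits $v$'s message in its $\UPSEND$ round, which is round $2n - i + 2$, and $u$'s $\UPRECEIVE$ round is $2n - (i-1) + 1 = 2n - i + 2$ (again also when $u = root$, whose $\UPRECEIVE$ round is $2n+1 = 2n - i + 2$ at $i = 1$), so $u$ receives $v$'s message. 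Since in the \CONGEST\ model a node receives a distinct message along each incident edge per round, a node listening in its $\DOWNRECEIVE$ round collects its parent's message and a node listening in its $\UPRECEIVE$ round collects all of its children's messages simultaneously. Quantifying over all tree edges, every node's message reaches its parent and each of its children, which is the stated conclusion.

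I expect the only delicate points to be the boundary cases — the root, which has the shorter schedule and no parent, and the leaves, whose $\DOWNSEND$ transmissions are vacuous because they have no children — together with the implicit synchronization assumption that all nodes begin executing $\TS(root,\cdot,n)$ in the same round, so that the distance-indexed schedules line up (with the uniform shift recorded in the definition of $\TS$ applied if the block does not start in round $1$). None of this requires any real computation; the argument is just a matching of round numbers, entirely parallel to the reasoning behind Observations~\ref{obs:downcast} and~\ref{obs:upcastmin}.
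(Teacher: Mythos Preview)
Your proposal is correct and is precisely the intended argument: the paper states this as an Observation without proof, treating it as immediate from the definition of $\TS(root,\cdot,n)$ and the description of $\TRANSMITNEIGHBOR(n)$, and your write-up simply unpacks that round-matching check explicitly (including the root and leaf boundary cases).
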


\paragraph{Transmitting a message between nodes of adjacent  trees.}
We describe the procedure $\TRANSMITADJACENT(n)$ that may be run by all nodes in a   tree rooted at $root$ in order to transmit a message held by nodes in the tree  to their neighbors in adjacent  trees.\footnote{We assume that all nodes in the graph belong to some  tree and are running the algorithm in parallel. We ensure this assumption is valid whenever we call this procedure.} Consider a block of $2n+1$ rounds and let each node $u$ in the tree rooted at $root$ utilize $\TS(root,u,n)$ to decide when to be awake. In the round that corresponds to $\SIDESENDRECEIVE$, $u$ transmits its message to its neighbors not in the tree. It also listens for messages from those neighbors.

\begin{observation}\label{obs:transmit-adjacent}
Procedure $\TRANSMITADJACENT(n)$, run by all nodes in a tree, allows each node in a tree to transfer a message, if any, to neighboring nodes belonging to other trees in $O(n)$ running time and $O(1)$ awake time.
\end{observation}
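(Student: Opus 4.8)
The plan is to exploit the one structural feature of the schedule $\TS$ that makes cross-tree communication possible: the round named $\SIDESENDRECEIVE$ is the single round $n+1$ for \emph{every} node, regardless of whether it is a root, regardless of which tree it belongs to, and regardless of its distance from its root. (This is visible directly from the definition of $\TS(root,u,n)$ for non-root $u$ and of $\TS(root,root,n)$ for the root.) Since, by the standing assumption, every node in $G$ runs $\TRANSMITADJACENT(n)$ in parallel over a common block of $2n+1$ rounds, all nodes of all trees are therefore simultaneously awake in round $n+1$ of that block.

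First I would record the complexity bounds, which are immediate. The procedure occupies a block of $2n+1 = O(n)$ rounds, giving the claimed $O(n)$ running time. Each node $u$ is awake only in the (at most five) rounds to which $\TS(root,u,n)$ maps, and for $\TRANSMITADJACENT(n)$ it in fact only needs to be awake in its $\SIDESENDRECEIVE$ round; hence the awake time is $O(1)$ (indeed a single round). Then I would argue correctness: fix any edge $(u,v)$ of $G$ whose endpoints lie in different trees, say $u$ in the tree rooted at $root_u$ and $v$ in the tree rooted at $root_v$. By the previous paragraph, $u$ is awake in round $n+1$ (its $\SIDESENDRECEIVE$ round under $\TS(root_u,u,n)$) and $v$ is awake in round $n+1$ (its $\SIDESENDRECEIVE$ round under $\TS(root_v,v,n)$). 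In that round the CONGEST model permits each of them to send an $O(\log n)$-bit message on edge $(u,v)$ and simultaneously receive the message sent by the other endpoint. Thus every node delivers its message, if any, to each of its neighbors in adjacent trees within the single round $n+1$, which finishes the proof.

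The only subtlety — and the closest thing to an obstacle — is block alignment across trees: the identity $\SIDESENDRECEIVE = n+1$ is useful only if all nodes count round $n+1$ from the same starting round. This is exactly the parallel-execution premise under which the procedure is invoked (cf.\ the footnote accompanying $\TRANSMITADJACENT$ and the remark that starting $\TS(\cdot,\cdot,n)$ in round $r$ merely shifts all of its rounds by $r-1$), so beyond stating it no further work is required.
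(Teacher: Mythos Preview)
Your proposal is correct and matches the paper's intended (but unwritten) argument: the observation is stated without proof in the paper, and the only point it relies on is precisely the one you identify, namely that $\SIDESENDRECEIVE$ is round $n{+}1$ for every node in every tree, so all cross-tree neighbors are simultaneously awake in that single round. Your treatment of the block-alignment caveat is also exactly the assumption the paper makes explicit in the footnote and in the synchronization remark at the end of Section~\ref{sec:prelims}.
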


In the course of our algorithms, we ensure that nodes stay synchronized, i.e., time can be viewed in blocks of $2n+1$ rounds such that all nodes start their first schedule at the same time (and end them at the same time) and continue to start (and end) future schedules at the same time.

%---------------------------------------------------------------------
\subsection{Awake-Optimal Randomized Algorithm}\label{sec:mst-optimal-awake-time}
\textbf{Brief Overview.} We describe Algorithm~$\RANDMST$. 
We use a synchronous variant of the classic GHS algorithm (see e.g., \cite{eatcs,peleg}) to find the MST of the graph, where we modify certain parts of the algorithm in order to ensure that the awake time is optimal. Each phase of the classic GHS algorithm  consists of two steps. Step (i) corresponds to finding the minimum outgoing edges (MOEs) for the current fragments and step (ii) involves merging these fragments. 
In the current algorithm, we introduce a new step between those two steps. Specifically, we utilize randomness  to restrict which MOEs are considered ``valid'' for the current phase by having each fragment leader flip a coin and only considering MOEs from fragments whose leaders flip tails to those that flip heads. This restriction ensures that the diameter of any subgraph formed by fragments and valid MOEs is {\em constant}. This restriction, coupled with careful analysis that includes showing that the number of fragments decreases by a constant factor in each phase on expectation, guarantees that the MST of the original graph is obtained after $O(\log n)$ phases with high probability.

\noindent \textbf{Technical Challenges.} As mentioned above, one of the key changes we make is to restrict the MOEs to a subset of ``valid'' ones. This is to address a key technical challenge. When we merge two fragments together, one of those fragments must internally re-orient itself and update its internal values (including distance to the root). This re-alignment and updation takes $O(1)$ awake time. If we have a chain of fragments, say of diameter $d$, we may have to perform this re-alignment procedure $d-1$ times since the re-alignment of fragments is sequential in nature.\footnote{To observe the sequential nature of the re-alignment, consider a chain with three fragments, say $A \leftarrow B \leftarrow C$. Suppose $A$ maintains its orientation. The nodes in $B$ must be processed first and must update their distance to $A$. Only then can the nodes of $C$ accurately update their distance to $A$ (after the node $u$ in $C$ connected to the node $v$ in $B$ learns $v$'s updated distance to $A$).} As a result, if we do not control the length of chains of connected components formed by the fragments and their MOEs, we risk blowing up the awake time of the algorithm. We use randomness to ensure the diameter of any such connected component is a constant. 

As a result of the above change, we have a second technical challenge. Because we reduce the number of valid MOEs, we have to be careful to argue that a sufficient number of fragments are merged together in each phase so that after $O(\log n)$ phases, we end up with exactly one fragment with high probability. We provide such a careful argument, which is somewhat different from the usual argument used in GHS style algorithms.

\noindent \textbf{Detailed Algorithm.}
Algorithm~$\RANDMST$ consists of nodes participating in $4 \lceil \log_{4/3} n \rceil + 1$ phases of the following computations. Recall that between phases we want to maintain an FLDT that is eventually converted into a single LDT. Also recall that in each phase, there are three steps. We describe the three steps in detail below.

{\bf Step (i): Finding MOE of each fragment.} In step (i), all fragments perform the following series of actions. Each node in a fragment participates in the following sequence of procedures: 
first $\DOWNCAST(n)$ to transmit the message to find the  MOE  to all nodes in the fragment (i.e., to find the ``local'' MOEs from each node in the fragment to outside), second $\UPCASTMIN(n)$ to convergecast the smallest among the (local) MOEs  to the root of the fragment to find the overall MOE of the fragment, then $\DOWNCAST(n)$ to transmit the  MOE to all nodes in the fragment, and finally $\TRANSMITADJACENT(n)$ to inform adjacent fragments of the current fragment's MOE. 

\begin{sloppypar}
We note that because of the properties of the LDT, $\DOWNCAST(n)$, $\UPCASTMIN(n)$ and $\TRANSMITADJACENT(n)$ can be accomplished in $O(1)$ awake time and hence the finding MOE of each fragment can be accomplished in $O(1)$ awake time.
\end{sloppypar}

{\bf Step (ii): Finding ``valid'' MOEs.} 
In step (ii), we have each fragment's root flip a coin and only allow MOEs from fragments whose roots flipped tails to those that flipped heads.\footnote{Intuitively, stars are formed by fragments and MOEs as a result of this process. In each such star, the center fragment is one whose root flipped heads and the leaves, if any, are fragments whose roots flipped tails.} Henceforth, if a fragment's root flipped a heads (tails), then we say that the fragment flipped a heads (tails). Alternatively, we can say that the fragment is considered a heads (tails) fragment. Each fragment root flips an unbiased coin and uses $\DOWNCAST(n)$ to inform the nodes in the fragment if the root flipped a tails or a heads. All nodes participate in a $\TRANSMITADJACENT(n)$ to inform adjacent fragments of the current fragment's coin flip. Now, we only consider an MOE as ``valid'' if it originated from a fragment that flipped a tails and is to a fragment that flipped a heads. Now, each node adjacent to an MOE knows if that MOE is a valid one or not. By using $\UPCASTMIN(n)$ and $\DOWNCAST(n)$ to transmit the information of ``valid'' or ``invalid'', all nodes in the fragment will know if the outgoing MOE is valid or not.

{\bf Step (iii): Merging fragments.} In step (iii), we merge each subgraph formed by fragments and valid MOEs into a single fragment. Consider a subgraph consisting of several tails fragments whose MOEs lead to a single heads fragment. The heads fragment retains its fragment ID while the remaining fragments take on the ID of the heads fragment. Furthermore, these other fragments also re-orient themselves such that they form subtrees of the heads fragment. Specifically, consider a tails fragment $T$ with root $root_T$ and an MOE to a heads fragment $H$ where nodes $u_T$ and $u_H$ are the nodes of the MOE belonging to $T$ and $H$, respectively. The nodes in fragment $T$ re-orient themselves such that $u_T$ is the new root of the tree. Additionally, $u_T$ considers $u_H$ its parent in the merged graph. 
This is described below in Procedure~$\REORIENTFRAG(n)$, a process similar to that in~\cite{BM21}. The process is also illustrated in Figures~\ref{fig:reorient-tree1},~\ref{fig:reorient-tree2},~\ref{fig:reorient-tree3}, and~\ref{fig:reorient-tree4}. 
Recall that at the end of the previous step, each node in the fragment knows whether the fragment root flipped heads or tails and whether the MOE leading out of the fragment is a valid MOE or not. 

\begin{figure}[h]
\includegraphics[page=1,width=.8\textwidth]{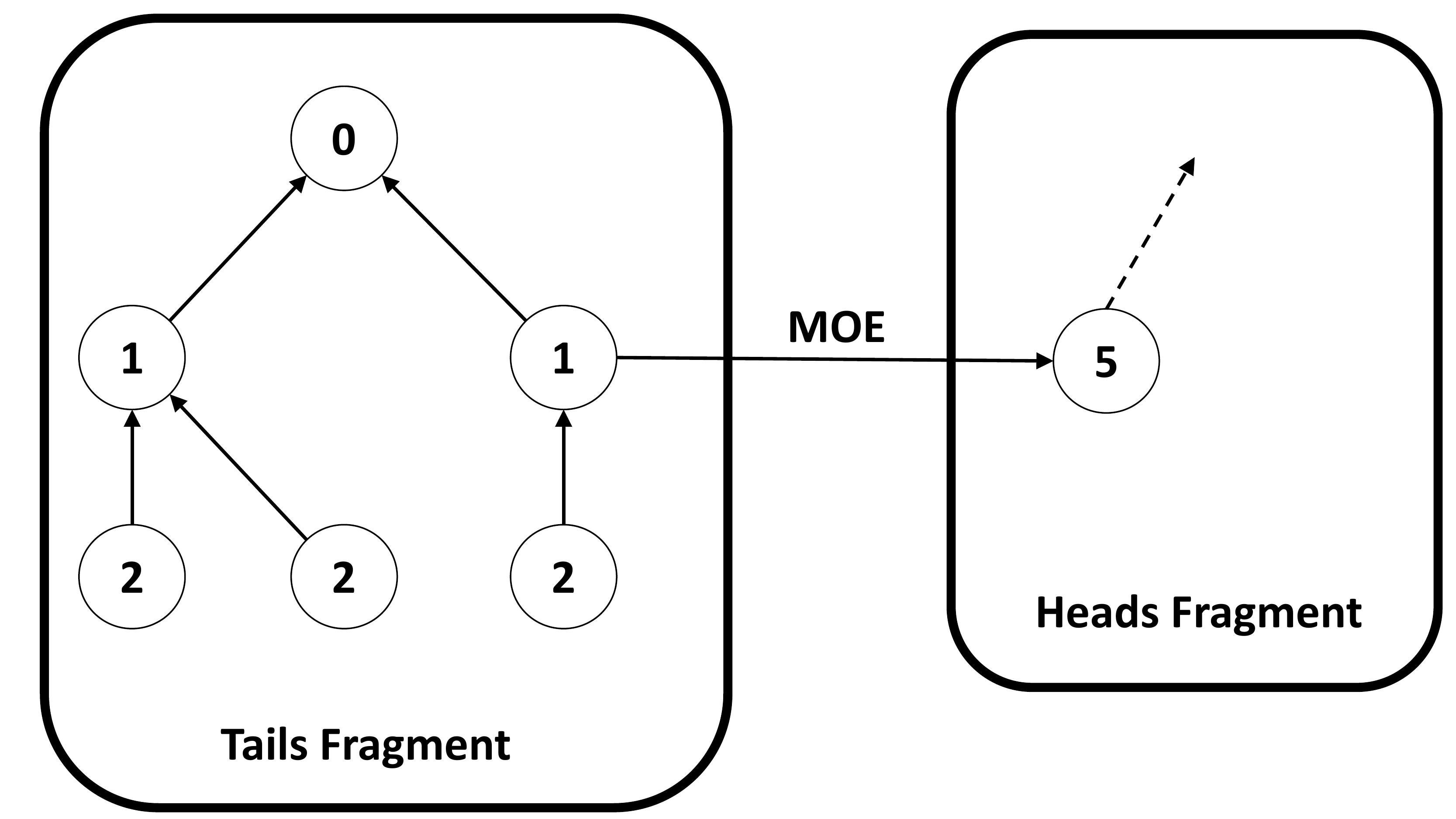}
\caption{Initial configuration. Tails fragment has an MOE to a Heads fragment. Numbers in nodes refer to distance from root of that tree.}\label{fig:reorient-tree1} 
\end{figure}
	
\begin{figure}[h]
\includegraphics[page=2,width=.8\textwidth]{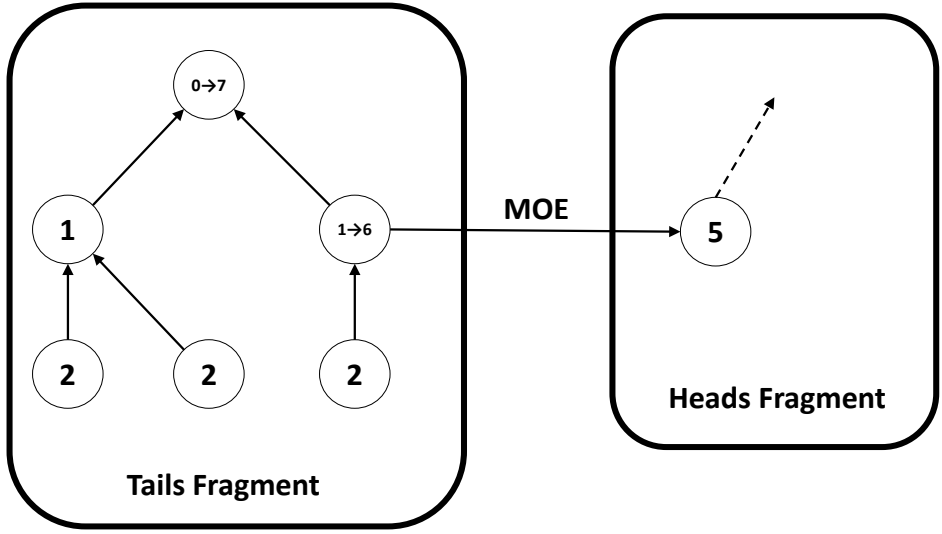}
\caption{After $\TRANSMITADJACENT(n)$ and first instance of $\TS(\cdot,\cdot,n)$, nodes from MOE node to root updated distance from root values internally. Fragment ID of Heads fragment propagated to these nodes as well.}\label{fig:reorient-tree2} 
\end{figure}

\begin{figure}[h]
\includegraphics[page=3,width=.8\textwidth]{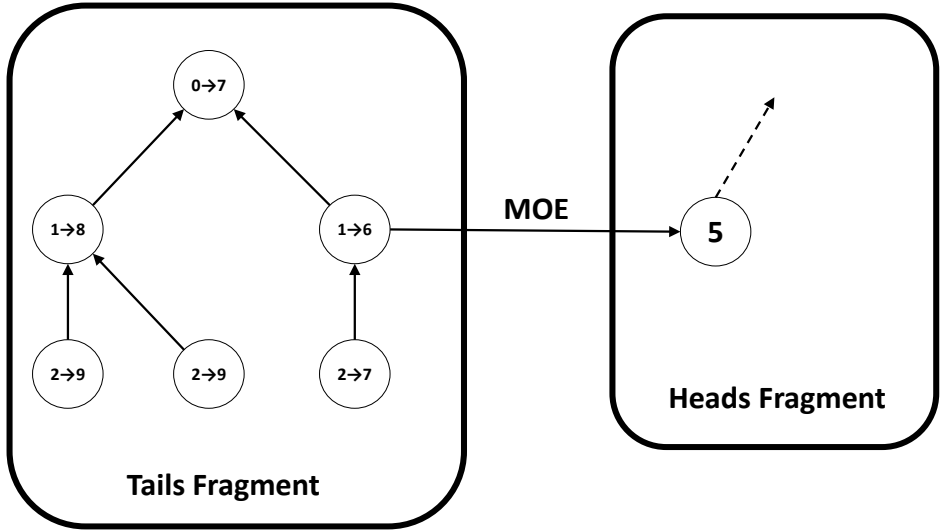}
\caption{After second instance of $\TS(\cdot,\cdot,n)$, remaining nodes updated their distance values internally. Fragment ID of Heads fragment propagated to these nodes as well.}\label{fig:reorient-tree3} 
\end{figure}

\begin{figure}[h]
\includegraphics[page=4,width=.8\textwidth]{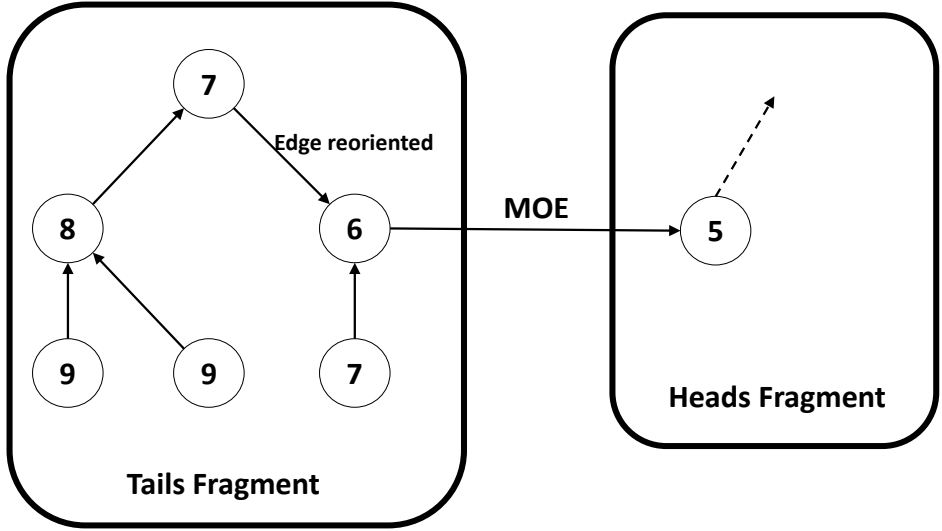}
\caption{Finally, all nodes update their distance from root values (which were stored in temporary variables). Furthermore, edges from MOE node to root are re-oriented to reflect new tree structure.}\label{fig:reorient-tree4} 
\end{figure}

As mentioned earlier in Section~\ref{sec:algorithms}, each node in an  
LDT $T$ maintains the ID of the fragment it belongs to as well as the distance of the node from the root of the fragment. We show how to correctly maintain this information when we merge LDTs with the help of two temporary variables, $\NEWFRAGID$ and $\NEWLEVELNUM$, respectively. Additionally, each node of a fragment that updates these values must also re-orient itself by updating its parent and child pointers so that the overall merged fragment is also a tree. In the course of describing how to update fragment ID and distance to the root, we also show how to identify a node's new parent and children, if any. This information is recorded by the node and we mention when the node internally updates this information.\footnote{If the re-orientation information is updated immediately, complications may arise when we call a subprocedure that is to be run using values from the old tree. To avoid this, the original values are temporarily retained and updation is performed later on.}

We now formally describe Procedure~$\REORIENTFRAG(n)$. First, all nodes participate in $\TRANSMITADJACENT(n)$ to transmit their fragment ID and level number. Now $u_T$ sets its $\NEWLEVELNUM$ to that of $u_H$ plus one and stores the fragment ID of $u_H$ in $\NEWFRAGID$. Additionally, $u_T$ records the info that $u_H$ is its new parent and its neighbors in $T$ are its new children, to be updated internally later on. The remaining nodes of $T$ initialize $\NEWFRAGID$ and $\NEWLEVELNUM$ to $\bot$, which we also refer to as those values being \textit{empty}.  
Now, all nodes $v$ in the fragment participate in two instances of $\TS(root_T,v,n)$. At a high level, the first instance is used to update the $\NEWFRAGID$ and and $\NEWLEVELNUM$ of nodes on the path from $u_T$ to $root_T$. The second instance is used to update the required variables for all the remaining nodes.

In the first instance, each node $v$ during its $\UPSEND$ round sends up the value in its $\NEWLEVELNUM$ if it is not empty 
and $v$ also sends up the value of $\NEWFRAGID$ if it is not empty. 
During an $\UPRECEIVE$ round, if $v$ receives a non-empty $\NEWLEVELNUM$ from its child, $v$ sets its own $\NEWLEVELNUM$ to the received value plus one. Similarly, if $v$ receives a non-empty $\NEWFRAGID$, $v$ sets its own $\NEWFRAGID$ to that value. Additionally, if $v$ receives a non-empty $\NEWLEVELNUM$, from its child, it records internally that its child will be its new parent and its neighbors in $T$ will be its children.

In the second instance of $\TS(root_T,v,n)$, each node $v$ during its $\DOWNSEND$ round sends down the value of its $\NEWLEVELNUM$ and also the value of its $\NEWFRAGID$. During a $\DOWNRECEIVE$ round, if $v$'s $\NEWLEVELNUM$ is non-empty and it receives a non-empty value from its parent, $v$ updates its $\NEWLEVELNUM$ to that value plus one. Similarly, if $v$'s $\NEWFRAGID$ is non-empty and it receives a non-empty value from its parent, $v$ updates its $\NEWFRAGID$ to that value. 

At the end of step (ii), each node $v$ updates its fragment ID to $\NEWFRAGID$ and updates its level number to $\NEWLEVELNUM$, assuming they are non-empty, and subsequently clears those variables, i.e., sets their values to $\bot$, in preparation for the next phase.\footnote{$\NEWFRAGID$ and $\NEWLEVELNUM$ are variables that serve a temporary purpose each phase. As such, we do not need to retain their values across phases.} 
Additionally, any information about re-orientation, i.e., updating parent and/or children, is now updated locally within the tree. 

\textbf{Analysis.} We now prove that the algorithm correctly outputs the MST of the original graph with the desired running time and awake time.

Recall that the number of fragments can never increase from one phase to the next. Let phase $\IMPPHASE$ correspond to the last phase in which there is more than one fragment at the beginning of the phase. 
We will show that $\IMPPHASE = 4 \lceil \log_{4/3} n \rceil$. The following lemma shows that for the first $\IMPPHASE$ phases of the algorithm, the number of fragments is reduced by a constant factor in each phase with high probability. 

\begin{lemma}\label{lem:frag-reduction}
For each phase of Algorithm $\RANDMST$ where there are initially at least two fragments at the start of that phase, the number of fragments is reduced by at least a factor of $4/3$ in that phase on expectation. Furthermore, by phase $4 \lceil \log_{4/3} n \rceil+1$, there is at most one fragment in the graph.
\end{lemma}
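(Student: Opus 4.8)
The plan is to analyze a single phase through the \emph{fragment pointer digraph}: make each current fragment a node, and draw an arc from each fragment $f$ to the fragment $g(f)$ at the far endpoint of $f$'s MOE. Because the input graph is connected and (by hypothesis) there are at least two fragments at the start of the phase, every fragment has an outgoing edge, so $g(f)$ is well-defined and, crucially, $g(f)\neq f$. By the rule of Steps (ii)--(iii), a fragment $f$ is absorbed in this phase exactly when $f$'s root flips tails and $g(f)$'s root flips heads: only tails-to-heads MOEs are valid, only heads fragments absorb, and heads fragments are themselves never absorbed. I would first record the bookkeeping identity this gives: if $k$ is the number of fragments at the start of the phase, then the number at the end equals $k$ minus the number of absorbed fragments, since each absorbed fragment merges into its heads target and thereby lowers the count by exactly one, with nothing else changing.

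For the first claim, the key observation is that the two events ``$f$ flips tails'' and ``$g(f)$ flips heads'' are independent (as $g(f)\neq f$ and each fragment's root tosses its own fair coin), so $\Pr[f \text{ absorbed}] = 1/4$ for every fragment $f$. By linearity of expectation the expected number of absorbed fragments is $k/4$, hence the expected number of fragments after the phase is at most $\tfrac34 k$, i.e.\ it drops by a factor of at least $4/3$ in expectation. This is precisely where restricting to ``valid'' MOEs earns its keep: it caps merged components at constant diameter (needed for Step~(iii)) without reducing the merge rate below a constant factor.

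For the second claim I would pass to the shifted potential $\hat K_t = K_t - 1 \ge 0$, where $K_t$ is the number of fragments at the start of phase $t$, so $K_1 = n$ and once $K_t=1$ it stays $1$. Conditioning on the configuration $\mathcal{F}_t$ at the start of phase $t$: if $K_t=1$ then $\hat K_{t+1}=0$, while if $K_t\ge 2$ then $\Exp{K_{t+1}\mid\mathcal{F}_t}\le\tfrac34 K_t$ by the first claim, so in both cases $\Exp{\hat K_{t+1}\mid\mathcal{F}_t}\le\tfrac34 K_t - 1\le \tfrac34(K_t-1)=\tfrac34\hat K_t$. Iterating this multiplicative drift, $\Exp{\hat K_{P+1}}\le(3/4)^P\hat K_1=(3/4)^P(n-1)$; taking $P=4\lceil\log_{4/3}n\rceil$ gives $(3/4)^P\le(3/4)^{4\log_{4/3}n}=n^{-4}$, hence $\Exp{\hat K_{P+1}}<n^{-3}$. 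Since $\hat K_{P+1}$ is a non-negative integer, Markov's inequality yields $\Pr[K_{P+1}\ge 2]=\Pr[\hat K_{P+1}\ge 1]\le\Exp{\hat K_{P+1}}<n^{-3}$, i.e.\ with high probability there is at most one fragment by phase $4\lceil\log_{4/3}n\rceil+1$.

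The main obstacle is avoiding the two tempting but doomed routes: arguing a constant-factor drop \emph{per phase} with high probability (false --- e.g.\ if all fragments point at a single fragment, then with probability $1/2$ that fragment flips tails and nothing merges), and applying Markov to the random number of phases needed (too weak --- $4\log_{4/3}n$ phases then gives only constant success probability). The one genuinely delicate idea is the shift to $\hat K_t=K_t-1$, which preserves the clean drift $\Exp{\hat K_{t+1}\mid\mathcal{F}_t}\le\tfrac34\hat K_t$ all the way down to a single fragment; after that, geometric decay of the expectation plus integrality of $\hat K$ closes the high-probability statement with no concentration inequality at all. The only other care needed is the two elementary structural facts used above --- that every fragment has an MOE to a \emph{distinct} fragment (connectivity plus $K_t\ge 2$), and that the decrease in the fragment count in a phase equals the number of absorbed fragments.
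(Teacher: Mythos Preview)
Your proof is correct and follows essentially the same route as the paper: both compute that each fragment is absorbed with probability exactly $1/4$ (tails for $f$, heads for $g(f)$, independent since $g(f)\neq f$), use linearity of expectation to get a $3/4$ contraction in the expected fragment count, iterate this over $4\lceil\log_{4/3}n\rceil$ phases, and finish with Markov's inequality.

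Your shift to $\hat K_t = K_t - 1$ is a genuine refinement over the paper's argument. The paper applies Markov directly to $F_{P+1}$ and claims $E[F_{P+1}] \le n/(4/3)^P \le 1/n^3$, but since $F_{P+1}\ge 1$ deterministically this bound on the expectation cannot literally hold; the underlying issue is that the conditional contraction $E[F_{i+1}\mid F_i]\le \tfrac34 F_i$ fails once $F_i=1$. Your shifted potential $\hat K_t$ satisfies the drift inequality unconditionally (trivially at $\hat K_t=0$, and by the computation $\tfrac34 K_t - 1 \le \tfrac34(K_t-1)$ when $K_t\ge 2$), so the iteration and the Markov step are both clean. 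This is exactly the ``delicate idea'' you flag, and it does real work.
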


\begin{proof}
Denote by $\mathcal{F}_i$ the set of fragments that are present at the beginning of phase $i \in [1,\IMPPHASE+1]$. Define $F_i = |\mathcal{F}_i|$. We want to show that for all $i \in [1, \IMPPHASE]$, $E[F_{i+1}] = 3E[F_i]/4$. We can then leverage this to show a high probability bound on the number of fragments present in the system at the end of phase $\IMPPHASE$. It is also easy to see that we only need to prove that the current lemma holds until phase $\IMPPHASE$ ends as by definition phase $\IMPPHASE$ is the last phase in which there are two or more fragments.

Consider some arbitrary phase $i$, $i \in [1, \IMPPHASE]$. At the beginning of phase $i$, there are some $F_i$ fragments present. From the law of total expectation (since $F_i$ and $F_{i+1}$) are random variables from the same probability space), we know that $E[F_{i+1}] = E[E[F_{i+1}|F_i]]$. Thus, we now upper bound $E[F_{i+1}|F_i]$. 

In order to count the number of fragments that are left after the end of a phase, we first define the notion of a given fragment ``surviving'' a given phase. Consider one of the $F_i$ fragments at the beginning of phase $i$, call it $f$. Let $f$'s MOE lead to another fragment $f'$. We define $f$ to survive the phase if either $f$ flipped heads or if both $f$ and $f'$ flipped tails. In other words, $f$ survives the phase if it does not flip tails and merge into a heads fragment. Thus, the probability that $f$ survives a given phase is $\leq 3/4$. Let $X_f^i$ be an indicator random variable that takes value $1$ if fragment $f$ survives phase $i$ and $0$ otherwise. It is easy to see that the expected number of fragments that survive phase $i$ acts as an upper bound on $E[F_{i+1}|F_i]$. 

Now, we can calculate the value of $E[F_{i+1}]$ as follows:
$$
    E[F_{i+1}] = E[E[F_{i+1}|F_i]] 
    \leq E\left[\sum\limits_{f \in \mathcal{F}_i} X_f^i\right] 
    \leq \sum\limits_{f \in \mathcal{F}_i} E\left[X_f^i\right] 
    \leq \sum\limits_{f \in \mathcal{F}_i}  Pr(f \text{ survives phase } i) 
    \leq 3 F_i/4.
$$

Recall that the number of fragments at the beginning of phase $1$ is equal to $n$. Thus, we have $E[F_{\IMPPHASE+1}] \leq E[F_{\IMPPHASE}]/(4/3) \leq E[F_1]/(4/3)^{\IMPPHASE} \leq n/ (4/3)^{\IMPPHASE}$.  So we see that after $\IMPPHASE$ phases, we have the following by Markov's inequality:
$$     Pr(F_{\IMPPHASE+1} > 1) \leq E[F_{\IMPPHASE+1}]/1 
    \leq n/ (4/3)^{\IMPPHASE}.
$$

By setting $\IMPPHASE$ to $4 \lceil \log_{4/3} n \rceil$, we see that $Pr(F_{\IMPPHASE+1} > 1) \leq 1/n^3$. Thus with high probability, the number of fragments at the beginning of phase $\IMPPHASE+1$ does not exceed $1$.
\end{proof}

We are now ready to argue that the algorithm is correct.

\begin{lemma}\label{lem:alg-correctness}
Algorithm $\RANDMST$ results in each node of the initial graph knowing which of its edges are in the MST with high probability.
\end{lemma}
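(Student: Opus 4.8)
The plan is to establish correctness in two parts: first that the algorithm terminates with a single fragment spanning all nodes, and second that this fragment is in fact the MST of the original graph, with every node aware of which of its incident edges belong to it.

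\textbf{Termination with one fragment.} By Lemma~\ref{lem:frag-reduction}, after $4\lceil \log_{4/3} n\rceil$ phases there is at most one fragment at the start of phase $4\lceil \log_{4/3} n\rceil + 1$ with high probability. Since the algorithm runs for $4\lceil \log_{4/3} n\rceil + 1$ phases and the number of fragments never increases, with high probability the algorithm ends with exactly one fragment, and this fragment is an LDT spanning all $n$ nodes (we maintain the invariant, established by induction on phases, that the current set of fragments always forms an FLDT whose union of node sets is $V$).

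\textbf{The single fragment is the MST.} The key structural invariant I would prove by induction on the phase number is: at the start of each phase, every fragment is a connected subtree of the (unique) MST of $G$, and the fragments partition $V$. The base case holds since each singleton fragment is trivially a subtree of the MST. For the inductive step, I rely on the classical cut property / Bor\r{u}vka argument: if $F$ is a fragment that is a subtree of the MST, then its minimum outgoing edge (MOE) — the lightest edge crossing the cut $(F, V\setminus F)$ — must lie in the MST (using distinctness of edge weights, which the paper assumes). Step~(i) correctly computes this MOE for each fragment: $\DOWNCAST$ disseminates the request, each node reports the lightest edge from itself leaving the fragment, $\UPCASTMIN$ convergecasts the global minimum to the root, and a further $\DOWNCAST$ and $\TRANSMITADJACENT$ propagate it. Step~(ii) only restricts \emph{which} MOEs are acted upon (keeping those from a tails fragment to a heads fragment); since every MOE that is used is still a genuine MOE of some fragment, every edge added in Step~(iii) is an MST edge. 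Merging fragments along a subset of MST edges yields larger connected subtrees of the MST, so the invariant is preserved. Because we never add a non-MST edge and we end (whp) with a single spanning fragment, that fragment must be exactly the MST.

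\textbf{Nodes know their MST edges, and the LDT property is maintained.} Here I would argue that Procedure~$\REORIENTFRAG$ correctly re-roots each tails fragment $T$ at its MOE endpoint $u_T$, makes $u_T$ a child of $u_H$, and updates every node's fragment ID and distance-to-root. The two instances of $\TS(root_T, \cdot, n)$ propagate $\NEWFRAGID$ and $\NEWLEVELNUM$ first up the path from $u_T$ to $root_T$ (flipping the orientation of exactly those edges and incrementing the level along the way) and then down to the rest of the subtree; a straightforward induction on distance within $T$ shows every node ends with the correct new level (its hop distance to $u_H$'s root, i.e.\ to the heads fragment's root) and the correct fragment ID. Hence the merged fragment is again an LDT. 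In particular, whenever an MOE edge is added, both its endpoints record it (as a parent/child edge), so at termination every node knows precisely which of its incident edges lie in the final tree, which is the MST. The main obstacle is the $\REORIENTFRAG$ correctness argument: one must carefully track that the re-orientation and the level/ID updates are done consistently across a star of several tails fragments merging into one heads fragment simultaneously, and that using the \emph{old} tree structure for the schedule (while deferring the actual pointer updates, as footnoted in the paper) does not corrupt the computation. Once that bookkeeping is pinned down, the rest follows from the cut property and Lemma~\ref{lem:frag-reduction}.
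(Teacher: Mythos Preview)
Your proposal is correct and follows essentially the same approach as the paper's own proof: invoke Lemma~\ref{lem:frag-reduction} to conclude that a single spanning fragment remains with high probability, and then use the cut/Bor\r{u}vka property (every edge added was an MOE of some fragment) to argue this fragment is the MST. The paper's proof is considerably terser --- it dispatches the ``every edge added is an MST edge'' and ``nodes know their MST edges'' parts with an ``it is easy to see'' --- whereas you spell out the induction invariant and the $\REORIENTFRAG$ bookkeeping; this extra detail is sound and fills in what the paper omits, but the underlying argument is the same.
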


\begin{proof}
By Lemma~\ref{lem:frag-reduction}, we see that in phases $i \in [1, \IMPPHASE]$, the number of fragments is reduced by a constant factor until only one is left with high probability. From phase $\IMPPHASE$ onwards, there is at most one fragment with high probability and the number of fragments never increases in a given phase. Thus, since we run the algorithm for $\IMPPHASE+1$ phases and the initial graph is connected, we see that at the end of the algorithm, there exists only one fragment. Furthermore, it is easy to see from the algorithm that each node in the initial graph belongs to this fragment and every edge in this fragment was once the minimum outgoing edge from some fragment to another. In other words, the set of nodes and edges within the final fragment represent the MST of the initial graph.
\end{proof}

We also bound the running time and awake time of the algorithm below.

\begin{lemma}\label{lem:runtime-awaketime}
Algorithm $\RANDMST$ takes $O(n \log n)$ running time and $O(\log n)$ awake time. 
\end{lemma}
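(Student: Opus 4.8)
The plan is to bound, phase by phase, the running time and awake time contributed by each node, and then multiply by the number of phases, which is $4\lceil \log_{4/3} n \rceil + 1 = O(\log n)$ by construction. First I would observe that within a single phase every node executes only a constant number of the toolbox procedures: Step (i) invokes $\DOWNCAST(n)$, $\UPCASTMIN(n)$, $\DOWNCAST(n)$, and $\TRANSMITADJACENT(n)$; Step (ii) invokes $\DOWNCAST(n)$, $\TRANSMITADJACENT(n)$, $\UPCASTMIN(n)$, and $\DOWNCAST(n)$; and Step (iii) runs $\REORIENTFRAG(n)$, which itself consists of one $\TRANSMITADJACENT(n)$ followed by two instances of the transmission schedule $\TS(root_T,\cdot,n)$. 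Each of these operates within a block of $2n+1$ synchronized rounds, and by Observations~\ref{obs:downcast}, \ref{obs:upcastmin}, \ref{obs:trasnmit-neighbor}, and \ref{obs:transmit-adjacent} (together with the definition of $\TS$), each costs $O(n)$ running time and only $O(1)$ awake time. Since all nodes stay synchronized in blocks of $2n+1$ rounds (as arranged at the end of Section~\ref{sec:prelims}), a single phase therefore takes $O(n)$ rounds in total and keeps each node awake for only $O(1)$ rounds.

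The one place that requires care --- and the main obstacle --- is arguing that $\REORIENTFRAG(n)$ really needs only a constant number of $\TS$ instances, rather than a number that grows with the length of the chain of merged fragments (recall the footnote on the sequential nature of re-alignment). Here I would invoke the ``valid MOE'' restriction of Step (ii): an MOE is valid only if it points from a tails fragment to a heads fragment, so a heads fragment never has an outgoing valid MOE, and hence every connected component of the supergraph whose nodes are fragments and whose edges are valid MOEs is a star --- a single heads fragment $H$ together with the tails fragments whose MOEs point to $H$ --- and thus has diameter at most $2$. Consequently, in $\REORIENTFRAG(n)$ the heads fragment keeps its orientation and each tails fragment $T$ re-orients \emph{independently}: the $\NEWLEVELNUM$ and $\NEWFRAGID$ values are propagated up the path from $u_T$ to $root_T$ in the first $\TS(root_T,\cdot,n)$ instance and back down to the remaining nodes of $T$ in the second, all entirely within $T$. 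No sequential cascade of re-alignments across multiple fragments ever arises, so the two $\TS$ instances suffice, each again costing $O(n)$ running time and $O(1)$ awake time.

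Finally I would combine the per-phase bounds with the phase count. Over $4\lceil \log_{4/3} n \rceil + 1 = O(\log n)$ phases, the total running time is $O(\log n)\cdot O(n) = O(n\log n)$, and since awake rounds accumulate additively across phases, each node is awake for $O(\log n)\cdot O(1) = O(\log n)$ rounds; taking the worst case over all nodes gives awake complexity $O(\log n)$. This establishes the claimed $O(n\log n)$ running time and $O(\log n)$ awake time, and I expect the only non-routine point to be the star-structure argument bounding the cost of $\REORIENTFRAG(n)$.
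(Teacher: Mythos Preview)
Your proposal is correct and follows essentially the same approach as the paper's proof: bound each phase by $O(n)$ running time and $O(1)$ awake time via the toolbox procedures, then multiply by the $O(\log n)$ phases. Your extra paragraph justifying that $\REORIENTFRAG(n)$ needs only two $\TS$ instances (via the star structure of the valid-MOE supergraph) is material the paper already establishes in the algorithm description (see the footnote about stars in Step~(ii)) rather than in the proof of this lemma, but it is the same argument and a reasonable thing to spell out here.
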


\begin{proof}
Recall that there are $O(\log n)$ phases of the algorithm. We show that in each phase, each node experiences awake time of $O(1)$ and running time of $O(n)$, thus giving us the desired bounds. In a given phase, a given node may run the following procedures a constant number of times: $\DOWNCAST(n)$, $\UPCASTMIN(n)$, and $\TRANSMITADJACENT(n)$. Each of these procedures takes $O(1)$ awake time and $O(n)$ running time. Additionally, to complete step (iii) in a given phase (i.e., to run $\REORIENTFRAG(n)$), a node $u$ belonging to a fragment with root $root$ may need to perform one instance of $\TRANSMITADJACENT(n)$ and two instances of $\TS(root,u,n)$, which requires $O(1)$ awake time and $O(n)$ running time.
\end{proof}

\begin{theorem}\label{the:optimalmst-randomized-theorem}
Algorithm $\RANDMST$ is a randomized algorithm to find the MST of a 
graph with high probability in $O(n \log n)$ running time and $O(\log n)$ awake time.
\end{theorem}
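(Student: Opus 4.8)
The plan is to assemble Theorem~\ref{the:optimalmst-randomized-theorem} directly from the three lemmas already proved about $\RANDMST$, so the "proof" is essentially a bookkeeping argument with a short discussion of how the high-probability bounds compose. First I would recall that the algorithm runs for a fixed number of phases, namely $4\lceil \log_{4/3} n\rceil + 1 = O(\log n)$, and that each node's behavior in every phase is fully determined by the transmission schedule $\TS(\cdot,\cdot,n)$ and the procedures $\DOWNCAST(n)$, $\UPCASTMIN(n)$, $\TRANSMITADJACENT(n)$, and $\REORIENTFRAG(n)$. I would then quote Lemma~\ref{lem:runtime-awaketime}: each phase costs $O(1)$ awake rounds and $O(n)$ running time per node, and since there are $O(\log n)$ phases these multiply to $O(\log n)$ awake time and $O(n\log n)$ running time for every node. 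Because the block structure is of fixed length $2n+1$ and all nodes stay synchronized (as noted at the end of Section~\ref{sec:prelims}), no node runs beyond its allotted blocks, so these bounds are worst-case and deterministic regardless of the coin flips.

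Next I would handle correctness: by Lemma~\ref{lem:alg-correctness}, with high probability every node finishes knowing exactly which of its incident edges lie in the (unique, since weights are distinct) MST of $G$. The only probabilistic ingredient is Lemma~\ref{lem:frag-reduction}, which guarantees that after $\IMPPHASE = 4\lceil\log_{4/3} n\rceil$ phases the fragment count has dropped to one except with probability at most $1/n^3$; on that good event the remaining phase is a no-op and the final single fragment is the MST. I would explicitly note that the running-time and awake-time bounds hold unconditionally (they don't depend on which event occurred), so there is no subtlety in combining a deterministic complexity bound with a high-probability correctness guarantee — the union bound is trivial because only one bad event ($F_{\IMPPHASE+1} > 1$) needs to be excluded.

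Finally I would assemble the statement: with probability at least $1 - 1/n^3$ the algorithm outputs the MST, and in all cases it terminates within $O(n\log n)$ rounds with each node awake for only $O(\log n)$ of them. I do not expect any real obstacle here — the theorem is a corollary of the preceding lemmas — but the one point deserving a sentence of care is making sure the reader sees that "high probability" attaches to correctness while the complexity bounds are guaranteed deterministically, and that the number of phases is a constant times $\log n$ so the per-phase $O(1)$ awake cost indeed aggregates to $O(\log n)$ rather than something larger. A closing remark could observe that, together with the $\Omega(\log n)$ lower bound of Theorem~\ref{thm:lb-unconditional} and the trade-off lower bound of Theorem~\ref{thm:lowerbound}, this shows both the awake complexity and (for awake-optimal algorithms) the round complexity of $\RANDMST$ are optimal up to constant and polylogarithmic factors respectively.
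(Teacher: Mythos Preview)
Your proposal is correct and matches the paper's approach: the paper gives no explicit proof of Theorem~\ref{the:optimalmst-randomized-theorem}, treating it as an immediate corollary of Lemmas~\ref{lem:alg-correctness} and~\ref{lem:runtime-awaketime}. Your write-up simply makes explicit the bookkeeping the paper leaves implicit, including the clean separation between the deterministic complexity bounds and the high-probability correctness guarantee.
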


%---------------------------------------------------------------------
\subsection{Awake-Optimal Deterministic Algorithm}\label{sec:det-mst-opt-awake-time}
% !TEX root = main.tex

\textbf{Brief Overview.} We describe Algorithm $\DETMST$, which is similar to Algorithm $\RANDMST$ (described in Section~\ref{sec:mst-optimal-awake-time}).  Unlike Algorithm $\RANDMST$, where we use random coin flips in step (ii) to limit the diameter of subgraphs in the \emph{fragment graph}  (where each
fragment is a (super)node and the edges between the fragments  are the  MOEs),
the main technical challenge is to deterministically keep the diameter of these merging subgraphs (in the fragment graph) to be {\em constant};
this is crucial for implementing one phase of merging in $O(1)$ awake time.\footnote{One can accomplish this deterministically
in $O(\log^*n)$ time even in the traditional model using coloring or maximal matching in the fragment graph which is a tree (see e.g., 
\cite{PanduranganRS17}), but this will lead to an overhead of a $O(\log^*n)$ factor in the awake time.}

We use the following combination of techniques. In step (i), we first sparsify such graphs by allowing any fragment to be adjacent to at most $4$ other fragments. Then, in step (ii), we use a fast awake time coloring algorithm and selective merging to ensure that subgraphs do not have a large diameter. We leverage this property to ensure that fast awake time merging of these fragments occurs. We also ensure that a sufficient number of fragments participate in merging so that the number of fragments decreases by a constant factor from one phase to the next. The above modifications result in an algorithm that outputs the MST of the original graph in $O(\log n)$ phases. 
Note that we require nodes to know the value of $N$, the range from which node IDs are taken.

\noindent \textbf{Technical Challenges.} We experience similar technical challenges as those faced when designing Algorithm $\RANDMST$. However, we resolve those issues quite differently here. As before, when we construct connected components of fragments and their MOEs, we want that the diameter of each of these components is a constant, so that we can re-orient fragments quickly during the merging process. Since we do not have access to randomness, we rely on the somewhat standard approach of using a deterministic maximal matching over these components to reduce the diameter. However, while the approach is standard, the execution is not. In order to maintain a small awake time, we first reduce every component to a set of bounded degree components and then run a tailored algorithm to color all fragments in $O(1)$ awake time. In particular, we construct a series of independent sets of the fragment graph over $O(\log^4 n)$ stages, such that each fragment belongs to exactly one independent set. In the corresponding stage, a fragment in the independent set colors itself with the first available color and informs its neighbors. Subsequently, converting this coloring to a maximal matching is done as usual.

Once again, due to the above changes, we must deal with a second technical challenge. Because we reduce the number of valid MOEs, we have to be careful to argue that a sufficient number of fragments are merged together in each phase so that after $O(\log n)$ phases, we end up with exactly one fragment. We utilize an interesting combinatorial approach to argue this.

\textbf{Detailed Algorithm.}
We now give a detailed break up of each phase of the algorithm. Let $\NUMFRAGSDET = 240000$. 
Recall that there are $\lceil \log_{\NUMFRAGSDET/(\NUMFRAGSDET-1)} n \rceil + \NUMFRAGSDET$ phases and in each phase, there are three steps.\footnote{We have not chosen to optimize the constants
in our analysis.} We describe each in detail separately. 

\begin{sloppypar}
 {\bf Step (i): Finding MOE of each fragment.} In step (i), we find each fragment's MOE in the same way as 
 step (i)  of Algorithm $\RANDMST$.
\end{sloppypar}

 {\bf Step (ii): Finding ``valid'' MOEs.} 
In step (ii), each node in a given fragment knows, for each of its edges adjacent to it, whether that edge is an MOE from some other fragment to the given fragment. Let us differentiate these MOEs \textit{to} the fragment from the MOE \textit{from} the fragment by calling the former $\INCOMINGMOE$s. 
Now, we have each fragment select up to $3$ ``valid'' MOEs from its $\INCOMINGMOE$s, chosen arbitrarily. This is in contrast to how the valid MOEs were chosen during Algorithm~$\RANDMST$, where we used coin flips to determine valid MOEs. Define an \textit{incoming MOE node $v$ of fragment $f$} as a node $v$ belonging to fragment $f$ such that $v$ is adjacent to an edge that is an MOE from some other fragment to $f$. In the context of a given fragment $f$, define a \textit{valid MOE child node $v$ of a node $u$} as a child node of $u$ such that in the subtree rooted at $v$, there exists an incoming MOE node of $f$. 
At a high level, the total number of $\INCOMINGMOE$s is communicated to the root of the fragment. The root then allots up to $3$ virtual ``tokens'' (i) to its valid MOE child nodes to be used to select $\INCOMINGMOE$s and (ii) to itself if the root is an incoming MOE node.
Any node that receives one or more such tokens distributes them among its valid MOE child nodes and itself if it is an incoming MOE node. This process is repeated until all tokens are transmitted to incoming MOE nodes of the fragment. 

In more detail, consider a given fragment $f$ with root $root$. Each node $u$ that belongs to $f$ runs $\TS(root,u,n)$. In $u$'s $\UPRECEIVE$ round, $u$ receives info on the number of nodes in its children's subtrees (if any) that are incoming MOE nodes. In $u$'s $\UPSEND$ round, $u$ aggregates the total number of such incoming MOE nodes. It adds $1$ to that number if $u$ is itself an incoming MOE node. Then $u$ sends this number up to its parent in the fragment. Now, at the end of $\TS(root,u,n)$, the root of the fragment is aware of how many nodes in the fragment (including the root itself) are incoming MOE nodes. If that number is $\leq 3$, then all of them are accepted as valid, otherwise at most 3 of them are selected as valid as follows. All nodes $u$ in the fragment participate in $\TS(root,u,n)$. If the root is an incoming MOE node, it selects itself as a valid incoming MOE node. The root then decides in an arbitrary manner how many nodes from each of its children's subtrees can become valid MOE nodes. This information is transmitted during the $root$'s $\DOWNSEND$ round in the form a number indicating how many MOE nodes can be in the subtree rooted at that child. As for an arbitrary node $u$ in the fragment, in $u$'s $\DOWNRECEIVE$ round, it receives this number from its parent. Subsequently, during $u$'s $\DOWNSEND$ round, $u$ does the following. If $u$ is an incoming MOE node, then $u$ selects itself as a valid incoming MOE node and reduces the number by one. If the number $u$ has is non-zero, then $u$ distributes the number among its children that have $\INCOMINGMOE$s in their trees. 

For every fragment $f$, the information of whether an incoming MOE from a fragment $f'$ is valid or not is communicated to each $f'$ as follows. All nodes in $f$ and every other fragment run $\TRANSMITADJACENT(n)$ and MOE nodes communicate  whether they are valid or not to their neighbors. Let $u$ be an incoming MOE node from fragment $f$ whose neighbor in $f'$ (and the other end of the MOE) is node $v$. Now, $v$ from $f'$ knows if $(v,u)$ is a valid MOE from $f'$ to $f$ or not. Now all nodes participate in $\UPCASTMIN$ with the following conditions: (i) nodes other than $v$ send up $\infty$, (ii) if $(v,u)$ is a valid MOE, then $v$ sends up the edge weight of $(v,u)$ else it sends up $-\infty$ 
As a result, once $\UPCASTMIN$ is finished, the root of fragment $f'$ will see that $(v,u)$ is the MOE with either its original weight or $-\infty$ as the edge weight and can thus conclude if $(v,u)$ is a valid MOE or not.

{\bf Step (iii): Merging fragments.} In step (iii), the nodes of each fragment $f$ learn the fragment IDs of fragments up to 2 hops away from $f$ in the fragment graph $G'$ (formed by the fragments as (super)nodes and valid MOEs as edges). Nodes in each fragment $f$ first collect information about valid MOEs (both incoming \& outgoing) at the root of the fragment $f$. Subsequently, all nodes in $f$ participate in a $\DOWNCAST(n)$ to send this information (which takes only $O(\log n)$ bits to encode information about \textit{all} such MOEs). This information about the (at most) four neighbors of a fragment $f$ in $G'$ is then sent to all nodes of neighboring fragments of $f$ in $G'$ via one instance of $\TRANSMITADJACENT(n)$, another instance of gathering information, and subsequent $\DOWNCAST(n)$. Thus, all nodes in $f$ learn of fragments $2$ hops away from $f$ in $G'$. More details are given below.

Recall that at the end of step (ii), each node in  the fragment graph $G'$ has at most $3$ valid incoming MOEs and at most one valid outgoing MOE. Thus, the maximum degree of any node in $G'$  is $4$. 
All nodes in each fragment $f$ run Procedure $\NEIGHBORAWARE(n)$, a variant of $\UPCASTMIN(n)$, where (i) each valid MOE node $u$ (both incoming and outgoing) to some fragment $f'$ sends up a tuple of $\langle$ $u$'s ID, $u$'s fragment ID, weight of $u$'s edge to $f'$, fragment ID of $f'$, color of $f'$ $\rangle$, (ii) other nodes send up a tuple of $\langle \infty, \infty, \infty, \infty, -1 \rangle$ if they do not have another value to send up, and (iii) each node, instead of only sending up the one tuple it knows of, sends up a concatenation of the (at most $4$) tuples that have non-$\infty$ values. Note that the fourth value of the tuple, color of $f'$, is set to $-1$ throughout this procedure as the fragments are as yet uncolored. However, this procedure is used elsewhere where color is needed. Now, all nodes participate in $\DOWNCAST(n)$ so that all nodes of $f$ are aware of tuples corresponding to the at most $4$ neighboring fragments of with valid MOEs from/to $f$. Each node stores this information in the variable $\NBRINFO$. Subsequently, all nodes participate in one instance of $\TRANSMITADJACENT(n)$, where the nodes of fragment $f$ inform neighboring nodes of the at most $4$ valid MOEs. Another instance of $\NEIGHBORAWARE(n)$ is run by all nodes, where each valid MOE node sends up the at most $4$ tuples that it heard about from its neighbor (instead of the one tuple relating to itself), resulting in at most $16$ tuples being concatenated together and sent to the root of $f$. Finally, all nodes participate in $\DOWNCAST(n)$ so that all nodes of $f$ are aware of the at most $16$ neighbors of $f$ in the fragment graph.

Subsequently, we color the fragments and then selectively merge them. Consider a color palette consisting of colors Blue, Red, Orange, Black, and Green. Furthermore, let there exist a total ordering on this palette based on the relation of priority, where we say that a color $A$ has a higher priority than color $B$ and denote the relation by $A>B$, such that Blue $>$ Red $>$ Orange $>$ Black $>$ Green.  
Let us consider the fragment graph $G'$ as computed from step (ii). Recall that the maximum degree of any node in $G'$ is $4$, so $5$ colors are sufficient for coloring (there always exists a $\Delta + 1$ coloring of a graph with maximum degree $\Delta$). At a high level, we assign each fragment $f$ to some independent set of $G'$ (there are at most $O(\log^4 n)$ independent sets formed) and a corresponding stage $i$. In stage $i$, (the nodes of) $f$ wakes up and colors itself with the first available color, i.e., the highest priority color not chosen by any of its neighbors in $G'$. 

We now describe the coloring in more detail. Recall that the ID of each fragment requires $\log N$ bits. Consider all possible choices of $4$ bits  $b_1, b_2, b_3, b_4$ from these $\log N$ bits, i.e., $O(\log^4 N) = O(\log^4 n)$ choices of $4$ bits. For each of these choices, consider all possible $2^4$ assignments of values from $\lbrace 0,1 \rbrace$ to these bits. All nodes can easily deterministically construct the same schedule with $O(\log^4 n)$ entries, where each entry represents one choice of bits and a unique assignment to these chosen bits. To perform the coloring, all nodes will participate in $O(\log^4 n)$ stages, each of $O(n)$ rounds, where each stage corresponds to an entry of this schedule. For a fragment $f$ and some chosen bits $b_1,b_2,b_3,b_4 \in [1, \log N]$ where $b_1 \neq b_2 \neq b_3 \neq b_4$, we say that \textit{a bit assignment $A(b_1),A(b_2),A(b_3),A(b_4) \in \lbrace 0, 1\rbrace$ is valid for $f$} when the corresponding bits of the ID of $f$ have those values. Now, for each fragment $f$, we define its \textit{active stage} as the stage where three conditions are satisfied: (i) $f$ is uncolored, (ii) $f$'s bit assignment is valid, and (iii) none of $f$'s neighbors' (in $G'$) bit assignments are valid. Notice that each fragment has exactly one active stage. A fragment wakes up in its active stage and colors itself with the highest priority color not chosen by any of its neighbors in $G'$, and informs its neighboring fragments of the color it chose. Each fragment wakes up in its active stage and the active stage of each of its (at most four) neighboring fragments (for a given fragment $f$ and its neighbor $f'$, since $f$ knows the IDs of $f'$'s neighbors in $G'$, $f$ knows the active stage of $f'$). Now, we formally describe the Procedure $\AWAKECOLORING(n,N)$ used to achieve this coloring.

All nodes perform the Procedure $\AWAKECOLORING(n,N)$, which requires each node to know both the total number of nodes $n$ and the range from which every ID is taken $[1,N]$. Recall that all nodes in each fragment $f$ are aware of the IDs of neighboring fragments in the fragment graph $G'$ up to two hops away from $f$. All nodes participate in the following $O(\log^4 n)$ stage process, where each stage consists of sufficient rounds to perform one instance each of $\TRANSMITADJACENT(n)$, $\NEIGHBORAWARE(n)$, and $\DOWNCAST(n)$. In stage $i$, active fragments and neighbors of active fragments wake up. If fragment $f$ is active, the nodes of $f$  choose the highest priority color not chosen by any of $f$'s neighboring fragments. Subsequently, all awake nodes  participate in one instance of $\TRANSMITADJACENT(n)$ to learn of neighboring fragments' colors. This information is disseminated to all nodes in a given awake fragment via one instance of $\NEIGHBORAWARE(n)$ and $\DOWNCAST(n)$.

Now, we describe the selective merging in more detail. We first identify the set of fragments that will merge into other fragments and will thus ``not survive'' the phase. These are all fragments that were colored Blue during Procedure~$\AWAKECOLORING(n,N)$. Recall that there are two types of fragments that are colored Blue. Those with neighbors in $G'$ and those without, which we call \textit{singleton fragments}. 

Those Blue fragments with neighbors pick one of their neighbors in $G'$ arbitrarily (which is of course a non-Blue fragment) and then merge into them. This can be achieved by running Procedure~$\REORIENTFRAG(n)$ from Section~\ref{sec:mst-optimal-awake-time} where 
we consider Blue fragments as Tails fragments and all non-Blue fragments as Heads fragments.
The merged fragment takes on the fragment ID of the fragment that acted as the Heads fragment.

\begin{sloppypar}
We now look at how to merge the singleton fragments into the remaining fragments. During Procedure~$\AWAKECOLORING(n,N)$, these singleton fragments are colored Blue but they did not merge into other fragments during the previous instance of Procedure~$\REORIENTFRAG(n)$. We have each of these Blue fragments merge into the fragment at the end of its MOE. But before doing that, we need each node in these Blue fragments to become aware of any changes to fragment IDs and level numbers of neighboring nodes. This information is conveyed by having all nodes in the original graph participate in one instance of $\TRANSMITADJACENT(n)$ to inform nodes in neighboring fragments about such updates.  
Subsequently, all nodes participate in Procedure~$\REORIENTFRAG(n)$ where we consider these Blue singleton fragments as Tails fragments and all remaining fragments as Heads fragments. The merged fragment takes on the fragment ID of the fragment that acted as the Heads fragment. At the end of all this, each of the previous singleton fragments is merged into some other fragment and does not survive the phase.
\end{sloppypar}

\textbf{Analysis.} We now prove that the algorithm correctly outputs the MST of the original graph with the desired running time and awake time. 

We first show that Procedure~$\AWAKECOLORING(n,N)$ results in all nodes being colored correctly and each node only participating in at most $5$ stages.

\begin{lemma}\label{lem:awake-coloring-works}
All fragments are correctly colored as a result of Procedure~$\AWAKECOLORING(n,N)$ and each fragment is awake in at most $5$ stages.
\end{lemma}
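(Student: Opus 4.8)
The plan is to prove the two assertions of the lemma in turn, with the bulk of the work going into showing that every fragment has a well-defined active stage, after which properness of the colouring and the ``at most $5$ stages'' bound follow with modest bookkeeping. For the existence of an active stage I would fix a fragment $f$ and let $N_{G'}(f)$ be its neighbours in the fragment graph $G'$; recall $|N_{G'}(f)|\le 4$, since by the end of step~(ii) every node of $G'$ has at most $3$ valid incoming MOEs and at most one valid outgoing MOE. For each $g\in N_{G'}(f)$ the IDs $ID(f)$ and $ID(g)$ are distinct, so they differ in some bit position $\beta_g\in[1,\log N]$; the set $\{\beta_g : g\in N_{G'}(f)\}$ has size at most $4$, and I would extend it to a set $B$ of exactly four distinct positions (possible for all non-trivial $n$, as $\log N\ge 4$). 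The schedule entry picking the positions $B$ with the assignment $ID(f)|_B$ then has a bit assignment that is valid for $f$ but, since each $g$ disagrees with $f$ at $\beta_g\in B$, invalid for every $g\in N_{G'}(f)$. Hence a ``separating'' stage exists; taking $f$'s active stage to be the first such stage is consistent with conditions (i)--(iii), because $f$ colours itself only in a separating stage, so it is uncoloured when the first one arrives and coloured ever after --- which also gives uniqueness of the active stage.

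Next I would argue properness. The crucial observation is that adjacent fragments cannot share an active stage: if $f$ and $g$ are adjacent and both active in stage $s$, then activity of $f$ requires that no neighbour of $f$ has a valid bit assignment in $s$, while activity of $g$ requires that $g$'s bit assignment --- and $g$ is such a neighbour --- be valid in $s$, a contradiction. Given this, when $f$ reaches its active stage $s_f$, each neighbour $g$ with $s_g<s_f$ has already coloured itself, and $f$ learned that colour: $f$ is awake in $s_g$ (it wakes up in every neighbour's active stage), and during $s_g$, $g$ runs $\TRANSMITADJACENT(n)$ followed by $\NEIGHBORAWARE(n)$ and $\DOWNCAST(n)$, so \emph{all} nodes of $f$ record $g$'s colour. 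Thus $f$ knows the colours of all its already-coloured neighbours; as there are at most $4$ of them and $5$ priority colours, the highest-priority unused colour exists and $f$ adopts it consistently across its nodes. Finally, for any edge $\{f,g\}$ of $G'$ we have $s_f\ne s_g$ by the observation above, and the fragment coloured later knew the earlier one's colour and avoided it, so the two colours differ; hence the colouring is a proper colouring of $G'$.

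For the count, a fragment $f$ wakes up in a given stage only if it is active in it or one of its at most four neighbours is active in it; since each fragment has exactly one active stage, $f$ is awake in at most $1+4 = 5$ stages. The main obstacle throughout is the first step --- guaranteeing that the $O(\log^4 n)$-entry schedule always contains a stage that simultaneously selects $f$ and de-selects all of $f$'s neighbours; this is precisely where the bounded degree of $G'$ and the $\Theta(\log N)$-bit ID space are essential (four bit positions, one per neighbour, pin $f$ down uniquely within $N_{G'}(f)\cup\{f\}$), while everything else reduces to tracking which fragments are awake when and to checking that the in-fragment dissemination makes every node of a fragment agree on the colours involved.
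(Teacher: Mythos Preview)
Your proposal is correct and follows essentially the same approach as the paper: you establish existence of an active stage via distinguishing bit positions (the paper phrases this more tersely), observe that adjacent fragments cannot be simultaneously active (the paper says the active fragments of a stage form an independent set, which is the same thing), and then count $1+4=5$. Your write-up is somewhat more explicit---for instance, you spell out why $f$ actually \emph{knows} each earlier-coloured neighbour's colour (it is awake in that neighbour's active stage and receives the dissemination), and you note the need to pad the set of distinguishing bit positions to exactly four---but these are elaborations of the paper's argument rather than a different route.
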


\begin{proof}
We first show that each fragment is active in exactly one stage. Recall that a fragment $f$ is active if it is (i) uncolored, (ii) the chosen bits and bit assignment are valid for $f$, and (iii) the bit assignments for $f$'s neighbors are not valid. It is easy to see that $f$ will be active in at most one stage, as in the first stage it is active, it will become colored, and so will not become active again. So, we need to show that it is active in at least one stage. Initially, $f$ is uncolored and so condition (i) is satisfied until $f$ reaches an active stage. Since all fragments have unique IDs, there exists at least one bit value that differs between $f$ and each of its neighbors in $G'$. In some stage, these bits will be the chosen bits with the bit assignment corresponding to the values of $f$'s ID. Thus, there exists a stage where $f$'s bit assignment is valid and all of $f$'s neighbors' bit assignments will not be valid. 

It is easy to see that each fragment will be awake in at most $5$ stages since it will be active (and thus awake) in one stage, and awake for each of its (at most four) neighbors' active stages.

To argue that the coloring is correct, notice that the set of fragments that are active in the same stage is an independent set (of the supergraph $G'$). This is because conditions (ii) and (iii) preclude both a fragment and its neighbor from being active in the same stage. Thus, when a fragment is active, any color it chooses will not clash with its neighbors' colors (since it will be aware of its neighbors' colors before choosing one itself).
\end{proof}

%----------------------Beginning of proof of correctness--------------------------------
Recall that we use the notation $\NUMFRAGSDET = 240000$. We show that the number of phases needed to reduce the number of fragments to one is at most $\lceil \log_{\NUMFRAGSDET/(\NUMFRAGSDET -1)} n \rceil + \NUMFRAGSDET$. 
Correctness immediately follows as we implement GHS. 
 We first argue that in each phase of the algorithm where there is initially a sufficient number of fragments $\NUMFRAGSDET$, the number of fragments is reduced by a constant factor of $\NUMFRAGSDET/(\NUMFRAGSDET - 1)$.  We then show that in an additional $\NUMFRAGSDET$ phases, we can reduce the number of fragments to one. As we only add MST edges and all nodes will be in this one fragment, the final fragment represents the MST of original graph.

Let $\DETIMPPHASE$ represent the phase by which the number of fragments at the beginning of the phase is less than $\NUMFRAGSDET$. We eventually show that $\DETIMPPHASE = \lceil \log_{\NUMFRAGSDET/(\NUMFRAGSDET - 1)} n \rceil$. We first argue that in every phase up to $\DETIMPPHASE$, the number of fragments is reduced by a constant factor. We do this by considering an arbitrary phase $i$ and identifying a set of fragments in that phase that are guaranteed to merge into other fragments, thus ``being lost'' or ``not surviving'' in that phase. We show that this set is at least a constant fraction of the total set of fragments that existed at the start of the phase. 

Consider an arbitrary phase $i$ such that at the beginning of the phase there exists a set $\mathcal{F}_i$ of fragments and define $F_i = |\mathcal{F}_i|$. Furthermore, define the supergraph $H$ as the undirected graph where the nodes are the set $\mathcal{F}_i$ and the edges are the valid MOEs between the different fragments, i.e., the graph obtained after pruning MOEs in step (ii) of the phase. In the subsequent analysis we use nodes and fragments interchangeably in the context of graph $H$. We now show that the number of Blue fragments (which by the algorithm are all merged into other fragments) constitute a sufficiently large constant fraction of $\mathcal{F}_i$.

\begin{lemma}\label{lem:fraction-blue-frag-merge-in-phase}
Let $H'$ be a connected subgraph of $H$. If $|H'| \geq 342$, then at least $\lfloor |H'|/342 \rfloor$ of the fragments are Blue.
\end{lemma}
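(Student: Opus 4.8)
The plan is to analyze the structure of the fragment graph $H'$ and use the constant-degree bound together with the five-color palette to guarantee that a constant fraction of fragments receive the highest-priority color Blue. Recall that by the end of step (ii), every fragment has at most $3$ valid incoming MOEs and at most one valid outgoing MOE, so $H'$ (as a subgraph of the fragment graph $G'$) has maximum degree $4$. Procedure $\AWAKECOLORING(n,N)$ produces a proper $5$-coloring of $H'$ in which each fragment greedily picks the highest-priority available color; in particular, a fragment is colored Blue unless one of its (at most $4$) neighbors is already Blue at the time it becomes active. So the Blue fragments form a maximal independent set with respect to the ordering in which active stages occur — more precisely, every non-Blue fragment has at least one Blue neighbor. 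This is the structural fact I would extract first.

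Next I would turn the domination property into a counting bound. Since every fragment is either Blue or adjacent to a Blue fragment, the closed neighborhoods of the Blue fragments cover all of $H'$. Each Blue fragment has degree at most $4$ in $H'$, so its closed neighborhood has size at most $5$; hence the number of Blue fragments is at least $|H'|/5$, which is far more than $\lfloor |H'|/342 \rfloor$ when $|H'| \geq 342$. At this level the lemma looks almost trivially true with a much better constant, so I suspect the constant $342$ is an artifact of a more refined argument the authors need downstream — perhaps they additionally need the Blue fragments to be "useful" in the sense of actually merging (e.g., a singleton Blue fragment or a Blue fragment with a specific MOE structure), or they are accounting for Blue fragments that are themselves merged into and so do not count toward the reduction. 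The main obstacle, therefore, is figuring out exactly which sub-population of Blue fragments the lemma really intends to count: the naive $|H'|/5$ bound is correct for all Blue fragments, but to match the eventual constant $\NUMFRAGSDET = 240000$ in the phase-count analysis one may need to ensure the Blue fragments are spread out enough that distinct Blue fragments merge into distinct survivors, or that a bounded number of them can collide at a single survivor.

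Assuming the statement is just the clean bound on the count of Blue fragments, I would finish as follows. Let $B$ be the set of Blue fragments in $H'$. By the structural fact, $\bigcup_{f \in B} N[f] = V(H')$ where $N[f]$ denotes the closed neighborhood in $H'$. Since $\deg_{H'}(f) \leq 4$, we have $|N[f]| \leq 5$, so $|H'| \leq \sum_{f \in B} |N[f]| \leq 5|B|$, giving $|B| \geq |H'|/5 \geq \lceil |H'|/5 \rceil \geq \lfloor |H'|/342 \rfloor$. If instead the lemma is tracking a restricted sub-population (say Blue fragments each of which can be "charged" to a distinct merge event, with the divisor $342$ arising from a chain of losses like $5$ colors times a constant blow-up from the two-hop information gathering and the at-most-$16$ neighbors each node learns about), then the finish would instead be a charging argument: assign to each merge caused by a Blue fragment a distinct witness, bound the number of Blue fragments that can share a witness by the relevant local constant, and divide. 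I would write up the first (cleaner) version and only fall back to the charging argument if the downstream phase-reduction lemma demands the weaker counting.
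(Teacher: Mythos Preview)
Your argument is correct and in fact sharper than the paper's. You use the key structural fact that every non-Blue fragment must have a Blue neighbor (since Blue is the highest-priority color and a fragment only skips Blue if a neighbor already holds it), which immediately gives a domination bound $|H'| \le 5|B|$ via the degree-$4$ constraint. The paper instead argues color by color in a cascading fashion: $R \le 4B$ because every Red fragment has a Blue neighbor, then $O \le 4R \le 4^2 B$ because every Orange fragment has a Red neighbor, and so on down to $G \le 4^4 B$, summing to $|H'| \le (4^5-1)/3 \cdot B = 341 B \le 342 B$. Your approach short-circuits this cascade by observing that \emph{every} non-Blue fragment (not just Red ones) already has a Blue neighbor, so the intermediate colors are irrelevant for the count. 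Regarding your worry about the constant $342$: it is not needed downstream for any structural reason; it is simply the artifact of the paper's looser chain of inequalities, and the subsequent phase-reduction argument would go through (with a better constant $\NUMFRAGSDET$) using your $|H'|/5$ bound. Neither proof actually uses the connectedness of $H'$ or the hypothesis $|H'|\ge 342$ in an essential way.
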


\begin{proof}
Let $B$, $R$, $O$, $Bl$, and $G$ represent the number of Blue, Red, Orange, Black, and Green fragments in $H'$, respectively. We can get a lower bound on the number Blue fragments in $H'$ by assuming that the adversary caused the maximum number of fragments to be colored other colors.

For a fragment to be colored Red, at least one of its neighbors must be colored Blue. Furthermore, any given Blue fragment can be a neighbor to at most $4$ Red fragments. Thus, the number of Red fragments is upper bounded by $4 B$. Similar arguments show that the number of Orange fragments, Black fragments, and Green fragments are upper bounded by $4^2 B$, $4^3 B$, and $4^4 B$, respectively.  

Thus, the number of fragments in $H'$, $|H'| = B + R + O + Bl + G \leq B + 4B + 4^2B + 4^3B + 4^4B \leq B(4^5 - 1)/3 \leq 342 B$. Thus, we see that $B \geq \lfloor |H'|/342 \rfloor$.
\end{proof}

The above lemma by itself is insufficient to show that the required number of fragments are removed in each phase. The reason is that $H$ may consist of a set of disjoint connected subgraphs. 
Let us assume that $|H| \geq \NUMFRAGSDET$. Let $S$ denote the set of all disjoint connected subgraphs (i.e., connected components) in $H$. Now, either $|S| \geq |H|/ 342^2$ or $|S| < |H|/ 342^2$. We show that in either case, the number of fragments that survive the current phase is $(\NUMFRAGSDET - 1) |H| / \NUMFRAGSDET$.

If $|S| \geq |H|/342^2$, then since each subgraph in $S$ contains at least one Blue fragment which disappears in the phase, the total number of fragments that survive the phase is at most $|H| - |H|/ 342^2 \leq (\NUMFRAGSDET - 1) |H| / \NUMFRAGSDET$.\footnote{It is easy to see that each subgraph contains a Blue fragment because Blue is the highest priority color and so the first fragment that colors itself in any given subgraph colors itself Blue.}

Now let us look at the situation where $|S| < |H|/342^2$. Divide $S$ into the sets $S_1$ and $S_2$ which contain the disjoint connected subgraphs of $H$ which have $<342$ fragments and $\geq 342$ fragments, respectively. Observe that $S = S_1 \bigcup S_2$. (It is easy to see that $|S_2| \geq 1$ since otherwise if all subgraphs belonged to $S_1$, there would be less than $|H|$ total fragments.) We now show that a sufficient number of fragments in the subgraphs in set $S_2$ are Blue fragments, thus resulting in a sufficient number of fragments being removed in the phase. Let us lower bound how many fragments are present in the subgraphs in set $S_2$. Recall that $|S_1| \leq |S|$, we are considering the situation where $|S| < |H|/342^2$, and each subgraph in $|S_1|$ can have $< 342$ fragments. We lower bound the fragments in $S_2$ by pessimistically ignoring the less than $342$ fragments from each of the $|S_1|$ subgraphs (recall that $|S_1| \leq |S|$), i.e., the number of fragments in $S_2$ is 

\begin{align*}
    &\geq |H| - (|H|/342^2)\cdot 342 \text{ (since the total number of subgraphs is at most } |H|/342^2 )\\
    &= 341 |H|/ 342. 
\end{align*}

We now lower bound the number of Blue fragments in subgraphs in $S_2$. Let the subgraphs in $S_2$ be denoted by $H_1, H_2, \ldots, H_{|S_2|}$. Since each subgraph in $S_2$ is of size at least $342$, we can use Lemma 4. Now, the number of Blue fragments in $S_2$ is 
\begin{align*}
&= \sum\limits_{i=1}^{|S_2|} \lfloor |H_i|/342 \rfloor \\ 
&\geq \sum\limits_{i=1}^{|S_2|} (|H_i|/342 - 1) \\
&= 1/342 \cdot (\sum\limits_{i=1}^{|S_2|} |H_i|) - |S_2| \\
&\geq 1/342 \cdot (341|H_i|/342) - |S_2| \\
&\geq 1/342 \cdot (341|H_i|/342) - |H|/342^2 \text{ (since } |S_2| \leq |S| < |H|/342^2)\\
&=340 |H|/ 342^2.
\end{align*}

Recall that all Blue fragments do not survive a phase. Thus, the number of fragments that survive the current phase is 
\begin{align*}
\leq |H| - 340 |H|/342^2 \\
\leq (\NUMFRAGSDET - 1) |H| / \NUMFRAGSDET.
\end{align*}

Thus, in both situations, we see that the number of fragments that survive the present phase is upper bounded as desired.

\begin{lemma}\label{lem:constant-fraction-reduction}
After $\lceil \log_{\NUMFRAGSDET/(\NUMFRAGSDET - 1)} n \rceil$ phases, there are at most $\NUMFRAGSDET$ fragments at the beginning of the phase.
\end{lemma}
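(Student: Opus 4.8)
The plan is a straightforward geometric-decay argument that packages up the per-phase reduction just established. Write $c = \NUMFRAGSDET$ and let $F_i$ denote the number of fragments at the beginning of phase $i$, so $F_1 = n$. Two facts are already in hand: (a) the fragment count is non-increasing from one phase to the next, since fragments only ever merge and never split; and (b) from the case analysis preceding the lemma (which in turn rests on Lemma~\ref{lem:fraction-blue-frag-merge-in-phase}), whenever $F_i \geq c$ the number of fragments surviving phase $i$ satisfies $F_{i+1} \leq \frac{c-1}{c} F_i$.

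First I would argue by contradiction: suppose that after $t := \lceil \log_{\NUMFRAGSDET/(\NUMFRAGSDET-1)} n \rceil$ phases we still had $F_{t+1} > c$. By monotonicity (fact (a)), $F_i \geq F_{t+1} > c$ for every $i \in \{1,\dots,t\}$, so fact (b) applies to each of phases $1,\dots,t$; chaining these inequalities gives $F_{t+1} \leq \left(\frac{c-1}{c}\right)^{t} F_1 = n\left(\frac{c-1}{c}\right)^{t}$. Plugging in the value of $t$: since $t \geq \log_{c/(c-1)} n$ we have $\left(\frac{c}{c-1}\right)^{t} \geq n$, hence $n\left(\frac{c-1}{c}\right)^{t} \leq 1$, so $F_{t+1} \leq 1 < c$, contradicting the assumption. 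Therefore $F_{t+1} \leq c$, which is exactly the claim; equivalently, $\DETIMPPHASE = \lceil \log_{\NUMFRAGSDET/(\NUMFRAGSDET-1)} n \rceil$ as announced.

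I do not anticipate a genuine obstacle here: all the combinatorial heavy lifting — bounding the number of Blue fragments per connected component and handling the split of $H$ into the families $S_1$ (small components) and $S_2$ (large components) — was carried out in Lemma~\ref{lem:fraction-blue-frag-merge-in-phase} and the surrounding discussion, and this lemma only assembles those into an iterated bound. The two points worth stating explicitly in the write-up are: (i) the use of monotonicity of $F_i$ to justify invoking the per-phase reduction at each of the first $t$ phases, since that reduction is only guaranteed while at least $c$ fragments remain; and (ii) that the ceiling in the definition of $t$ loses no slack, so $(c/(c-1))^{t} \geq n$ holds and the final estimate $F_{t+1} \leq 1$ goes through cleanly.
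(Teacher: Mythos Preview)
Your proposal is correct and is essentially the same geometric-decay argument the paper intends: the paper establishes the per-phase $(\NUMFRAGSDET-1)/\NUMFRAGSDET$ reduction in the discussion immediately preceding the lemma and then states Lemma~\ref{lem:constant-fraction-reduction} without further proof, leaving the iteration implicit. Your version is in fact more careful than the paper's, since you spell out the monotonicity of $F_i$ needed to justify applying the reduction at every one of the first $t$ phases.
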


An easy observation is that whenever $F_i \geq 2$, at least one fragment is Blue and will not survive the phase. Thus, if there are at most $\NUMFRAGSDET$ fragments at the beginning of the phase, then running an additional $\NUMFRAGSDET$ phases guarantees that only one fragment will remain. Initially, each node is a fragment by itself and over the course of the algorithm, only possible MST edges are added to any fragment. Thus, we have the following lemma.

\begin{lemma}\label{lem:det-correctness guarantee}
Algorithm~$\textsc{Deterministic-MST}$ correctly outputs an MST after $\lceil \log_{\NUMFRAGSDET/(\NUMFRAGSDET - 1)} n \rceil + \NUMFRAGSDET$ phases.
\end{lemma}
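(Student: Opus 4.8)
The plan is to assemble the lemma from pieces already in place: the constant-factor reduction while many fragments survive, a ``lose at least one fragment per phase'' argument once few survive, and the classical GHS safety invariant for correctness of the output.

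First I would pin down the phase count. By Lemma~\ref{lem:constant-fraction-reduction}, after the first $\lceil \log_{\NUMFRAGSDET/(\NUMFRAGSDET-1)} n \rceil$ phases the number of fragments at the start of the phase is at most $\NUMFRAGSDET$; write $\DETIMPPHASE = \lceil \log_{\NUMFRAGSDET/(\NUMFRAGSDET-1)} n \rceil$. For every subsequent phase in which at least two fragments remain I claim at least one fragment is removed: since $G$ is connected, whenever there is more than one fragment there is at least one valid MOE, so the fragment graph $G'$ has a non-isolated connected component, and in each connected component of $G'$ the first fragment to colour itself must choose Blue (the highest-priority colour, always available to the first mover); by construction every Blue fragment — whether it has a $G'$-neighbour or is a singleton merging along its (original) MOE — is merged away via $\REORIENTFRAG(n)$ and does not survive. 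Hence $F_i \geq 2$ implies $F_{i+1} \leq F_i - 1$. Starting from $F_{\DETIMPPHASE + 1} \leq \NUMFRAGSDET$, after at most $\NUMFRAGSDET - 1$ further phases the count drops to $1$, and it never increases thereafter, so $\DETIMPPHASE + \NUMFRAGSDET$ phases in total suffice.

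Next I would argue correctness of the output, following the standard GHS/Bor\r{u}vka reasoning: by induction on phases, the current forest of added fragment edges is a subforest of the (unique, since the weights are distinct) MST of $G$. The base case is the $n$ singleton fragments. For the inductive step, every edge added in a phase is the MOE of some fragment, hence by the cut property the unique lightest edge leaving that fragment and therefore an MST edge; and no cycle is created, because Blue fragments form an independent set of $G'$ (Lemma~\ref{lem:awake-coloring-works} gives a proper colouring), so the ``Blue-into-non-Blue'' merges form disjoint stars, and a singleton Blue fragment is disjoint from the fragment at the far end of its MOE (it performed no earlier merge that phase), so attaching it adds no cycle. Combined with connectivity of $G$, the single final fragment has $n-1$ edges, all in the MST, and hence is the MST; since each node records which of its incident edges were introduced as MOEs during merges (and these records are carried along by $\REORIENTFRAG(n)$), every node knows its incident MST edges.

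The point I expect to require the most care is verifying that the two-stage step~(iii) merge — first all Blue fragments with $G'$-neighbours via one $\REORIENTFRAG(n)$, then all remaining singleton Blue fragments via a second $\REORIENTFRAG(n)$ — preserves the FLDT invariant (every node knowing its parent, children, fragment ID, and distance to the root) even when several Blue fragments attach simultaneously to a common non-Blue centre, and that the colour-class independence rules out a Blue fragment being the merge target of another Blue fragment. This is precisely the situation $\REORIENTFRAG(n)$ was designed to handle, so it reduces to invoking that procedure's guarantees together with the independent-set property of the colour classes; but it is the spot where a careless argument would break.
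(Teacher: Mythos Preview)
Your proposal is correct and follows essentially the same approach as the paper: invoke Lemma~\ref{lem:constant-fraction-reduction} to reduce to at most $\NUMFRAGSDET$ fragments, observe that whenever $F_i \geq 2$ at least one Blue fragment is eliminated per phase, and appeal to the GHS safety invariant that only MOEs---hence MST edges---are ever added. Your write-up is more detailed than the paper's (which dispatches the lemma in three sentences, calling the ``at least one Blue'' step an easy observation and the correctness step a one-liner), particularly in justifying the existence of a valid MOE in $G'$, the no-cycle argument via the independent-set property of colour classes, and the FLDT-invariant check for the two-stage merge; but the logical structure is the same.
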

%----------------------End of proof of correctness--------------------------------

%----------------------Beginning of proof of awake and running time--------------------------------
We analyze the running time and awake time for each node in each phase. Since there are $O(\log n)$ phases, it is easy to then get the overall running and awake times. 

Let us look at each step of a phase individually. In step (i), all nodes participate in two instances of $\DOWNCAST(n)$, one instance of $\UPCASTMIN(n)$, and one instance of $\TRANSMITADJACENT(n)$. Each of these procedures takes $O(1)$ awake time and $O(n)$ run time. 

In step (ii), all nodes participate in one instance of $\UPCASTMIN(n)$, one instance of $\TRANSMITADJACENT(n)$, and two instances of $\TS(\cdot,\cdot,n)$. Again, each of these procedures takes $O(1)$ awake time and $O(n)$ run time.

\begin{sloppypar}
In step (iii), all nodes participate in two instances of $\DOWNCAST(n)$, two instances of $\NEIGHBORAWARE(n)$, two instances of $\TRANSMITADJACENT(n)$, and one instance of $\AWAKECOLORING(n,N)$, and two instances of $\REORIENTFRAG(n)$. $\DOWNCAST(n)$ and $\TRANSMITADJACENT(n)$ each take $O(1)$ awake time and $O(n)$ running time. $\NEIGHBORAWARE(n)$ is a variant of $\UPCASTMIN(n)$ and from its description, it is easy to see that it too takes $O(1)$ awake time and $O(n)$ run time. $\REORIENTFRAG(n)$ was described and analyzed in the previous section and took $O(1)$ awake time and $O(n)$ running time. We see that $\AWAKECOLORING(n,N)$ consists of $O(\log^4 n)$ stages, where each stage takes $O(n)$ run time. However, each node is awake for and participates in at most $5$ such stages from Lemma~\ref{lem:awake-coloring-works}. In each stage, nodes that are awake participate in one instance of $\TRANSMITADJACENT(n)$, one instance of $\NEIGHBORAWARE(n)$, and one instance of $\DOWNCAST(n)$. Thus, each stage contributes $O(1)$ awake time for participating nodes and $O(n)$ running time. Thus, $\AWAKECOLORING(n,N)$ has $O(1)$ awake time and $O(n\log^4 n)$ running time. Thus, we have the following lemma.
\end{sloppypar}

\begin{lemma}\label{lem:det-mst-complexity-guarantees}
Each phase of Algorithm~$\DETMST$ takes $O(1)$ awake time and $O(n\log^4 n)$ running time.
\end{lemma}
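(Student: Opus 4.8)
The plan is to establish the bound step by step, since a phase of $\DETMST$ is by construction a fixed sequence of calls to the toolbox procedures of Section~\ref{sec:prelims}, plus the new procedure $\AWAKECOLORING(n,N)$. For each of $\DOWNCAST(n)$, $\UPCASTMIN(n)$, and $\TRANSMITADJACENT(n)$ we already have Observations~\ref{obs:downcast}, \ref{obs:upcastmin}, and \ref{obs:transmit-adjacent}, which give $O(1)$ awake time and $O(n)$ running time; a single invocation of the schedule $\TS(root,u,n)$ likewise keeps each node awake only in the constantly many named rounds $\{i,\, i+1,\, n+1,\, 2n-i+1,\, 2n-i+2\}$ of a block of $2n+1$ rounds, and hence costs $O(1)$ awake time and $O(n)$ running time. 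The first thing I would note is that $\NEIGHBORAWARE(n)$ is simply $\UPCASTMIN(n)$ in which each node forwards a concatenation of at most $16$ tuples of $O(\log n)$ bits; since $16 = O(1)$ this still fits in a single $O(\log n)$-bit \CONGEST\ message per round and uses the same transmission schedule, so it inherits the $O(1)$ awake / $O(n)$ round bounds.

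Next I would account for the three steps separately, following the narrative immediately preceding the lemma. Step~(i) uses a constant number of $\DOWNCAST(n)$, $\UPCASTMIN(n)$, and $\TRANSMITADJACENT(n)$ calls; step~(ii) uses one $\UPCASTMIN(n)$, one $\TRANSMITADJACENT(n)$, and two $\TS(\cdot,\cdot,n)$ instances (for the token-distribution that selects at most three valid incoming MOEs); and step~(iii) uses a constant number of $\DOWNCAST(n)$, $\NEIGHBORAWARE(n)$, $\TRANSMITADJACENT(n)$, and $\REORIENTFRAG(n)$ calls — the last already analyzed in Section~\ref{sec:mst-optimal-awake-time} as $O(1)$ awake / $O(n)$ rounds — together with one call to $\AWAKECOLORING(n,N)$. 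Summing a constant number of $O(1)$-awake, $O(n)$-round subroutines yields $O(1)$ awake time and $O(n)$ running time for everything in a phase except the one $\AWAKECOLORING(n,N)$ call, so the whole argument reduces to bounding that procedure.

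The main obstacle — and the only non-routine point — is precisely $\AWAKECOLORING(n,N)$, which is organized into $\Theta(\log^4 n)$ stages, one per choice of $4$ of the $\log N = \Theta(\log n)$ ID-bits together with an assignment to them. Naively a node would live through all $\Theta(\log^4 n)$ stages, which would be fatal for the awake bound. The resolution is Lemma~\ref{lem:awake-coloring-works}: each fragment is \emph{active} in exactly one stage, and it is only required to be awake in its own active stage and in the active stages of its at most four neighbors in $G'$ (whose active stages it can compute, since after the $2$-hop information gathering it knows those neighbors' IDs); hence every fragment, and therefore every node in it, is awake in at most $5$ stages. Each stage consists of one $\TRANSMITADJACENT(n)$, one $\NEIGHBORAWARE(n)$, and one $\DOWNCAST(n)$, contributing $O(1)$ awake time and $O(n)$ rounds, so $\AWAKECOLORING(n,N)$ costs $O(1)$ awake time (over those $5$ stages) and $O(n\log^4 n)$ running time (over all $\Theta(\log^4 n)$ stages). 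Plugging this back in, each phase has $O(1)$ awake time and running time dominated by $\AWAKECOLORING(n,N)$, namely $O(n\log^4 n)$, which is exactly the claim.
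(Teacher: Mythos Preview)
Your proposal is correct and follows essentially the same approach as the paper: you account for the constant number of toolbox calls in each of the three steps, identify $\AWAKECOLORING(n,N)$ as the only non-routine subroutine, and invoke Lemma~\ref{lem:awake-coloring-works} to bound each fragment's awake participation to at most five of the $\Theta(\log^4 n)$ stages. The paper's argument is structurally identical, with the same per-step breakdown and the same conclusion that $\AWAKECOLORING(n,N)$ dominates the running time at $O(n\log^4 n)$ while contributing only $O(1)$ awake time.
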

%----------------------End of proof of awake and running time--------------------------------

Thus, combining Lemma~\ref{lem:det-correctness guarantee} and Lemma~\ref{lem:det-mst-complexity-guarantees}, we get the following theorem.

\begin{theorem}
Algorithm~$\DETMST$ is a deterministic algorithm to find the MST of a given graph in $O(\log n)$ awake time and $O(n \log^5 n)$ running time.
\end{theorem}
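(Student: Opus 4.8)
The plan is to assemble the theorem directly from the per-phase bounds and the bound on the number of phases. First I would recall from Lemma~\ref{lem:det-correctness guarantee} that Algorithm~$\DETMST$ terminates after $\lceil \log_{\NUMFRAGSDET/(\NUMFRAGSDET - 1)} n \rceil + \NUMFRAGSDET$ phases and that, on termination, a single fragment remains whose edge set is precisely the MST of the input graph. Since $\NUMFRAGSDET$ is a fixed constant, this is $O(\log n)$ phases. Correctness is inherited for free from the fact that the algorithm is a synchronous implementation of GHS/Boruvka: every edge ever added to a fragment was, at the moment of merging, the minimum outgoing edge of some fragment, so the final tree is the MST.

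Next I would invoke Lemma~\ref{lem:det-mst-complexity-guarantees}, which states that each phase costs every node $O(1)$ awake rounds and $O(n\log^4 n)$ total rounds. Multiplying by the $O(\log n)$ phase count yields awake complexity $O(\log n)$ and round complexity $O(n\log^5 n)$, as claimed. The one extra thing to verify is that the $2n+1$-round block structure lines up across phases --- that all nodes enter and leave each $\TS$/$\DOWNCAST$/$\AWAKECOLORING$ block in lockstep --- so that the per-phase bounds genuinely compose into the global bound; this is exactly the synchronization invariant maintained throughout Section~\ref{sec:prelims}, and I would simply cite it.

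I expect the genuine difficulty to lie entirely upstream, in the two lemmas being combined, rather than in the composition itself. The round-time bound hinges on $\AWAKECOLORING(n,N)$: it runs $O(\log^4 n)$ stages, yet each fragment is awake in at most $5$ of them (Lemma~\ref{lem:awake-coloring-works}), so its contribution is $O(1)$ awake time but $O(n\log^4 n)$ running time --- and it is precisely this $\log^4 n$ factor that propagates into the final $\log^5 n$. The phase-count bound is the other delicate piece: one must show, via the colour-priority argument together with the connected-component case analysis (Lemmas~\ref{lem:fraction-blue-frag-merge-in-phase} and~\ref{lem:constant-fraction-reduction}), that a constant fraction of fragments are Blue and hence disappear each phase as long as at least $\NUMFRAGSDET$ fragments remain, plus the easy tail observation that once fewer than $\NUMFRAGSDET$ fragments remain, $\NUMFRAGSDET$ further phases suffice (each removes at least one fragment). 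Given those lemmas, the theorem is a one-line multiplication.
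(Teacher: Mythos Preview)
Your proposal is correct and matches the paper's own proof essentially verbatim: the paper simply states that the theorem follows by combining Lemma~\ref{lem:det-correctness guarantee} (correctness and $O(\log n)$ phases) with Lemma~\ref{lem:det-mst-complexity-guarantees} ($O(1)$ awake and $O(n\log^4 n)$ running time per phase). Your additional remarks about synchronization and about where the real work lies are accurate but go beyond what the paper bothers to spell out at this point.
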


\textbf{Remark.} The coloring procedure, $\AWAKECOLORING(n,N)$, is the main reason for the large run time. As we noted near the beginning of this section, we can replace this procedure with one that can accomplish this deterministically
in $O(\log^*n)$ run time even in the traditional model (see e.g., 
\cite{PanduranganRS17}). However, we suffer an overhead of an $O(\log^*n)$ multiplicative factor in the awake time. As a result, using this modified procedure would allow us to get the following theorem.

\begin{theorem}
There exists a deterministic algorithm to find the MST of a given graph in $O(\log n \log^* n)$ awake time and $O(n \log n \log^* n)$ run time.
\end{theorem}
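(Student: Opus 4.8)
The plan is to keep Algorithm~$\DETMST$ verbatim and replace a single building block: instead of Procedure~$\AWAKECOLORING(n,N)$ --- which spreads the colouring over $O(\log^4 n)$ stages precisely so that each fragment is awake in only a constant number of them --- I would colour the fragment graph $G'$ directly with a standard deterministic $O(\log^* n)$-round colouring algorithm for bounded-degree graphs (for instance Linial's algorithm followed by a constant number of colour-reduction rounds, as used for the fragment-tree in~\cite{PanduranganRS17}). Recall that after step~(ii) the graph $G'$ has maximum degree $4$ and fragment IDs live in $[1,N]$ with $N = \poly(n)$; hence $O(\log^* N) = O(\log^* n)$ rounds on $G'$ suffice to produce a proper $5$-colouring respecting the priority order Blue $>$ Red $>$ Orange $>$ Black $>$ Green, plus one extra round so that every fragment also learns the colours of its (at most four) $G'$-neighbours.

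The crucial step is to argue that a single round of communication on the \emph{virtual} graph $G'$ --- in which each fragment computes a local function of its own state together with the $O(\log n)$-bit messages arriving along its valid MOEs, and replies with $O(\log n)$-bit messages --- can be simulated in $O(1)$ awake rounds and $O(n)$ running time while the FLDT is intact. This is exactly what the toolbox of Section~\ref{sec:prelims} provides: a fragment aggregates the relevant MOE data at its root via $\NEIGHBORAWARE(n)$, broadcasts the outcome with $\DOWNCAST(n)$, exchanges it with adjacent fragments with $\TRANSMITADJACENT(n)$, and disseminates the replies with another $\DOWNCAST(n)$, each of which is $O(1)$ awake and $O(n)$ run. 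Since all colouring happens \emph{before} any merging in the phase, every fragment is still a valid LDT, so these primitives apply; the only change from $\DETMST$ is that now every fragment stays awake for all $O(\log^* n)$ colouring rounds, which is precisely the source of the extra $O(\log^* n)$ factor. Thus the replacement colouring procedure costs $O(\log^* n)$ awake time and $O(n \log^* n)$ running time per phase.

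Everything else is untouched, so correctness is immediate: the new procedure still outputs a proper $5$-colouring of $G'$ with the same priority order, hence Lemma~\ref{lem:fraction-blue-frag-merge-in-phase} and the subsequent component-counting argument go through word for word, each phase still removes a constant fraction of the fragments, and the algorithm terminates in $\lceil \log_{\NUMFRAGSDET/(\NUMFRAGSDET-1)} n \rceil + \NUMFRAGSDET = O(\log n)$ phases with a single LDT equal to the MST (Lemma~\ref{lem:det-correctness guarantee}). For the complexity I would re-run the per-phase accounting of Lemma~\ref{lem:det-mst-complexity-guarantees}: steps~(i), (ii), and the merging/reorientation part of step~(iii) still cost $O(1)$ awake and $O(n)$ run, while the colouring now costs $O(\log^* n)$ awake and $O(n\log^* n)$ run, so each phase costs $O(\log^* n)$ awake and $O(n\log^* n)$ run; multiplying by $O(\log n)$ phases gives $O(\log n \log^* n)$ awake time and $O(n \log n \log^* n)$ running time.

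The hard part will be the simulation claim of the second paragraph --- confirming that one round of the bounded-degree colouring algorithm on $G'$ really costs only $O(1)$ awake rounds in the sleeping model, and that the $O(\log n)$-bit CONGEST constraint is respected (each $G'$-edge is a single graph edge carrying one $O(\log n)$-bit tuple, and the per-fragment aggregation concatenates only a constant number of such tuples, exactly as in $\NEIGHBORAWARE$). Everything needed for this is already in place for $\DETMST$, so the remaining details are routine bookkeeping once the substitution is made.
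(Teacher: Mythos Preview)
Your proposal is correct and follows essentially the same approach as the paper. The paper's proof of this theorem is in fact just a one-paragraph remark stating that $\AWAKECOLORING(n,N)$ can be replaced by a standard $O(\log^* n)$-round deterministic colouring of the bounded-degree fragment graph (citing~\cite{PanduranganRS17}), at the cost of an $O(\log^* n)$ multiplicative overhead in awake time; your write-up simply supplies the bookkeeping the paper leaves implicit, namely that each virtual round on $G'$ is simulated by a constant number of calls to $\NEIGHBORAWARE$, $\DOWNCAST$, and $\TRANSMITADJACENT$, and that correctness carries over via Lemma~\ref{lem:fraction-blue-frag-merge-in-phase} and Lemma~\ref{lem:det-correctness guarantee} unchanged.
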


%---------------------------------------------------------------------
\section{MST Algorithm with a Trade-off}\label{sec:mst-trade-offs}
% !TEX root = main.tex

In this section, we present an algorithm to create an MST that shows a trade-off between its running time and awake time. 
We use a few procedures listed in Section~\ref{sec:prelims} and additional procedures mentioned below. First we give a brief overview of the algorithm, then describe the additional procedures, before finally giving a detailed explanation of the algorithm along with analysis.

%---------------------------Start of Brief Overview--------------------
\textbf{Brief Overview.}
We now describe algorithm $\MSTTRADEOFF$, which finds the MST of the given graph with high probability in $\Tilde{O}(D + 2^k + n/2^k)$ running time and $\Tilde{O}(n/2^k)$ awake time, where integer $k$ is an input parameter to the algorithm. When $k \in [\max \lbrace \lceil 0.5\log n \rceil, \lceil \log D \rceil \rbrace, \lceil \log n \rceil]$, we obtain useful trade-offs. In essence, it is an awake time efficient version of the algorithm from Chapter 7 of~\cite{dnabook} (itself a version of the optimized version of the algorithm from~\cite{GKP98}) adapted to reduce the awake complexity. We describe it as a three stage algorithm. 

In stage one, we elect a leader node among all the nodes. We then construct a Breadth First Search (BFS) tree rooted at this leader, which will be used later on. In stage two, we switch gears and have all nodes perform the controlled version of the GHS algorithm for $k-1$ phases until $\leq n/2^k$ fragments are formed, each of size $\geq 2^k$ and diameter $\leq 5 \cdot 2^k$. In stage three, each node  now uses the BFS tree formed in stage one to send its MOE (inter-fragment MOE) for each of the at most $n/2^k$ fragments with corresponding node IDs and fragments IDs to the leader using pipelining. Using the red rule to prevent cycles, we ensure this does not take too long. The leader then performs a local computation to determine which $O(n/2^k)$ edges between the $O(n/2^k)$ fragments are a part of the MST and send messages about those edges down to the respective nodes.
%---------------------------End of Brief Overview--------------------

%---------------------------Start of Prelims--------------------
\textbf{Preliminaries.} We describe how to elect a leader and construct a BFS by relying on procedures from~\cite{DH22}. 

We make use of procedures $\BUILDMSC$ and $\SAF$, parameterized appropriately, in conjunction with procedures described below to elect a leader and construct the BFS in a manner that reduces the awake time. In order to run the procedures $\BUILDMSC$ and $\SAF$, all nodes need to know a constant factor upper bound on $n$.\footnote{In~\cite{DH22}, in addition to knowledge of the upper bound on $n$, they also assume that upper bounds on the diameter $D$ and maximum degree $\Delta$ are known. However, the algorithm can be modified to take the same running time and energy complexity while only requiring the knowledge of $n$~\cite{V23}.} Nodes also need unique IDs, an assumption already made in the current paper. Now, the $\BUILDMSC$ procedure builds a multi-level clustering over the graph and $\SAF$ uses these clusters to simulate a simple algorithm $\mathcal{A}$ that takes $T_{\mathcal{A}}$ running time in $\Tilde{O}(T_{\mathcal{A}})$ running time and $\Tilde{O}(\varepsilon + T_{\mathcal{A}}/R^{\ell})$ awake time, where $\varepsilon$ represents the energy cost of the ``greedy psychic'' version of the algorithm (essentially if each node only had to stay awake for the bare minimum number of rounds to transmit messages for the algorithm) and $R$ and $\ell$ are input parameters to the algorithm. 

The algorithms from~\cite{DH22} hold for the radio network model in the ``OR'' version of the model, defined as follows. In every round that a node is awake and listens, it receives an arbitrary message from one of its neighbors, assuming that some non-empty subset of its neighbors chose to send messages in that round. If all of its neighbors are asleep in that round, then it does not receive a message. It is easy to see that their algorithm can be run directly in the awake model, where in every round a node is awake, it receives all messages sent by its neighbors in that round. Furthermore, the upper bounds on running time and energy complexity in the OR model serve as upper bounds on the running time and awake complexity in the awake model, respectively.

We may consider two simple algorithms, $\SIMPLELE$ and $\SIMPLEBFS$, that work as follows. $\SIMPLELE$ has each node choose an rank uniformly at random in $[1,n^4]$ and transmits it to its neighbors. Subsequently, every time a node hears of a rank higher than the current highest rank it has seen, it updates this value, and transmits this new rank to its neighbors. Finally, after $O(D)$ rounds, the node with the highest rank broadcasts its ID to all nodes. The procedure has running time $O(D)$ when $D$ is known.\footnote{The $\BUILDMSC$ algorithm of~\cite{DH22} can be modified to learn a constant factor upper bound on $D$~\cite{V23}.} It can be proven that each node, when running this algorithm, needs to transmits a new rank $O(\log n)$ times with high probability, i.e., each node only sends and receives messages in $O(\log n)$ rounds. Thus, the $\varepsilon$ for this algorithm is $\Tilde{O}(1)$. $\SIMPLEBFS$ is parameterized by some root and has this root set its distance from the root to $0$ and transmits a message with its value to its neighbors. Each node, upon receiving such a message with distance $i$ from some non-empty subset of its neighbors, chooses one of those nodes as its parent and informs it via a message, sets its distance from root to $i+1$ and transmits this value to the neighbors it did not receive a message from. It is easy to see that the running time of the algorithm is $O(D)$ rounds when $D$ is known and each node needs to send and receive messages in $O(1)$ rounds. Thus, the $\varepsilon$ for this algorithm is $O(1)$. By choosing appropriate $R$ and $\ell$ values when calling these algorithms from $\SAF$, we can elect a leader and construct a BFS from it in $\Tilde{O}(1)$ awake time and $\Tilde{O}(D)$ running time (see Theorems~14,~15, and~16 in~\cite{DH22}).

Let us call the entire process of constructing the multi-level clustering and using it to elect a leader and construct a BFS tree from that leader algorithm~$\LEBFS$.

Thus, we get the following lemma.

\begin{lemma}\label{lem:le-bfs-fast-awake}
Algorithm~$\LEBFS$ elects a leader and grows a BFS tree from it with high probability in $\Tilde{O}(D)$ running time and $\Tilde{O}(1)$ awake time.
\end{lemma}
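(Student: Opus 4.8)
The plan is to obtain Algorithm~$\LEBFS$ by chaining the two building blocks $\BUILDMSC$ and $\SAF$ of~\cite{DH22} and tracking how their complexity guarantees compose. First I would run $\BUILDMSC$ --- parameterized so that only a constant-factor upper bound on $n$ is needed, and using the modification of~\cite{V23} that additionally recovers a constant-factor upper bound on the diameter $D$ --- to build the multi-level clustering; by the guarantees of~\cite{DH22} this costs $\tilde{O}(D)$ running time and $\tilde{O}(1)$ awake time. Then I would invoke $\SAF$ on this clustering to simulate the simple algorithm $\SIMPLELE$. Since $\SIMPLELE$ terminates in $T_{\A} = O(D)$ rounds and has ``greedy-psychic'' energy $\varepsilon = \tilde{O}(1)$ (each node transmits only strictly increasing ranks, which occurs $O(\log n)$ times with high probability), choosing the parameters $R$ and $\ell$ of $\SAF$ so that $R^{\ell} = \Omega(D)$ makes this simulation run in $\tilde{O}(T_{\A}) = \tilde{O}(D)$ running time and $\tilde{O}(\varepsilon + T_{\A}/R^{\ell}) = \tilde{O}(1)$ awake time, at the end of which every node knows the ID of the elected leader.

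Next I would reuse the same clustering and call $\SAF$ a second time to simulate $\SIMPLEBFS$ with root set to the just-elected leader; here $T_{\A} = O(D)$ and $\varepsilon = O(1)$ (each node sends and receives in $O(1)$ rounds), so with the same choice of $R$ and $\ell$ this stage again costs $\tilde{O}(D)$ running time and $\tilde{O}(1)$ awake time and yields a BFS tree rooted at the leader in which every node knows its parent and its depth. Summing the three stages gives $\tilde{O}(D)$ total running time and $\tilde{O}(1)$ total awake time. The high-probability guarantee follows from a union bound over (i) correctness of the clustering produced by $\BUILDMSC$, (ii) the event that each node updates its rank at most $O(\log n)$ times during $\SIMPLELE$ (so that the claimed $\varepsilon = \tilde{O}(1)$ holds), (iii) uniqueness of the maximum rank in $\SIMPLELE$, and (iv) the internal success probabilities of the two $\SAF$ simulations --- each of which is $1 - 1/\poly(n)$ --- together with the observation (already noted before the lemma) that the radio ``OR''-model bounds of~\cite{DH22} upper bound the corresponding quantities in the sleeping-\CONGEST\ model.

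The only point needing care, as opposed to genuine difficulty, is the parameter bookkeeping: $\SAF$ must be supplied with $R$ and $\ell$, and to drive the awake time down to $\tilde{O}(1)$ we need $R^{\ell}$ to dominate $T_{\A} = O(D)$, which is precisely why a constant-factor estimate of $D$ (obtained from the modified $\BUILDMSC$ of~\cite{V23}) is required; once that estimate is available, fixing $\ell = \Theta(1)$ and $R = \Theta(D^{1/\ell})$ (rounded up to an integer) suffices. The lone sequential dependency is that the BFS simulation cannot begin before leader election finishes, but since both run on the same precomputed clustering this only adds a constant factor. Beyond carefully propagating the $\polylog n$ factors hidden inside the $\tilde{O}(\cdot)$ notation and invoking Theorems~14,~15, and~16 in~\cite{DH22} with these parameters, I do not anticipate any obstacle.
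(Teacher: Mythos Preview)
Your proposal is correct and follows essentially the same approach as the paper: the paper's ``proof'' is simply the discussion preceding the lemma, which (like you) invokes $\BUILDMSC$ (with the modification of~\cite{V23} to obtain a diameter estimate) and then $\SAF$ to simulate $\SIMPLELE$ and $\SIMPLEBFS$, citing Theorems~14--16 of~\cite{DH22} for the $\tilde{O}(D)$ running time and $\tilde{O}(1)$ awake time. Your write-up is in fact more explicit than the paper's about the choice of $R,\ell$ and the union bound, but the underlying argument is identical.
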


%---------------------------End of Prelims--------------------

%---------------------------Start of Detailed Algorithm--------------------
\textbf{Detailed Algorithm.}
In stage one, we elect a leader and construct a BFS tree from this leader, which will be used later. We have nodes run $\LEBFS$. 

We now describe stage two, which corresponds to the controlled GHS stage of the algorithm. We explain how to modify $\RANDMST$, described in Section~\ref{sec:mst-optimal-awake-time}, into the required controlled GHS version. 
Throughout the algorithm in Section~\ref{sec:mst-optimal-awake-time}, we used procedures parameterized by $n$. Here, we use algorithms parameterized by $5 \cdot 2^i$, where $i$ corresponds to the current phase, resulting in procedures with reduced running times. 

Now, we describe additional changes needed within each phase of $\RANDMST$ in order to adapt it for stage two. Recall that in controlled GHS, there are two things to note. First, the fragments that actually participate in a given phase $i$ are limited to those with diameter $\leq 2^i$. Second, once these fragments find their MOEs, a maximal matching is found on the supergraph consisting of fragments as nodes and MOEs as edges between nodes. Only those fragments that are matched together may merge.

To implement the first change, modify step (i) of each phase of $\RANDMST$ as follows. An invariant that holds across phases is that each fragment at the beginning of phase $i$ does not have a diameter more than $5 \cdot 2^i$. Each node also knows its distance from the root. Have each node in the graph run a variant of $\UPCASTMIN(5 \cdot 2^i)$ where the \emph{maximum} distance from the root is propagated to the root. The root can then make a decision on whether to participate in the given phase. All nodes subsequently participate in $\DOWNCAST(5 \cdot 2^i)$ where the root of the fragment informs all nodes in that fragment whether they will participate in the current phase or not. 

To implement the second change, modify steps (ii) and (iii) of each phase of $\RANDMST$ as follows. Step (ii) of each phase of $\RANDMST$ is entirely replaced by the following. Consider the disjoint set of subgraphs formed by the union of fragments and their MOEs. Consider one such subgraph. Now, in order to perform a matching on the fragments, we first use the well known combination of Linial's graph coloring algorithm~\cite{L92} and the Kuhn-Wattenhofer color reduction technique~\cite{KW06}, 
which when applied to a rooted tree of $n$ nodes, gives a $3$ coloring on the tree in $O(\log^* n)$ rounds. This tree is the supergraph formed by taking each fragment in the subgraph as a node and considering the MOEs (taken in the opposite direction) as edges between the nodes. In the subgraph, there will always be two adjacent fragments with chosen MOEs leading to each other. We may choose the minimum ID fragment among these two as the root of the tree. In order to simulate one round of coloring on the supergraph, a combination of $\DOWNCAST(5 \cdot 2^i)$, $\TRANSMITADJACENT(5 \cdot 2^i)$, and $\UPCASTMIN(5 \cdot 2^i)$ can be used. From this $3$ coloring on the supergraph, we can obtain a maximal matching in $O(1)$ rounds as follows. Have each supernode with color $1$ inform its parent that they are now matched. Then have each unmatched supernode with color $2$ query its parent and ask to match with it. If the parent is unmatched, it replies as such and both supernodes are now matched. In case multiple children request the same parent for a match, the parent chooses the fragment with the minimum ID (this is easy to do as part of $\UPCASTMIN(5 \cdot 2^i)$ run by the nodes of the parent's fragment). Repeat the previous step with unmatched supernodes with color $3$. As before, each round on the supergraph can be implemented through a combination of a constant number of instances of $\DOWNCAST(5 \cdot 2^i)$, $\TRANSMITADJACENT(5 \cdot 2^i)$, and $\UPCASTMIN(5 \cdot 2^i)$. In order to merge the fragments, we may run step (iii) of each phase of $\RANDMST$ as is by considering matched children and parent fragments as corresponding Tails and Heads fragments, respectively. 

Finally, we come to stage three of the algorithm. We have a disjoint set of $\leq n/2^k$ fragments, each of diameter at most $5 \cdot 2^k$, for which we must find the remaining at most $n/2^k$ MOEs. We make use of the BFS tree formed over the original graph in stage one and run the pipelining algorithm from~\cite{dnabook}, slightly modified to optimize awake time. We describe the modified procedure here. 
Each node $v$ maintains two lists, $Q$ and $U$. Initially, $Q$ contains only the inter-fragment edges adjacent to $v$ and corresponding fragment IDs and $U$ is empty. In each round that $v$ is awake, $v$ sends the minimum-weight edge in $Q$ that does not create a cycle with the edges in $U$ to its parent along with corresponding fragment IDs and moves this edge from $Q$ to $U$. 
If $Q$ is empty, $v$ does not send a message. 
The parent after receiving an edge from a child, adds the edge to its $Q$ list. 
To ensure that each node is only awake for $O(n/2^k)$ rounds, we have nodes stay awake according to a schedule. All nodes are aware of the first round of stage three, call it round $1$. Each node at depth $d$ stays awake from round $D - d$ to round $D-d + n/2^k$: in round $D-d$, the node listens for messages from its children, and in subsequent rounds it transmits messages according to the procedure described. Notice that since we start from round $1$, nodes at depth $D$, if any, do not have a round where they listen to their children in the BFS tree, since none exist. 
Once all messages reach the root of the BFS tree by round $D + n/2^k$, the root can then calculate the MOEs for the fragments. Nodes then participate in a reverse of the schedule described above to send down the at most $n/2^k$ MOEs with corresponding fragment IDs to all the nodes, thus ensuring all nodes know their edges in the MST.
%---------------------------End of Detailed Algorithm--------------------

%---------------------------Start of Analysis--------------------
\textbf{Analysis.}
We now prove that the algorithm correctly outputs the MST of the original graph and the running time and awake time are as desired. We first argue about the correctness below.
\begin{lemma}\label{lem:msttradeoff-correctness}
Algorithm~$\MSTTRADEOFF$ correctly outputs an MST with high probability.
\end{lemma}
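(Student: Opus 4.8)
The plan is to decompose the correctness argument along the three stages of Algorithm~$\MSTTRADEOFF$ and to combine them by a single union bound over the (only) randomized ingredient. First I would condition on the good event of stage one: by Lemma~\ref{lem:le-bfs-fast-awake}, with high probability $\LEBFS$ elects a unique leader $\ell$ and every node learns its parent and its depth in a BFS tree of $G$ rooted at $\ell$. All subsequent reasoning is under this event; since stages two and three are deterministic (the coin flips of $\RANDMST$'s step~(ii) have been replaced by the deterministic Linial~\cite{L92} / Kuhn-Wattenhofer~\cite{KW06} coloring and the subsequent minimum-ID matching), this one event carries the whole ``with high probability''.

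Second, I would prove (as a short stand-alone invariant lemma, or inline) that stage two maintains the two GHS invariants on which the scheme rests: (i) every edge ever absorbed into a fragment is an MST edge, and (ii) after the controlled phases each fragment is a subtree of the (unique) MST that contains at least $2^k$ nodes and has hop-diameter at most $5\cdot 2^k$, so there are at most $n/2^k$ fragments. Invariant (i) is the blue rule: a fragment only ever absorbs its MOE, which is the minimum-weight edge crossing the cut separating that fragment from the rest of $V$, hence an MST edge~\cite{DistMst:Gallager}. Invariant (ii) is the standard controlled-GHS doubling/diameter bookkeeping of Chapter~7 of~\cite{dnabook} (and~\cite{GKP98}): phase $i$ only lets fragments of diameter at most $5\cdot 2^i$ participate, computes a maximal matching on the fragment graph, and merges matched pairs, so the minimum fragment size at least doubles per phase while the diameter stays within $5\cdot 2^i$; replacing the matching subroutine by Linial + Kuhn-Wattenhofer coloring followed by the three-round matching described in the text does not change this, since it still produces a maximal matching. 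Each node also retains its parent/children pointers inside its fragment, so it already knows which incident \emph{intra-fragment} edges are MST edges.

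Third, for stage three I would invoke the correctness of the pipelined MST computation of~\cite{GKP98} (as presented in Chapter~7 of~\cite{dnabook}), re-phrased for our fixed depth-indexed wake-up schedule. Contract each fragment to a supernode and let $\hat G$ be the resulting weighted multigraph on at most $n/2^k$ supernodes; the remaining task is to compute an MST of $\hat G$ and report it. Each node forwards up the BFS tree the lightest edge of its list $Q$ that closes no cycle with the edges already in $U$, and moves it to $U$; by induction on BFS depth the set $U$ at every node is always a forest over the supernodes, so at most $O(n/2^k)$ edges are ever forwarded through any node, and the red rule guarantees that every discarded edge closes a cycle of no-heavier forwarded edges and is therefore not in the MST of $\hat G$. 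Hence by round $D + O(n/2^k)$ the root holds a set of edges still containing an MST of $\hat G$; it computes that MST locally, and the reversed schedule broadcasts, $\DOWNCAST$-style, the selected inter-fragment edges to every node. Combining the intra-fragment MST edges of stage two with the selected inter-fragment edges of stage three gives exactly the unique MST of $G$, by the cut/cycle characterisation of the MST applied to the fragment partition, and every node now knows all of its incident MST edges.

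The main obstacle, and the part I would write out most carefully, is the stage-three claim that the \emph{fixed} schedule---awake from round $D-d$ to round $D-d+n/2^k$ at depth $d$---is simultaneously correct and long enough: I would need the invariant that at each node at most one non-cycle-closing edge leaves per awake round and that at most $O(n/2^k)$ such edges ever pass through it, so that, together with the length-$D$ pipelining delay, every edge of the MST of $\hat G$ does reach the root inside its window (and symmetrically for the downward broadcast), so that no MST edge is silently dropped. The remaining pieces---the blue-rule argument for stage two, the local MST computation at the root, and the final cut/cycle argument---are routine. A single union bound over the $1/\poly(n)$ failure probability of $\LEBFS$ then yields that $\MSTTRADEOFF$ outputs the MST with high probability.
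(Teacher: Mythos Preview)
Your proposal is correct and follows essentially the same three-stage decomposition as the paper's own proof sketch: invoke Lemma~\ref{lem:le-bfs-fast-awake} for stage one, appeal to the controlled-GHS invariants of~\cite{dnabook,GKP98} for stage two, and use the red-rule pipelining argument for stage three. You supply considerably more detail than the paper's sketch---in particular, your explicit observation that stages two and three are deterministic (so a single union bound over the $\LEBFS$ failure suffices) and your identification of the fixed depth-indexed wake-up window as the point requiring care---but the underlying route is the same.
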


\begin{proof}[Proof Sketch]
In stage one, from Lemma~\ref{lem:le-bfs-fast-awake}, we see that a leader is elected and a BFS tree is grown from it with high probability.

In stage two, since we know the bound on the maximum diameter of any fragment in each phase, the modifications to the GHS with appropriately parameterized instances of $\UPCASTMIN$, $\DOWNCAST$, $\TRANSMITNEIGHBOR$, and $\TRANSMITADJACENT$ guarantees that at most $n/2^k$ fragments, each of diameter at most $5 \cdot 2^k$, are formed at the end of the stage (this is because it correctly simulates the controlled GHS algorithm as described in~\cite{dnabook}).

In stage three, we can see that by collecting MOEs, starting from the nodes furthest away from the root in the BFS, and applying the red rule the way we do, we ensure that at least the set of edges belonging to the MST over the supergraph consisting of fragments and their MOEs are sent to the root of the BFS. The root can then calculate the exact edges, which are at most $O(n/2^k)$ edges, and send them back down to all nodes in the tree.
\end{proof}

We now argue about the awake times and running times of nodes in each stage.

From Lemma~\ref{lem:le-bfs-fast-awake}, we see that stage one of $\MSTTRADEOFF$ takes $\Tilde{O}(D)$ running time and $\Tilde{O}(1)$ awake time.

\begin{lemma}\label{lem:msttradeoff-stage-two-time}
Stage two of $\MSTTRADEOFF$ takes $\Tilde{O}(2^k)$ running time and $\Tilde{O}(1)$ awake time.
\end{lemma}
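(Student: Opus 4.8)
The plan is to account, phase by phase, for the running time and awake time incurred when running the controlled GHS variant described above, using the observations from Section~\ref{sec:prelims} and the awake-efficient procedure analyses. First I would fix the structure: stage two runs for $k-1$ phases (phases $i = 1, \dots, k-1$), and by the invariant stated in the detailed algorithm, every fragment participating in phase $i$ has diameter at most $5 \cdot 2^i$, so all procedures in phase $i$ are invoked with parameter $5 \cdot 2^i$ rather than $n$. By Observations~\ref{obs:downcast},~\ref{obs:upcastmin},~\ref{obs:trasnmit-neighbor}, and~\ref{obs:transmit-adjacent}, each instance of $\DOWNCAST(5 \cdot 2^i)$, $\UPCASTMIN(5 \cdot 2^i)$, $\TRANSMITNEIGHBOR(5 \cdot 2^i)$, and $\TRANSMITADJACENT(5 \cdot 2^i)$ takes $O(2^i)$ running time and $O(1)$ awake time; likewise $\TS(\cdot,\cdot,5 \cdot 2^i)$ and $\REORIENTFRAG(5 \cdot 2^i)$ take $O(2^i)$ running time and $O(1)$ awake time, as established in Section~\ref{sec:mst-optimal-awake-time}.

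Next I would bound the cost of a single phase $i$. Step~(i) uses a constant number of the above procedures (the diameter-check variant of $\UPCASTMIN$, a $\DOWNCAST$, plus the MOE-finding procedures inherited from $\RANDMST$), contributing $O(2^i)$ running time and $O(1)$ awake time. Step~(ii) is where the coloring-plus-matching on the fragment supergraph lives: the combination of Linial's coloring and Kuhn--Wattenhofer reduction produces a $3$-coloring of the supergraph tree in $O(\log^* n)$ supergraph-rounds, and each supergraph-round is simulated by $O(1)$ instances of $\DOWNCAST(5 \cdot 2^i)$, $\TRANSMITADJACENT(5 \cdot 2^i)$, and $\UPCASTMIN(5 \cdot 2^i)$; converting the coloring to a maximal matching is another $O(1)$ supergraph-rounds. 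So step~(ii) costs $O(2^i \log^* n)$ running time and $O(\log^* n)$ awake time. Step~(iii) runs $\REORIENTFRAG(5 \cdot 2^i)$ a constant number of times on the matched fragments, costing $O(2^i)$ running time and $O(1)$ awake time. Hence phase $i$ costs $O(2^i \log^* n)$ running time and $O(\log^* n)$ awake time.

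Finally I would sum over phases. The running time of stage two is $\sum_{i=1}^{k-1} O(2^i \log^* n) = O(2^k \log^* n) = \Tilde{O}(2^k)$, since the geometric sum is dominated by its last term. The awake time is $\sum_{i=1}^{k-1} O(\log^* n) = O(k \log^* n) = O(\log n \log^* n) = \Tilde{O}(1)$, using $k \le \lceil \log n \rceil$. This gives the claimed $\Tilde{O}(2^k)$ running time and $\Tilde{O}(1)$ awake time. I expect the main obstacle to be bookkeeping the simulation overhead of the coloring step cleanly — in particular making precise that one ``round'' of Linial/Kuhn--Wattenhofer on the fragment supergraph really is implementable with $O(1)$ awake rounds per participating node (each fragment must broadcast its tentative color within the fragment, exchange with neighboring fragments, and convergecast the minimum-ID tie-break, all of which the LDT structure supports in $O(1)$ awake time), and that the $\log^* n$ stages do not accumulate into a super-constant awake cost because a node is awake in all of them within a single phase. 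The rest is a routine geometric-sum argument.
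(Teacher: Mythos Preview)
Your proposal is correct and follows essentially the same approach as the paper's own proof: both account for $O(\log^* n)$ instances of the LDT primitives (parameterized by $5\cdot 2^i$) per phase, invoke Observations~\ref{obs:downcast}--\ref{obs:transmit-adjacent} for the $O(2^i)$ round and $O(1)$ awake cost per instance, and then sum over the $k-1 = O(\log n)$ phases. The only cosmetic difference is that you sum the running time as a geometric series $\sum_i O(2^i\log^* n)=O(2^k\log^* n)$, whereas the paper simply upper bounds each $2^i$ by $2^k$ and multiplies by the number of phases; both land in $\tilde O(2^k)$.
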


\begin{proof}
Stage two consists of $k = O(\log n)$ phases, where each phase $i$ consists of $O(\log^* n)$ instances of $\UPCASTMIN$, $\DOWNCAST$, $\TRANSMITNEIGHBOR$, and $\TRANSMITADJACENT$ parameterized by $5 \cdot 2^i$, which is upper bounded by $O(2^k)$. From, Observations~\ref{obs:downcast},~\ref{obs:upcastmin},~\ref{obs:trasnmit-neighbor}, and~\ref{obs:transmit-adjacent}, we get the desired bounds.
\end{proof}

\begin{lemma}\label{lem:msttradeoff-stage-three-time}
Stage three of $\MSTTRADEOFF$ takes $O(D + n/2^k)$ running time and $O(n/2^k)$ awake time.
\end{lemma}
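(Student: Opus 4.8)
The plan is to prove the two bounds separately --- both are essentially bookkeeping on the explicit wake-up schedule --- and to isolate as the one substantive step the claim that this schedule keeps every node awake long enough to carry out its part of the pipelined upcast and of the symmetric downcast.

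First I would bound the \emph{awake time}. By construction a node at BFS-depth $d$ is awake only in rounds $D-d, D-d+1, \dots, D-d+n/2^k$ of the upcast, hence for $n/2^k+1$ rounds, and for a further $n/2^k+1$ rounds during the reversed schedule used to push the root's answer back down. Since $k \le \lceil \log n\rceil$ we have $n/2^k = \Omega(1)$, so $n/2^k+1 = O(n/2^k)$ and every node is awake for only $O(n/2^k)$ rounds in stage three; the root's local computation of the $O(n/2^k)$ inter-fragment MST edges is instantaneous. Next I would bound the \emph{running time}: the root (depth $0$) finishes its awake window at round $D+n/2^k$, so the upcast occupies rounds $1, \dots, D+n/2^k$, and the reversed downcast occupies another $D+n/2^k$ rounds, giving $O(D+n/2^k)$ overall. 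What remains is to check that $D+n/2^k$ rounds of upcast actually suffice, which is exactly the pipelining bound of the Garay--Kutten--Peleg style MST construction (see~\cite{GKP98} and Chapter~7 of~\cite{dnabook}): each node forwards its candidate inter-fragment edges toward the root in increasing weight order, discarding any edge that would close a cycle with edges it has already forwarded, so the edges a node ever forwards form a forest on the at most $n/2^k$ fragments and number at most $n/2^k-1$; the standard wave argument then gives that all edges of the MST of the fragment supergraph reach the root within $D+n/2^k$ rounds. After the root selects the $O(n/2^k)$ MST edges, a pipelined broadcast of these edges (each encodable in $O(\log n)$ bits) down the reversed schedule lets every node decide which of its incident edges are MST edges --- the intra-fragment ones having already been fixed in stage two --- which is the content sketched for Lemma~\ref{lem:msttradeoff-correctness}.

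The hard part will be verifying that the restricted schedule never puts a node to sleep while it still holds, or is about to receive, a useful message. I would establish this by showing the awake windows nest: a depth-$(d+1)$ child transmits in rounds $D-d, \dots, D-d+n/2^k-1$, which lies inside its depth-$d$ parent's window $D-d, \dots, D-d+n/2^k$, so every child-to-parent message is delivered; and since at most $n/2^k-1$ useful (cycle-free) edges ever pass through a given node, all of that node's forwarding is completed within its $n/2^k$ transmit rounds once the pipelining wave reaches it (at round $\approx D-d$). Hence no useful edge is ever sent to a sleeping node or left unsent, the upcast and the downcast both finish on schedule, and the claimed $O(D+n/2^k)$ running time and $O(n/2^k)$ awake time follow.
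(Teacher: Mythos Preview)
Your proposal is correct and follows the same approach as the paper's proof, which is a two-sentence sketch asserting that pipelining takes $O(D+n/2^k)$ running time and $O(n/2^k)$ awake time in each direction. You simply fill in the details the paper leaves implicit --- the window-nesting argument and the cycle-rule bound of $n/2^k-1$ forwarded edges per node --- so the two proofs are essentially identical in structure, yours just being the unabridged version.
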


\begin{proof}
Pipelining takes $O(D + n/2^k)$ running time and $O(n/2^k)$ awake time. Once the root of the BFS tree calculates the MOEs, it takes an additional $O(D + n/2^k)$ running time and $O(n/2^k)$ awake time to transmit these values down to all nodes.
\end{proof}

Putting together Lemmas~\ref{lem:msttradeoff-correctness},~\ref{lem:le-bfs-fast-awake},~\ref{lem:msttradeoff-stage-two-time}, and~\ref{lem:msttradeoff-stage-three-time}, we have the following theorem.

\begin{theorem}\label{the:msttradeoff-main-theorem}
Algorithm~$\MSTTRADEOFF$ is a randomized algorithm to find the MST of a graph with high probability and takes $\Tilde{O}(D + 2^k + n/2^k)$ running time and $\Tilde{O}(n/2^k)$ awake time, where $k$ is an input to the algorithm.
\end{theorem}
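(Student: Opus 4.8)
The plan is to treat the theorem as a bookkeeping assembly of the three stage analyses already established, since Algorithm~$\MSTTRADEOFF$ is structured as three sequential stages that each node executes in lockstep. First I would dispatch correctness by invoking Lemma~\ref{lem:msttradeoff-correctness}: stage one produces a leader and a BFS tree rooted at it (w.h.p.), stage two faithfully simulates the controlled GHS algorithm of~\cite{dnabook} so that it terminates with at most $n/2^k$ fragments each of diameter at most $5\cdot 2^k$, and stage three uses the red rule together with pipelining up the BFS tree so the root learns a superset of the remaining MST edges, computes the $O(n/2^k)$ inter-fragment MST edges locally, and broadcasts them back down. Since every added edge was at some point the MOE of a fragment, the union is exactly the MST, and every node ends up knowing which of its incident edges lie in it.

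Next I would bound the complexities additively over the three stages, using that the stages run one after another and that all nodes remain synchronized (time partitioned into the relevant round blocks, all nodes knowing the global round number, as arranged throughout Section~\ref{sec:algorithms}). Stage one contributes $\tilde{O}(D)$ running time and $\tilde{O}(1)$ awake time by Lemma~\ref{lem:le-bfs-fast-awake}. Stage two contributes $\tilde{O}(2^k)$ running time and $\tilde{O}(1)$ awake time by Lemma~\ref{lem:msttradeoff-stage-two-time} (here I would note that the per-phase procedures are parameterized by $5\cdot 2^i\le O(2^k)$ and there are $O(\log n)$ phases each with $O(\log^\ast n)$ invocations of the constant-awake procedures, so the $\polylog n$ overhead is absorbed by the $\tilde{O}$). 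Stage three contributes $O(D + n/2^k)$ running time and $O(n/2^k)$ awake time by Lemma~\ref{lem:msttradeoff-stage-three-time}, where the wake-up schedule (a node at depth $d$ awake from round $D-d$ to round $D-d+n/2^k$) is what caps the awake time at $O(n/2^k)$ while the pipeline length $D + n/2^k$ sets the running time. Summing the three, the running time is $\tilde{O}(D) + \tilde{O}(2^k) + O(D+n/2^k) = \tilde{O}(D + 2^k + n/2^k)$ and the awake time is $\tilde{O}(1)+\tilde{O}(1)+O(n/2^k) = \tilde{O}(n/2^k)$, which is the claimed bound. The high-probability qualifier propagates because the only randomized ingredients are the leader election inside $\LEBFS$ (Lemma~\ref{lem:le-bfs-fast-awake}) and the coin flips inherited from $\RANDMST$'s fragment-reduction argument, each succeeding w.h.p.; a union bound over the $O(\log n)$ phases keeps the overall failure probability polynomially small.

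The step I expect to carry the real weight is not in this final assembly at all but in the stage-two and stage-three lemmas it cites: stage two requires checking that replacing $n$ by $5\cdot 2^i$ in the transmission-schedule procedures still preserves correctness given the diameter invariant, and that Linial's coloring plus Kuhn--Wattenhofer reduction on the fragment supergraph can be simulated via $\DOWNCAST$/$\TRANSMITADJACENT$/$\UPCASTMIN$ within the awake budget; stage three requires verifying that the depth-indexed wake-up schedule is consistent (a node's listening round precedes all of its transmitting rounds, and parents are awake to receive) so that no pipelined message is ever lost. For the theorem statement itself, however, the only subtlety is making sure the asymptotic notation genuinely absorbs all the $\polylog n$ and $\log^\ast n$ factors accumulated across the $O(\log n)$ phases, which it does.
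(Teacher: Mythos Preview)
Your proposal is correct and matches the paper's own proof, which is simply the one-line assembly ``Putting together Lemmas~\ref{lem:msttradeoff-correctness},~\ref{lem:le-bfs-fast-awake},~\ref{lem:msttradeoff-stage-two-time}, and~\ref{lem:msttradeoff-stage-three-time}.'' One small inaccuracy: you attribute part of the randomness to ``coin flips inherited from $\RANDMST$'s fragment-reduction argument,'' but in $\MSTTRADEOFF$ stage two entirely replaces that coin-flip step with the deterministic Linial/Kuhn--Wattenhofer matching you yourself describe, so the only randomized component is $\LEBFS$ in stage one; this does not affect the assembly, but your union-bound remark over the $O(\log n)$ phases is unnecessary.
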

%---------------------------End of Analysis--------------------

%---------------------------------------------------------------------
\section{Conclusion}\label{sec:conclusion}
We presented distributed algorithms for the fundamental MST problem that are optimal with respect
to awake complexity. We also showed that   there is an inherent trade-off bottleneck between awake and round complexities of MST. In other words, one cannot attain
optimal complexities simultaneously under both measures. 
We also presented an algorithm that shows a trade-off between awake complexity and round complexity, complementing our
trade-off lower bound. 
Interesting lines of future work including designing awake-efficient algorithms for other fundamental global problems such as shortest path and minimum cut.

~\\
\noindent \textbf{Acknowledgments:} We thank the anonymous reviewer for the useful idea that helped to reduce the run time of the deterministic awake-optimal algorithm, in particular, the coloring procedure. We thank Fabien Dufoulon for helpful discussions.

\bibliographystyle{plainurl}
\bibliography{references,reference,biblio}

\end{document}